\newtheorem{thm}{Theorem}
\newtheorem{lem}{Lemma}
\newtheorem{rmk}{Remark}
\newtheorem{cor}{Corollary}
\newtheorem{prop}{Proposition}
\newtheorem{claim}{Claim}
\newtheorem{assump}{Assumption}
\newtheorem{cond}{Condition}
\newcommand{\bbR}{\mathbb{R}}
\newcommand{\bbE}{\mathbb{E}}
\newcommand{\bbP}{\mathbb{P}}
\newcommand{\be}{\boldsymbol{e}}
\newcommand{\bu}{\boldsymbol{u}}
\newcommand{\bv}{\boldsymbol{v}}
\newcommand{\bw}{\boldsymbol{w}}
\newcommand{\bz}{\boldsymbol{z}}
\newcommand{\bU}{\boldsymbol{U}}
\newcommand{\bV}{\boldsymbol{V}}
\newcommand{\bW}{\boldsymbol{W}}
\newcommand{\bX}{\boldsymbol{X}}
\newcommand{\bE}{\boldsymbol{E}}
\newcommand{\bD}{\boldsymbol{D}}
\newcommand{\cX}{\mathcal{X}}
\newcommand{\cY}{\mathcal{Y}}
\newcommand{\cE}{\mathcal{E}}
\newcommand{\cO}{\mathcal{O}}
\newcommand{\op}{\mathrm{op}}
\newcommand{\RNum}[1]{\uppercase\expandafter{\romannumeral #1\relax}}
\newcommand{\vertiii}[1]{{\left\vert\kern-0.25ex\left\vert\kern-0.25ex\left\vert #1 
		\right\vert\kern-0.25ex\right\vert\kern-0.25ex\right\vert}}
\begin{document}
\title{Joint Semi-Symmetric Tensor PCA for Integrating Multi-modal Populations of Networks}
\author[1]{Jiaming Liu$^\dagger$}
\author[2]{Lili Zheng$^\dagger$\thanks{Corresponding author: lili.zheng@rice.edu}}
\author[3]{Zhengwu Zhang}
\author[1,2,4]{Genevera I. Allen}

\affil[1]{Department of Statistics, Rice University}
\affil[2]{Department of Electrical and Computer Engineering, Rice University}
\affil[3]{Department of Statistics and Operations Research, University of North Carolina at Chapel Hill}
\affil[4]{Neurological Research Institue, Baylor College of Medicine}

\date{}
\maketitle
\begin{abstract}
    Multi-modal populations of networks arise in many scenarios including in large-scale multi-modal neuroimaging studies that capture both functional and structural neuroimaging data for thousands of subjects. A major research question in such studies is how functional and structural brain connectivity are related and how they vary across the population. we develop a novel PCA-type framework for integrating multi-modal undirected networks measured on many subjects. Specifically, we arrange these networks as semi-symmetric tensors, where each tensor slice is a symmetric matrix representing a network from an individual subject. We then propose a novel Joint, Integrative Semi-Symmetric Tensor PCA (JisstPCA) model, associated with an efficient iterative algorithm, for jointly finding low-rank representations of two or more networks across the same population of subjects. We establish one-step statistical convergence of our separate low-rank network factors as well as the shared population factors to the true factors, with finite sample statistical error bounds. Through simulation studies and a real data example for integrating multi-subject functional and structural brain connectivity, we illustrate the advantages of our method for finding joint low-rank structures in multi-modal populations of networks.  
\end{abstract}
\noindent%
{\it Keywords:}  Tensor PCA, multi-modal network analysis, data integration, semi-symmetric tensor, brain connectivity, joint factorization
\footnotetext{$\dagger$: Equal contribution.}

\section{Introduction}\label{intro}

Large-scale network data arises frequently from a wide range of biomedical and social science applications, such as brain connectomes \citep{yao2015review,rubinov2010complex,bullmore2009complex},
social networks \citep{hoff2002latent}, and gene co-expression networks \citep{li2011integrative}. Many of these applications also have different types of networks measured on the same set of nodes from the same subject, called multi-modal networks. For example in neuroimaging, we may have functional connectivity derived from functional magnetic resonance imaging (fMRI) data and structural connectivity derived from diffusion MRI (dMRI) data measured for the same subject \citep{yao2015review,cole2021surface}. We can have a large population of these multi-modal brain networks; the Human Connectome Project is an example of one such multi-modal population study \citep{van2013wu}. Although there are many developed techniques for analyzing multi-modal or multi-view networks \citep{han2015consistent,d2019latent,gao2022testing} and separately populations of networks \citep{paul2020random,macdonald2022latent}, 
there is limited work that can analyze both aspects simultaneously. Some existing works, e.g. \cite{murden2022interpretive}, vectorize networks into vectors and ignore intrinsic structures in networks. This raises the question: can we jointly analyze and perform dimension reduction for multi-modal populations of networks? Such a joint analysis has many benefits, as it allows researchers to extract shared structures and relationships between different network modalities, reflect commonalities and variations amongst subjects, detect outliers, and identify clustering patterns within the population. In the example of multi-modal population neuroimaging studies, this approach can discover connectivity patterns shared between function and structure across many subjects, paving the way to demystifying how the brain works. Moreover, such analysis can highlight how these connections vary across different populations and how they relate to demographic or genomic traits.  Motivated by these applications, we propose to structure multi-modal populations of networks as tensors and develop a novel dimension reduction approach: Joint-Integrative Semi-Symmetric Tensor PCA (JisstPCA).

\subsection{Joint, Integrative Semi-Symmetric Tensor PCA}

Tensors are natural tools for modeling a collection of networks \citep{wu2019tensor,jing2021community,zhang2020flexible}, 
as one can stack the matrix representation of networks along an extra mode. We specifically take inspiration from recent works \citep{weylandt2022multivariate,zhang2019tensor} on semi-symmetric tensor modeling of populations of networks, where each slice of the tensor is a symmetric positive semi-definite matrix representing a network, e.g. an adjacency or Laplacian matrix for undirected networks. This modeling framework and its associated tensor PCA algorithms \citep{weylandt2022multivariate,zhang2019tensor} can help extract principal network factors across the whole population and achieve simultaneous dimension reduction for both the networks and the population. For our goal of analyzing multi-modal populations of networks, we consider the natural idea of integrating multiple semi-symmetric tensors.

For simplicity, we focus on the case with two modalities of networks, while it is straightforward to extend our framework and algorithms to more general cases. We can arrange the two modalities of networks into tensors $\cX\in \bbR^{p\times p\times N}$ and $\cY\in \bbR^{q\times q\times N}$, where $p,\,q$ are the network dimensions, and $N$ is the sample size. We allow the network sizes to be different ($p\neq q$) to account for possibly varying resolutions across modalities. In order to capture the shared low-rank structures from the two tensors, we consider the following joint, integrative semi-symmetric tensor PCA (JisstPCA) model:
\begin{equation}\label{eq:sst_d}
\mathcal{X} = \sum\limits_{k = 1}^{K} d^{*}_{x,k} \cdot \boldsymbol{V}_{k}^* \boldsymbol{V}_{k}^{*\prime} \circ \boldsymbol{u}_{k}^* + \mathcal{E}_{x},\quad\mathcal{Y} = \sum\limits_{k = 1}^{K} d^{*}_{y,k} \cdot \boldsymbol{W}_{k}^* \boldsymbol{W}_{k}^{*\prime} \circ \boldsymbol{u}_{k}^* + \mathcal{E}_{y}.
\end{equation}
Here, both tensors are decomposed into $K$ network factors plus observational noise: $\cE_x\in \bbR^{p\times p\times N}$, $\cE_y\in\bbR^{q\times q\times N}$ are random zero-mean semi-symmetric noise tensors. Note that we do not assume $\cX$ and $\cY$ to be binary so that our approach generalizes to weighted networks.  The orthogonal matrices $\boldsymbol{V}_{k}^* \in \mathcal{O}_{p, r_{x,k}},\, \boldsymbol{W}^*_{k} \in \mathcal{O}_{q, r_{y,k}}$ are the $k$th principal network factors for the two modalities respectively, and represent the major network patterns.  The unit vector $\bu_k^*\in \mathbb{S}^{N - 1}$ is the joint population factor, indicating the weight of the $k$th network factor for each subject; this is similar to the sample loading in classical PCA and represents population-level patterns affiliated with each network pattern.  One can view $(\bV_k^*\bV_k^{*\top}, \bW_k^*\bW_k^{*\top})$ as a pair of network prototypes for the two modalities, and for each subject $i\in [N]$, the corresponding networks $\cX_{:,:,i}$, $\cY_{:,;,i}$ are noisy realizations of a mixture of the $K$ prototypes with weights $(\bu_{1,i}^*,\dots,\bu_{K,i}^*)$. To ensure identifiability, we assume the factors across different layers are linearly independent\footnote{$\mathrm{rank}\left(\boldsymbol{V}_{i}^*, \boldsymbol{V}_{j}^* \right) = r_{x,i} + r_{x,j}, \mathrm{rank}\left(\boldsymbol{W}^*_{i}, \boldsymbol{W}^*_{j}\right) = r_{y,i} + r_{y,j}$ and $\mathrm{rank}\left(\boldsymbol{u}_{i}^*, \boldsymbol{u}_{j}^*\right) = 2$ for $i \neq j = 1, \cdots, K$. }. Our goal is to {\em find the ground truth factors $\bV_k^*,\,\bW_k^*,\,\bu_k^*$ for $k\in[K]$ from noisy network data $\cX$ and $\cY$}. Our main contributions and the paper organization are summarized as follows. 

\paragraph{Main contributions:} We propose the first framework for simultaneously analyzing multi-modal populations of networks: the joint, integrative semi-symmetric tensor PCA (JisstPCA) model \eqref{eq:sst_d}, with associated algorithms for extracting the population and network factors. Our JisstPCA algorithm, introduced in Section \ref{sec:JisstPCA}, consists of efficient joint power iteration and sequential deflation schemes. We additionally propose and study several important extensions of our basic JisstPCA model \eqref{eq:sst_d}, including integration of networks and vector-valued covariates, and a more general model with multiple eigenvalues associated with each factor. In Section \ref{sec:theory}, we prove the first statistical theory for integrative tensor PCA; under the single-factor case, we establish one-step convergence for both individual and joint factors to a small neighborhood of the true factors with corresponding statistical errors under provable initialization conditions. Our theoretical guarantees improve upon or are comparable to prior works for single tensor PCA \citep{weylandt2022multivariate,zhang2018tensor}. Furthermore, we validate our algorithms via empirical studies in Section \ref{sec:simulation} and demonstrate its superior performance compared to baseline methods. The real data study in Section \ref{real_data} showcases how JisstPCA can derive insightful connections between functional and structural networks in human brain and detect outliers in the population.

\subsection{Related Works}\label{sec:relatedwork}

Here, we discuss prior works that most closely relate to ours; a more detailed literature review is included in the Appendix. There has been an extensive literature on tensor decompositions and tensor PCA \citep{kolda2009tensor}, covering various tensor low-rank structures including the CP decomposition \citep{carroll1970analysis,anandkumar2014guaranteed,han2022tensor}, Tucker decomposition \citep{de2000best,zhang2018tensor,luo2021sharp}, and tensor-train low-rank structures \citep{zhou2022optimal}; many of these recent works have also established statistical consistency results. Our JisstPCA framework \eqref{eq:sst_d}, however, is built upon the prior literature on semi-symmetric tensor PCA \citep{weylandt2022multivariate,zhang2019tensor,winter2020multi} for analyzing populations of networks. 
Amongst these, our basic model \eqref{eq:sst_d} is most similar to, and reduces to the semi-symmetric tensor PCA model in \citep{weylandt2022multivariate} when focusing on one modality of networks. 
Different from our work and \cite{weylandt2022multivariate}, however, \cite{zhang2019tensor,winter2020multi,wang2014canonical} utilize the CP decomposition, a sum of rank-1 tensors, to model populations of networks. Although the CP decomposition is widely studied and there also exists joint factorization methods for integrating multiple CP low-rank tensors \citep{acar2011all, acar2014flexible, wu2018ctf, schenker2020flexible, lu2020exploring, farias2016exploring, fu2015joint}, it is not the most appropriate model for our analysis of multi-modal populations of networks. First, rank-one factors cannot sufficiently capture complex network patterns, hence limiting the expressivity of the model.  Second, standard CP decompositions cannot enforce symmetry in the factors.  Finally, although one can write our model as a special CP decomposition with the same $\bu_k^*$ factors repeated $r_k$ times, existing joint CP decomposition algorithms are inapplicable of capturing this model as they often require an angle /incoherence condition between any pair of the factors in all modes \citep{anandkumar2014guaranteed}.

Our model \eqref{eq:sst_d} is also closely related to the Tucker low-rank model. When constraining all factors $\bV_1,\dots,\bV_K$ and $\bW_1,\dots,\bW_K$ in \eqref{eq:sst_d} to be mutually orthogonal, our model is a special case of the Tucker model with shared factors on the third mode across modalities. However, the mutual orthogonality constraints in Tucker models can be especially restrictive for network factors, again limiting the expressivity of the model. The Tucker core also complicates the one-to-one correspondence between the sample factor $\bu_k$ and the network factors $\bV_k$ and $\bW_k$, which sacrifices interpretability. 

Further, our work builds upon the extensive data integration literature. Most existing data integration methods focus on tabular data that can be arranged as matrices, including the JIVE \citep{lock2013joint}, the iPCA \citep{tang2021integrated}, the multi-block PCA family \citep{abdi2013multiple,westerhuis1998analysis}, and many others. 
Some other works \citep{acar2011all, acar2014flexible, wu2018ctf, schenker2020flexible} consider joint factorizations for tensors and tensor-matrix integration, but are based on the CP decomposition and are less appropriate for joint network analysis as discussed earlier. Finally, to the best of our knowledge, our work also provides the first theoretical guarantees for integrative tensor PCA. 
\section{JisstPCA Algorithms}\label{sec:JisstPCA}
In this section, we introduce our JisstPCA algorithm for finding the tensor factors in \eqref{eq:sst_d} and some extensions of it. We begin with the introduction of some notations.

\subsection{Notation and Preliminary Tensor Algebra}\label{sec:notation}

We first introduce notation that will be used frequently in this paper; we closely follow \citet{kolda2009tensor}'s notation for tensor algebra and refer the reader here for further details.  
We denote tensors as $\mathcal{X}$, matrices as $\mathbf{X}$, vectors as $\mathbf{x}$, and scalars as $x$. Matricization of $\mathcal{X}$ along the $k^{th}$ mode is denoted as $\mathcal{M}_{k}(\mathcal{X})$; multiplication of $\mathcal{X}$ with a matrix along the first mode is denoted as $\times_1$ with $\times_2$ and $\times_3$ defined similarly; $\langle \mathcal{X}, \mathcal{Y} \rangle$ denotes the inner (trace) product; $\circ$ denotes the outer product; for $\cX\in \bbR^{p_1\times \cdots\times p_d}$, $\|\mathcal{X}\|_{F}$ is the tensor Frobenius norm and $\|\cX\|_{\op} = \sup_{u_1\in \mathbb{S}^{p_1-1}\,\dots,u_d\in \mathbb{S}^{p_d-1}}\cX\times_1u_1\times_2\cdots\times_du_d$ is the tensor operator norm. We similarly use $\|\mathbf{X}\|_{\op} = \sup_{u:\|u\|_2=1}\|\mathbf{X}u\|_2$ to denote the matrix operator norm. 
For a tensor $\mathcal{X} \in \mathbb{R}^{p \times p \times N}$, we say it is semi-symmetric if its $N$ slices are all $p \times p$ symmetric matrices; the trace product of $\mathcal{X}$ and a matrix $\boldsymbol{V} \in \mathbb{R}^{p \times r}$ is denoted by $[\mathcal{X}; \boldsymbol{V}]\in \bbR^{N}$, whose $k^{th}$ element is $\langle \mathcal{X}_{:,:,k}, \boldsymbol{V}\boldsymbol{V}^{\prime} \rangle = \text{Tr}(\boldsymbol{V}^{\prime} \mathcal{X}_{:,:,k} \boldsymbol{V})$. Similarly, for a diagonal matrix $\bD\in \bbR^{r\times r}$, we let $[\mathcal{X}; \boldsymbol{V}, \mathbf{D}]\in \bbR^{N}$ denote a vector whose $k^{th}$ element is $\langle \mathcal{X}_{\cdot \cdot k}, \boldsymbol{V}\bD\boldsymbol{V}^{\prime} \rangle$.  Additionally, we let $\sin\theta(\mathbf{u}_*,\mathbf{u}) = \sin(\arccos \mathbf{u}_*'\mathbf{u})$ quantify the distance between two unit vectors and $\sin\Theta(\boldsymbol{V}_{*}, \boldsymbol{V}) = \text{diag}\left\{\sin(\arccos \sigma_1), \ldots, \sin(\arccos \sigma_r)\right\}\in \bbR^r$ quantify the distance between $\boldsymbol{V}_{*}, \boldsymbol{V} \in \mathbb{R}^{p \times r}$ and  $\sigma_1 \geq \ldots \geq \sigma_r \geq 0$ are the singular values of $\boldsymbol{V}_{*}^{\prime}\boldsymbol{V}$. Finally, we let $\mathbb{S}^{N-1} = \{\boldsymbol{u} \in \mathbb{R}^{N}: \|\boldsymbol{u}\|_{2}=1\}$, $\mathcal{O}_{p, r} = \{\boldsymbol{V} \in \mathbb{R}^{p \times r}: \boldsymbol{V}^{\prime}\boldsymbol{V} = \boldsymbol{I}_{r}\}$, and $\mathcal{D}_{r}$ the set of diagonal matrices of size $r \times r$. More detailed notations can be found in the Appendix. 

\subsection{JisstPCA Algorithm}
\label{sec:jisstPCA}
The goal of JisstPCA is to estimate both the population factor $\boldsymbol{u}_{k}^*$ and network factors $\boldsymbol{V}_{k}^*, \boldsymbol{W}_{k}^*$ in \eqref{eq:sst_d} simultaneously. Inspired by the multi-block PCA literature \citep{westerhuis1998analysis,abdi2013multiple} where different data tables are appropriately scaled before a joint PCA, one may consider the following weighted objective: 
\begin{equation}\label{eq:jisstPCAloss}
\begin{aligned}
\underset{\boldsymbol{u}_{k}, \boldsymbol{V}_{k}, \boldsymbol{W}_{k}, d_{x,k}, d_{y,k}}{\arg\min} \; & \lambda \left\|\mathcal{X}-\sum\limits_{k = 1}^{K} d_{x,k} \cdot \boldsymbol{V}_{k}\boldsymbol{V}^{\prime}_{k} \circ \boldsymbol{u}_{k}\right\|_{F}^{2} + (1-\lambda) \left\|\mathcal{Y}-\sum\limits_{k = 1}^{K} d_{y,k} \cdot \boldsymbol{W}_{k}\boldsymbol{W}^{\prime}_{k} \circ \boldsymbol{u}_{k}\right\|_{F}^{2}\\
\mathrm{s.t.} \;& \boldsymbol{V}_{k} \in \mathcal{O}_{p, r_{x,k}}, \; \boldsymbol{W}_{k} \in \mathcal{O}_{q, r_{y,k}}, \; \boldsymbol{u}_{k} \in \mathbb{S}^{N-1}.
\end{aligned}
\end{equation}
Here, the weight parameter $\lambda \in [0, 1]$ is analogous to the scaling parameter in multiblock PCA; and it can reflect our knowledge about which data set is more important or more trustworthy. If we let $\lambda = 0$ or $1$, the problem degenerates to non-integrated semi-symmetric tensor factorization studied in \cite{weylandt2022multivariate}. We discuss more about the selection of $\lambda$ in Section \ref{sec:practice} of the Appendix.

\subsubsection{Single-Factor JisstPCA}
\label{single_isst_pca} 
We start from the single-factor ($K=1$) case and will then extend it to $K>1$. The subscript $k$ will be omitted when we discuss the single-factor model. Inspired by the success of power iteration in the tensor PCA literature \citep{zhang2018tensor,weylandt2022multivariate} and the weighted objective \eqref{eq:jisstPCAloss}, we propose a integrated power iteration algorithm that incorporates the weight parameter $\lambda$ into the updates. In particular, given the prior update of the joint factor $\bu$, we update $\bV$ and $\bW$ as the top eigenvectors of $\cX\times_3\bu$ and $\cY\times_3\bu$; given $\bV$, $\bW$, $\bu$ can be updated by pooling together the trace products $[\cX;\bV^{(t+1)}]$ and $[\cY, \bW^{(t+1)}]$ defined in Section \ref{sec:notation}, with the weight parameter $\lambda$. The detailed procedure is summarized in Algorithm \ref{alg:single_tt}. 
The main difference between Algorithm \ref{alg:single_tt} and the SS-TPCA algorithm in \cite{weylandt2022multivariate} lies in the update of $\bu$ that utilizes weighted power iterates from both tensors. Note that we do not claim to solve the optimization problem \eqref{eq:jisstPCAloss}, but instead, we will show the statistical convergence properties of our factor updates with high probability when the data is sampled from model \eqref{eq:sst_d} in Section \ref{sec:theory}. 
\begin{algorithm}[!htb]
\caption{Single-factor JisstPCA}\label{alg:single_tt}
\begin{itemize}
\item Input: $\mathcal{X}$, $\mathcal{Y}$, $r_x$, $r_y$, $\lambda$, and maximum iteration $t_{\max}$.
    \item Initialization: Let $t = 0$, and $\boldsymbol{u}^{(0)} = \text{Leading singular vector of } \left[\lambda \mathcal{M}_{3}(\mathcal{X}), (1- \lambda) \mathcal{M}_{3}(\mathcal{Y})\right]$.
    \item \textbf{repeat} until $t = t_{\max}$ or convergence:
    \begin{align*}
        & \boldsymbol{V}^{(t+1)} = \text{Leading $r_x$ singular vectors of } \mathcal{X} \times_3 \boldsymbol{u}^{(t)}\\
        & \boldsymbol{W}^{(t+1)} = \text{Leading $r_y$ singular vectors of } \mathcal{Y} \times_3 \boldsymbol{u}^{(t)}\\
        & \boldsymbol{u}^{(t+1)} = \dfrac{\lambda[\mathcal{X}; \boldsymbol{V}^{(t+1)}] + (1-\lambda)[\mathcal{Y}; \boldsymbol{W}^{(t+1)}]}{\left\|\lambda[\mathcal{X}; \boldsymbol{V}^{(t+1)}] + (1-\lambda)[\mathcal{Y}; \boldsymbol{W}^{(t+1)}]\right\|_2}\\
        & t = t+1
    \end{align*}
    \item \textbf{return} $\hat{\boldsymbol{u}} = \boldsymbol{u}^{(t)}$, $\hat{\boldsymbol{V}} = \boldsymbol{V}^{(t)}$, $\hat{\boldsymbol{W}} = \boldsymbol{W}^{(t)}$; $\hat{d}_x = \langle \mathcal{X}, \hat{\boldsymbol{V}} \hat{\boldsymbol{V}}^{\prime} \circ \hat{\boldsymbol{u}} \rangle / r_x$, $\hat{d}_y = \langle \mathcal{Y}, \hat{\boldsymbol{W}} \hat{\boldsymbol{W}}^{\prime} \circ \hat{\boldsymbol{u}} \rangle / r_y$; $\hat{\mathcal{X}} = \hat{d}_x \cdot \hat{\boldsymbol{V}} \hat{\boldsymbol{V}}^{\prime} \circ \hat{\boldsymbol{u}}$, $\hat{\mathcal{Y}} = \hat{d}_y \cdot \hat{\boldsymbol{W}} \hat{\boldsymbol{W}}^{\prime} \circ \hat{\boldsymbol{u}}$.
\end{itemize}
\end{algorithm}
For the initialization $\bu^{(0)}$, \cite{weylandt2022multivariate} suggests using the warm initialization $\bu^{(0)}=\mathbf{1}_{N}/\sqrt{N}$. We will show in Section \ref{sec:theory} that when the networks across the population are not too different from each other, warm initialization may suffice to guarantee convergence; while without such prior knowledge/belief, we suggest using the spectral initialization as in many prior works \citep{zhang2018tensor}:
\vspace{-10pt}
\begin{equation}\label{init}
    \boldsymbol{u}^{(0)} = \text{leading singular vector of } \left[\lambda \mathcal{M}_{3}(\mathcal{X}), (1 - \lambda) \mathcal{M}_{3}(\mathcal{Y})\right].
\end{equation}
\vspace{-10pt}
Here we still use $\lambda$ to weight the two tensors as in Algorithm \ref{alg:single_tt}.

\subsubsection{Multi-Factor JisstPCA}
\label{multi_isst_pca}
We now consider the general and more challenging case with $K > 1$. Inspired by the success of deflation methods in various matrix and tensor PCA approaches \citep{mackey2008deflation, allen2012sparse, weylandt2022multivariate,ge2021understanding}, 
we adopt a scheme that successively applies single-factor JisstPCA followed by deflating the tensor based on previously estimated factors.  Specifically, we apply Algorithm \ref{alg:single_tt} to $\mathcal{X}^{k}$ and $\mathcal{Y}^{k}$, deflate each of these tensors to get $\mathcal{X}^{k+1}$ and $\mathcal{Y}^{k+1}$, and repeat until we have extracted all $K$ factors; note that we set $\mathcal{X}^{1} = \mathcal{X}$ and $\mathcal{Y}^{1} = \mathcal{Y}$.  There are several possible deflation schemes \citep{mackey2008deflation}, but perhaps subtraction deflation is the simplest and imposes the fewest assumptions:
\begin{equation}\label{eq:sub_def}
\begin{aligned}
    \mathcal{X}^{k+1} = \mathcal{X}^{k} - \hat{d}_{x,k} \cdot \hat{\boldsymbol{V}}_{k}\hat{\boldsymbol{V}}_{k}^{\prime} \circ \hat{\boldsymbol{u}}_{k},\quad \mathcal{Y}^{k+1} = \mathcal{Y}^{k} - \hat{d}_{y,k} \cdot \hat{\boldsymbol{W}}_{k}\hat{\boldsymbol{W}}_{k}^{\prime} \circ \hat{\boldsymbol{u}}_{k}.
\end{aligned}
\end{equation} 
In many scenarios, however, one may wish to impose orthogonality constraints on all the tensor factors.  To achieve this, one could employ projection deflation for tensors as in \citep{mackey2008deflation,allen2012sparse}.  We note, however, that mutual orthogonality of the network factors across the layers may be quite restrictive for real data and limit the expressivity and interpretability of the model.  Instead, one might want to impose orthogonality of the population factors, $\mathbf{u}_{k}$, to enable similar interpretation of these as sample PCs.  To achieve this, we propose partial projection deflation:
\begin{equation}\label{eq:proj_def_u}
\begin{aligned}
    & \mathcal{X}^{k+1} = \left(\mathcal{X}^{k} - \hat{d}_{x,k} \cdot \hat{\boldsymbol{V}}_{k}\hat{\boldsymbol{V}}_{k}^{\prime} \circ \hat{\boldsymbol{u}}_{k}\right)\times_3 \left(\boldsymbol{I}_{N}-\hat{\boldsymbol{u}}_{k}\hat{\boldsymbol{u}}_{k}^{\prime} \right)\\
    & \mathcal{Y}^{k+1} = \left(\mathcal{Y}^{k} - \hat{d}_{y,k} \cdot \hat{\boldsymbol{W}}_{k}\hat{\boldsymbol{W}}_{k}^{\prime} \circ \hat{\boldsymbol{u}}_{k}\right) \times_3 \left(\boldsymbol{I}_{N}-\hat{\boldsymbol{u}}_{k}\hat{\boldsymbol{u}}_{k}^{\prime} \right).
\end{aligned}
\end{equation}
More details on this, other possibilities for deflation schemes, and the selection of number of factors $K$ are provided in the Appendix. 
\subsection{Extensions and Practical Considerations}\label{Other}
\paragraph{Generalized JisstPCA} In this section, we consider an important extension of the JisstPCA model \eqref{eq:sst_d} and algorithm to incorporate broader application scenarios. In particular, model \eqref{eq:sst_d} assumes one single eigenvalue for each network factor, which may not be reasonable when one network factor consists of multiple components with different connection strengths. 
Therefore, we extend \eqref{eq:sst_d} to allow {\em multiple eigenvalues} for each network factor: 
\begin{equation}\label{gen_model}
\begin{aligned}
     \mathcal{X} = \sum\limits_{k = 1}^{K} \boldsymbol{V}_{k}^*\boldsymbol{D}^{*}_{x,k} \boldsymbol{V}^{*\prime}_{k} \circ \boldsymbol{u}_{k}^* + \mathcal{E}_{x},\quad \mathcal{Y} = \sum\limits_{k = 1}^{K} \boldsymbol{W}_{k}^*\boldsymbol{D}^{*}_{y,k}\boldsymbol{W}^{*\prime}_{k} \circ \boldsymbol{u}_{k}^* + \mathcal{E}_{y}.
\end{aligned}
\end{equation}
where $\bD^*_{x,k}\in \bbR^{r_{x,k}\times r_{x,k}}$ and $\bD^*_{y,k}\in \bbR^{r_{y,k}\times r_{y,k}}$ are diagonal matrices, replacing the scalar eigenvalues $d^*_{x,k}$, $d^*_{y,k}$ in \eqref{eq:sst_d}.
To estimate factors in this generalozed JisstPCA model, 
we still propose a power iteration algorithm for the single-factor case and apply a successive deflation scheme for multi-factor models. The main change we make to the single-factor JisstPCA algorithm is that we update the diagonal matrix $\bD_x$ and $\bD_y$ within each iteration due to its increased importance, and we use them weight the columns of $\bV$, $\bW$ when updating $\bu$. 
We term this new algorithm the generalized JisstPCA (G-JisstPCA) algorithm, whose detailed procedures are summarized in Section \ref{sec:morealg} of the Appendix. 

\paragraph{Other extensions and hyperparameter tuning} Another important extension of our JisstPCA framework is to jointly analyze network data together with vector-valued covariates, organized as a matrix. This is especially useful for integrating feature vectors in neuroimaging studies, such as genomics, demographic, or behavioral traits for each subject. The detailed matrix-tensor JisstPCA model and its associated algorithm are included in Section \ref{sec:morealg} of the Appendix. We also include a detailed discussion on how to select the hyperparameters in practice in Section \ref{sec:practice} of the Appendix, including the number of factors $K$, the ranks of each factor $r_{x,k}$, $r_{y,k}$, $1\leq k\leq K$, and the integrative scaling parameter $\lambda$. 
\section{Theoretical Guarantees}\label{sec:theory}
In this section, we provide theoretical properties of single-factor JisstPCA (Algorithm \ref{alg:single_tt}) and Generalized JisstPCA (Algorithm \ref{single_diag}) in terms of the estimation errors for population and network factors. We first show a deterministic one-step convergence result that does not assume the noise distribution, under a deterministic SNR condition and an initialization condition. In Section \ref{sec:subGauss}, we focus on the special case of sub-Gaussian noise, showing that under suitable SNR assumptions, spectral initialization is good enough to ensure one-step statistical convergence. 
\subsection{Deterministic Convergence Guarantee} 
To begin with, we state two assumptions on the SNR and initialization, $\bu^{(0)}$.
\begin{assump}[SNR condition]\label{assump:SNR1}
    $d_x^* \geq 5\|\cE_x\|_{\op}$, $d_y^* \geq 5\|\cE_y\|_{\op}$.
\end{assump}
\noindent Assumption \ref{assump:SNR1} enforces that the signal strength $d_x^*$ and $d_y^*$ for tensors $\cX$ and $\cY$ are at least constant times larger than the noise level $\|\cE_x\|_{\op}$ and $\|\cE_y\|_{\op}$. 
\begin{assump}[Initialization condition]\label{assump:init}
    The initialization $\bu^{(0)}$ satisfies 
    $|\sin\theta(\bu^*, \bu^{(0)})|^2\leq 1-8\left(\frac{\|\cE_x\|_{\op}^2}{d_x^{*2}}\vee \frac{\|\cE_y\|_{\op}^2}{d_y^{*2}}\right)$.
\end{assump}
\noindent Assumption \ref{assump:init} is concerned with the initialization error $|\sin\theta(\bu, \bu^{(0)})|\in [0,1]$. We note that when Assumption \ref{assump:SNR1} holds, Assumption \ref{assump:init} can be implied by $|\sin\theta(\bu, \bu^{(0)})|\leq \frac{4}{5}$; when the {\em SNR continues to grow, Assumption \ref{assump:init} becomes weaker and easier to satisfy}. We will discuss this Assumption in more detail after Theorem \ref{thm:stat_converg_main}. 

Before presenting our deterministic one-step convergence guarantee for JisstPCA, we also define a notation representing the integrated noise level: $\|\lambda\cE_x; (1-\lambda)\cE_y\|_{r_x, r_y, \op} = \sup_{\bV\in \cO_{p\times r_x}, \bW\in \cO_{q\times r_y}}\|\lambda[\cE_x;\bV] + (1-\lambda)[\cE_y;\bW]\|_2.$ 
\begin{thm}\label{thm:stat_converg_main}
    Suppose $\mathcal{X}, \mathcal{Y}$ satisfy \eqref{eq:sst_d} with $K = 1$, and Assumptions \ref{assump:SNR1} and \ref{assump:init} hold. 
   Then the output of Algorithm \ref{alg:single_tt} satisfies the following: for $k\geq 1$,
\begin{equation}\label{eq:main_err1}
    |\sin\theta(\boldsymbol{u}^*, \boldsymbol{u}^{(k)})| \leq \frac{4\|\lambda \cE_x; (1-\lambda)\cE_y\|_{r_x,r_y,\op}}{\lambda r_xd_x^* + (1-\lambda)r_yd_y^*},
\end{equation}
\begin{equation}\label{eq:main_err2}
    \left\|\sin\Theta(\boldsymbol{V}^*, \boldsymbol{V}^{(k+1)})\right\|_{\mathrm{op}}\leq \frac{4\|\cE_x\|_{\op}}{d_x^*},\quad \left\|\sin\Theta(\boldsymbol{W}^*, \boldsymbol{W}^{(k+1)})\right\|_{\mathrm{op}}\leq \frac{4\|\cE_y\|_{\op}}{d_y^*}.
\end{equation}
\end{thm}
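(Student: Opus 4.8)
The plan is to prove both displays \eqref{eq:main_err1}–\eqref{eq:main_err2} simultaneously by induction on the iteration count, exploiting the one-step structure: I will show that whenever the current population iterate $\bu^{(t)}$ satisfies the weak bound of Assumption~\ref{assump:init}, a single sweep of Algorithm~\ref{alg:single_tt} produces network factors satisfying \eqref{eq:main_err2} and a population factor satisfying \eqref{eq:main_err1}, and then verify that \eqref{eq:main_err1} again implies Assumption~\ref{assump:init} (this is exactly the remark that $|\sin\theta|\le 4/5$ suffices). The proof thus reduces to two deterministic per-step estimates — one for the network update, one for the population update — plus a constant-accounting check that closes the induction under Assumptions~\ref{assump:SNR1}–\ref{assump:init}.

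For the network update I would first record the exact signal-plus-noise form of the contracted matrix. Writing $\rho_t := \bu^{*\prime}\bu^{(t)}$, the rank-one population structure gives $\cX\times_3\bu^{(t)} = d_x^*\rho_t\,\bV^*\bV^{*\prime} + \cE_x\times_3\bu^{(t)}$, where the signal is symmetric with nonzero eigenspace exactly $\mathrm{span}(\bV^*)$ and spectral gap $d_x^*|\rho_t|$, while the perturbation satisfies $\|\cE_x\times_3\bu^{(t)}\|_{\op}\le\|\cE_x\|_{\op}$ since $\bu^{(t)}$ is a unit vector. A standard Davis–Kahan $\sin\Theta$ bound for the leading-$r_x$ singular subspace (working with singular vectors, which neatly absorbs the sign of $\rho_t$) then yields $\|\sin\Theta(\bV^*,\bV^{(t+1)})\|_{\op}\lesssim \|\cE_x\|_{\op}/(d_x^*|\rho_t|)$, and symmetrically for $\bW$. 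A lower bound on $|\rho_t|$ — from Assumption~\ref{assump:init} when $t=0$ and from \eqref{eq:main_err1} when $t\ge1$ — converts this into the claimed \eqref{eq:main_err2}.

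For the population update I would compute $[\cX;\bV^{(t+1)}] = d_x^*\,\|\bV^{(t+1)\prime}\bV^*\|_F^2\,\bu^* + [\cE_x;\bV^{(t+1)}]$ and likewise for $\cY$, so the unnormalized iterate is $\boldsymbol{g}^{(t+1)}=\alpha\,\bu^*+\boldsymbol{\xi}$ with $\alpha=\lambda d_x^* a_x+(1-\lambda)d_y^* a_y$, $a_x:=\|\bV^{(t+1)\prime}\bV^*\|_F^2$, and $\boldsymbol{\xi}=\lambda[\cE_x;\bV^{(t+1)}]+(1-\lambda)[\cE_y;\bW^{(t+1)}]$. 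The elementary inequality $\sin\theta(\bu^*,\boldsymbol{g}^{(t+1)})\le \|\boldsymbol{\xi}\|_2/(\alpha-\|\boldsymbol{\xi}\|_2)$ then reduces everything to two ingredients: (i) bounding $\|\boldsymbol{\xi}\|_2\le \|\lambda\cE_x;(1-\lambda)\cE_y\|_{r_x,r_y,\op}$, immediate since $(\bV^{(t+1)},\bW^{(t+1)})$ is one admissible pair in the supremum defining that norm; and (ii) lower bounding $\alpha$ via $a_x=r_x-\|\sin\Theta(\bV^*,\bV^{(t+1)})\|_F^2\ge r_x\bigl(1-\|\sin\Theta(\bV^*,\bV^{(t+1)})\|_{\op}^2\bigr)$ using the network bound just established, which gives $\alpha\ge c\,(\lambda r_x d_x^*+(1-\lambda)r_y d_y^*)$ for an explicit constant $c$. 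Combining the two yields \eqref{eq:main_err1}.

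The delicate part is not any individual inequality but the global constant bookkeeping needed to land at the advertised factor $4$: the network bound degrades like $1/|\rho_t|$, the population bound degrades as the effective signal $\alpha$ shrinks (itself governed by $\|\sin\Theta(\bV^*,\bV^{(t+1)})\|_{\op}$, hence by $|\rho_t|$), and these couple across the iteration. I expect the main obstacle to be the base step, where $\bu^{(0)}$ need only satisfy the weak Assumption~\ref{assump:init}, so $|\rho_0|$ can be as small as $\asymp\|\cE_x\|_{\op}/d_x^*$; I must check that even then $a_x,a_y$ stay bounded sufficiently away from zero that the first population update already lands inside the ball of radius $4\|\lambda\cE_x;(1-\lambda)\cE_y\|_{r_x,r_y,\op}/(\lambda r_x d_x^*+(1-\lambda)r_y d_y^*)$. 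This is precisely where Assumption~\ref{assump:SNR1} ($d_x^*\ge5\|\cE_x\|_{\op}$, giving $\|\cE_x\|_{\op}\le d_x^*/5$) and the precise Davis–Kahan constant must be used tightly; the same SNR bound also shows $4\|\lambda\cE_x;(1-\lambda)\cE_y\|_{r_x,r_y,\op}/(\lambda r_x d_x^*+(1-\lambda)r_y d_y^*)\le 4/5$, which re-establishes Assumption~\ref{assump:init} and closes the induction.
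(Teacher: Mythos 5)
Your proposal is correct and follows essentially the same route as the paper's proof: the same induction that alternates a Davis--Kahan bound on $\cX\times_3\bu^{(t)} = d_x^*\langle\bu^*,\bu^{(t)}\rangle\,\bV^*\bV^{*\prime}+\cE_x\times_3\bu^{(t)}$ for the network factors with the elementary bound $\sin\theta\le\|\boldsymbol{\xi}\|_2/(\alpha-\|\boldsymbol{\xi}\|_2)$ for the population update, lower-bounding $\alpha$ via $\|\bV^{*\prime}\bV^{(t+1)}\|_F^2\ge r_x(1-\|\sin\Theta\|_{\op}^2)\ge r_x/2$ and closing the loop by checking that the new error $\le 4/5$ re-implies Assumption~\ref{assump:init}. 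The constant bookkeeping you flag as the delicate point is handled in the paper exactly as you anticipate: Assumption~\ref{assump:init} forces $\langle\bu^*,\bu^{(0)}\rangle^2\ge 8\|\cE_x\|_{\op}^2/d_x^{*2}$, which keeps $\|\sin\Theta(\bV^*,\bV^{(1)})\|_{\op}^2\le 1/2$ even at the weak base step.
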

\noindent For the single-factor model, Theorem \ref{thm:stat_converg_main} suggests that as long as the signal is not masked by the noise (Assumption \ref{assump:SNR1}), and the initialization is reasonably good, then both the population factor $\bu$ and network factors $\bV,\,\bW$ can be well estimated after updating each factor only once. The one-step convergence may be inherited from the nice properties of the power iteration. The estimation error for the joint factor $\bu$ is the ratio between the spectral norm of integrated noise and the integrated signal; the errors for the network factors $\bV$ and $\bW$ are the corresponding SNR for each tensor. This resembles the matrix perturbation bounds (Davis-Kahan's Theorem) and recent tensor perturbation theory under the CP and Tucker models \citep{luo2021sharp,anandkumar2014guaranteed}. \eqref{eq:main_err1} shows that by integrating two tensors, the estimation accuracy of $\bu^*$ achieves a balance between the SNRs of $\cX$ and $\cY$. 

\textbf{Proof sketch}: Recall that Algorithm \ref{alg:single_tt} takes a similar form to a power iteration, where we alternatively update the estimates of $\bV$, $\bW$ and $\bu$ by taking the top singular vectors of a matrix computed from prior updates. In our proof, we apply the Davis-Kahan's Theorem to show that the perturbation bound for each update mainly depends on the SNR, inflated by a factor depending on the error of prior updates. Whenever the initialization and SNR conditions hold, the errors of $\bV^{(1)}$, $\bW^{(1)}$, $\bu^{(1)}$ are all bounded by constants, leading to the sufficiently small statistical error for the next updates $\bu^{(1)}$, $\bV^{(2)}$, $\bW^{(2)}$.
\begin{rmk}[Initialization condition]
    As we will show later in Section \ref{sec:subGauss}, spectral initialization will satisfy Assumption \ref{assump:init} when the noise is sub-Gaussian and under SNR conditions comparable to those in the literature. Furthermore, in the special case where the population variation is not too large ($\bu^*_i$'s are not too different), a warm initialization with $\bu^{(0)} = (\frac{1}{\sqrt{N}},\dots,\frac{1}{\sqrt{N}})'$ may also work without imposing any distributional assumption on the noise. We formalize this intuition in Corollary \ref{cor:warmInit}.
\end{rmk}
\begin{cor}[Warm initialization]\label{cor:warmInit}
    Let $\widehat{\mathrm{Var}}(u^*) = \frac{1}{N}\sum_{i}(\bu^*_i-\frac{1}{N}\sum_j\bu^*_j)^2$, $\widehat{\mathbb{E}}(u^{*}) = \frac{1}{N}\sum_{i}\bu^{*}_i$. Then as long as Assumption \ref{assump:SNR1} holds and
        $\frac{\widehat{\mathrm{Var}}(u^*)}{(\widehat{\mathbb{E}}(u^*))^2}\leq \left(\frac{d_x^{*2}}{8\|\cE_x\|_{\op}^2}\wedge \frac{d_y^{*2}}{8\|\cE_y\|_{\op}^2}\right) - 1$,
    the output of Algorithm \ref{alg:single_tt} with warm initialization $\bu^{(0)} = (\frac{1}{\sqrt{N}},\dots,\frac{1}{\sqrt{N}})'$ satisfies the estimation error bounds \eqref{eq:main_err1} and \eqref{eq:main_err2} for $k\geq 1$.
\end{cor}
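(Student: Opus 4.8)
The plan is to reduce the corollary to a direct invocation of Theorem \ref{thm:stat_converg_main}. Since Assumption \ref{assump:SNR1} is assumed outright, the only thing left to check is that the warm start $\bu^{(0)} = (\frac{1}{\sqrt N},\dots,\frac{1}{\sqrt N})'$ satisfies the initialization condition, Assumption \ref{assump:init}; once that holds, the bounds \eqref{eq:main_err1} and \eqref{eq:main_err2} follow verbatim. So the entire content is an algebraic computation of $\sin\theta(\bu^*,\bu^{(0)})$ for this specific initialization, re-expressed through the empirical mean and variance of the entries of $\bu^*$.

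First I would compute the inner product $\bu^{*\prime}\bu^{(0)} = \frac{1}{\sqrt N}\sum_i \bu^*_i = \sqrt N\,\widehat{\mathbb{E}}(u^*)$, so that $\cos^2\theta(\bu^*,\bu^{(0)}) = N\,(\widehat{\mathbb{E}}(u^*))^2$. The crucial ingredient is the unit-norm constraint $\|\bu^*\|_2 = 1$: since $\widehat{\mathrm{Var}}(u^*) = \frac{1}{N}\sum_i(\bu^*_i)^2 - (\widehat{\mathbb{E}}(u^*))^2 = \frac{1}{N} - (\widehat{\mathbb{E}}(u^*))^2$, this yields the identity $\frac{1}{N} = \widehat{\mathrm{Var}}(u^*) + (\widehat{\mathbb{E}}(u^*))^2$. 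Substituting to eliminate $N$ gives the closed form
\begin{equation*}
\cos^2\theta(\bu^*,\bu^{(0)}) = \frac{(\widehat{\mathbb{E}}(u^*))^2}{\widehat{\mathrm{Var}}(u^*) + (\widehat{\mathbb{E}}(u^*))^2} = \frac{1}{1 + \widehat{\mathrm{Var}}(u^*)/(\widehat{\mathbb{E}}(u^*))^2},
\end{equation*}
and correspondingly $\sin^2\theta(\bu^*,\bu^{(0)}) = \widehat{\mathrm{Var}}(u^*)/\big(\widehat{\mathrm{Var}}(u^*) + (\widehat{\mathbb{E}}(u^*))^2\big)$.

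Next I would match this against Assumption \ref{assump:init}, which demands $\sin^2\theta \leq 1 - 8\big(\frac{\|\cE_x\|_{\op}^2}{d_x^{*2}}\vee\frac{\|\cE_y\|_{\op}^2}{d_y^{*2}}\big)$, equivalently $\cos^2\theta \geq 8\big(\frac{\|\cE_x\|_{\op}^2}{d_x^{*2}}\vee\frac{\|\cE_y\|_{\op}^2}{d_y^{*2}}\big)$. Plugging in the closed form and taking reciprocals of both positive sides, this is equivalent to $1 + \widehat{\mathrm{Var}}(u^*)/(\widehat{\mathbb{E}}(u^*))^2 \leq \frac{1}{8}\big(\frac{d_x^{*2}}{\|\cE_x\|_{\op}^2}\wedge\frac{d_y^{*2}}{\|\cE_y\|_{\op}^2}\big)$, where the reciprocal of a maximum becomes a minimum. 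Rearranging recovers exactly the stated hypothesis $\widehat{\mathrm{Var}}(u^*)/(\widehat{\mathbb{E}}(u^*))^2 \leq \big(\frac{d_x^{*2}}{8\|\cE_x\|_{\op}^2}\wedge\frac{d_y^{*2}}{8\|\cE_y\|_{\op}^2}\big) - 1$, so the condition is in fact equivalent to (not merely sufficient for) Assumption \ref{assump:init} at this initialization. Theorem \ref{thm:stat_converg_main} then finishes the argument.

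I do not expect any substantive obstacle; the proof is a one-line reduction once the $\sin\theta$ formula is in hand, so the care is entirely bookkeeping. Two minor points warrant attention. First, the ratio $\widehat{\mathrm{Var}}(u^*)/(\widehat{\mathbb{E}}(u^*))^2$ presupposes $\widehat{\mathbb{E}}(u^*)\neq 0$; because $\bu^*$ is identified only up to sign and $\sin\theta$ depends on $\bu^{*\prime}\bu^{(0)}$ only through its square, one may assume without loss of generality $\widehat{\mathbb{E}}(u^*) > 0$, with the degenerate case $\widehat{\mathbb{E}}(u^*)=0$ making the required bound vacuous. Second, I would double-check that the inequality direction is preserved when converting $1/(a\vee b)$ into $(1/a)\wedge(1/b)$ and when dividing the minimum by $8$; both operations are order-preserving on positive quantities, so the final rearrangement is exact rather than merely one-directional.
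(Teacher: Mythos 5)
Your proposal is correct and follows essentially the same route as the paper's proof: both reduce the corollary to verifying Assumption \ref{assump:init} for the warm start, both exploit $\|\bu^*\|_2=1$ to get the identity $\widehat{\mathrm{Var}}(u^*) = \frac{1}{N} - (\widehat{\mathbb{E}}(u^*))^2$, and both then translate the stated variance-to-mean condition into the required lower bound on $\cos^2\theta(\bu^*,\bu^{(0)}) = N(\widehat{\mathbb{E}}(u^*))^2$ before invoking Theorem \ref{thm:stat_converg_main}. Your additional remarks on the equivalence of the two conditions and on the degenerate case $\widehat{\mathbb{E}}(u^*)=0$ are accurate refinements that the paper does not spell out but that do not change the argument.
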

\noindent Corollary \ref{cor:warmInit} suggests that as long as the variation amongst sample factors does not dominate its mean, then our JisstPCA algorithm with warm initialization enjoys one-step convergence.
\subsection{Special Case: Sub-Gaussian Noise and Spectral Initialization}\label{sec:subGauss}

In this subsection, we present the convergence property of JisstPCA when the noise tensors are entrywise sub-Gaussian, and when the spectral initialization is applied.
\begin{assump}[Sub-Gaussian noise]\label{assump:subGaussNoise}
    Suppose that for $1\leq k\leq N$, $(\cE_x)_{:,:,k}$ and $(\cE_y)_{:,:,k}$ have independent, zero-mean, sub-Gaussian-$\sigma$ entries subject to symmetry constraints. Let $\sigma^2_{i,j,k}(\cE_x) = \mathrm{Var}((\cE_x)_{i,j,k})$, $\sigma^2_{i,j,k}(\cE_y) = \mathrm{Var}((\cE_y)_{i,j,k})$ be entrywise variances, which satisfy $\sum_{i,j}\sigma^2_{i,j,k}(\cE_x) = \sum_{i,j}\sigma^2_{i,j,k'}(\cE_x)$, and $\sum_{i,j}\sigma^2_{i,j,k}(\cE_y) = \sum_{i,j}\sigma^2_{i,j,k'}(\cE_y)$ for all $1\leq k,\, k' \leq N$.
\end{assump}
\noindent Assumption \ref{assump:subGaussNoise} is weaker than entrywise i.i.d. noise, as we allow $\sigma_{i,j,k}(\cE_x)$ ($\sigma_{i,j,k}(\cE_y)$) to be different across different $i,j\in [p]$ ($i,j\in [q]$): variance can be location-varying within the network.  We also allow difference noise variances between the two tensors $\cE_x$ and $\cE_y$. The only homogeneous requirement is on the third mode (population mode) for a given tensor, meaning that all samples have the same noise level for a given network modality.
\begin{assump}[SNR condition under sub-Gaussian noise]\label{assump:SNR2}
    Let $d_{\lambda} = \sqrt{\lambda r_x d_{x}^{*2} + (1-\lambda)r_y d_{y}^{*2}}$ be the integrated ground truth signal, and suppose that $d_{\lambda} \geq C\sigma\left(N^{1/4} \left(\sqrt{p} + \sqrt{q}\right)+ \sqrt{N}\right)\sqrt{\log N}$ for some constant $C>0$. In addition, the signal of each network tensor satisfies $d_x^*\geq C\sigma(\sqrt{N}+\sqrt{p})$, $d_y^*\geq C\sigma(\sqrt{N}+\sqrt{q})$.
\end{assump}
\noindent The individual SNR conditions ($d_x^*\geq C\sigma(\sqrt{N}+\sqrt{p})$, $d_y^*\geq C\sigma(\sqrt{N}+\sqrt{q})$) in Assumption \ref{assump:SNR2} are equivalent to Assumption \ref{assump:SNR1} under the sub-Gaussian noise. The additional integrated SNR condition on $d_{\lambda}$ ensures a reasonably good spectral initialization (see Proposition \ref{thm:init}). In the special case where $d_x^* = d_y^* =d$, $r_x=r_y=r$, $p=q$, Assumption \ref{assump:SNR2} can be implied by $d/\sigma\geq C\max\{r^{-\frac{1}{2}}(\sqrt{p}N^{1/4}+\sqrt{N})\sqrt{\log N},\sqrt{p}+\sqrt{N}\}$. This is weaker than the SNR condition ($d/\sigma\geq Cr(\sqrt{pN}+\sqrt{N\log N})$) in the prior work on single semi-symmetric tensor PCA \citep{weylandt2022multivariate}. In addition, when $p=N$, prior work on Tucker low-rank tensor PCA \citep{zhang2018tensor} requires $d/\sigma\geq Cp^{3/4}$, comparable to our SNR condition $d/\sigma\geq C\max\{r^{-\frac{1}{2}}p^{3/4}\sqrt{\log p},\sqrt{p}\}$ \footnote{We only have an additional log-factor $\sqrt{\log p}$ since we allow location-varying noise variances.}. 
\begin{prop}\label{thm:init}
    Suppose $\mathcal{X}, \mathcal{Y}$ are generated from \eqref{eq:sst_d} with $K = 1$, and Assumptions \ref{assump:subGaussNoise} and \ref{assump:SNR2} hold. Then, for the spectral initialization for $\boldsymbol{u}^{(0)}$ as defined in \eqref{init}, we have
\begin{equation}\label{thm1}
    \left|\sin\theta(\boldsymbol{u}, \boldsymbol{u}^{(0)}) \right| \leq \frac{C\sigma\left(\sqrt{N} + (N(p^2+q^2))^{\frac{1}{4}}\right)\sqrt{\log N}}{d_{\lambda}}\leq \sqrt{1-8\left(\frac{\|\cE_x\|_{\op}^2}{d_x^{*2}}\vee\frac{\|\cE_y\|_{\op}^2}{d_y^{*2}}\right)},
\end{equation}
with probability at least $1 - C \exp \left(-c N\right)$ for some constant $C, c > 0$.
\end{prop}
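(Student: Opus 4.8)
The plan is to recast the spectral initialization as a matrix eigenvector perturbation problem and run a Davis--Kahan argument on the Gram matrix of the concatenated matricization. Write $M = [\lambda \mathcal{M}_3(\cX),\,(1-\lambda)\mathcal{M}_3(\cY)] \in \bbR^{N\times(p^2+q^2)}$, so that $\bu^{(0)}$ is exactly the leading left singular vector of $M$, equivalently the leading eigenvector of $MM'$. Under \eqref{eq:sst_d} with $K=1$, $M = M^* + E$, where the signal $M^* = \bu^*(\boldsymbol b^*)'$ is rank one with $\boldsymbol b^* = [\lambda d_x^*\,\matvec(\bV^*\bV^{*\prime})',\,(1-\lambda)d_y^*\,\matvec(\bW^*\bW^{*\prime})']'$, and $E = [\lambda \mathcal{M}_3(\cE_x),\,(1-\lambda)\mathcal{M}_3(\cE_y)]$. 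Since $\bV^*\in\cO_{p,r_x}$, $\bW^*\in\cO_{q,r_y}$, we have $\|\matvec(\bV^*\bV^{*\prime})\|_2^2 = r_x$ and $\|\matvec(\bW^*\bW^{*\prime})\|_2^2 = r_y$, so $M^*$ has a unique nonzero singular value $\sigma^* = \sqrt{\lambda^2 r_x d_x^{*2} + (1-\lambda)^2 r_y d_y^{*2}}$ with left singular vector $\bu^*$. Treating the fixed weight $\lambda\in(0,1)$ as a constant, $\sigma^*$ is of the same order as $d_\lambda$, the difference between the $\lambda^2$ and $\lambda$ weightings being absorbed into $C$.

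The second step passes to the Gram matrix and identifies the population eigenstructure. Expanding, $MM' = \sigma^{*2}\bu^*\bu^{*\prime} + M^*E' + EM^{*\prime} + EE'$. The crucial role of Assumption \ref{assump:subGaussNoise} is that the per-slice variance sums $\sum_{i,j}\sigma^2_{i,j,k}(\cE_x)$ and $\sum_{i,j}\sigma^2_{i,j,k}(\cE_y)$ are constant in $k$, while distinct slices are independent; hence $\bbE[EE'] = \tau^2 \boldsymbol I_N$ for a constant $\tau^2$ depending on $\lambda,\sigma$, so $\bbE[MM'] = \sigma^{*2}\bu^*\bu^{*\prime} + \tau^2\boldsymbol I_N$ has top eigenvector exactly $\bu^*$ with eigengap $\sigma^{*2}$. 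Applying Davis--Kahan to $MM'$ versus $\bbE[MM']$ gives $|\sin\theta(\bu^*,\bu^{(0)})| \lesssim \|MM' - \bbE[MM']\|_{\op}/\sigma^{*2}$ once the perturbation is below the gap, where $MM'-\bbE[MM'] = M^*E' + EM^{*\prime} + (EE'-\tau^2\boldsymbol I_N)$.

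The third step bounds the three perturbation pieces with high probability. The cross terms are rank one: $\|M^*E'\|_{\op} = \sigma^*\|E\bv^*\|_2$ with $\bv^* = \boldsymbol b^*/\sigma^*$ a fixed unit vector, and since $E\bv^*$ has independent mean-zero sub-Gaussian-$\sigma$ coordinates, $\|E\bv^*\|_2 \lesssim \sigma\sqrt{N}$ with probability $1-e^{-cN}$; the same holds for $\|EM^{*\prime}\|_{\op}$. The principal difficulty is the quadratic term $\|EE'-\tau^2\boldsymbol I_N\|_{\op}$, which demands a sharp operator-norm concentration for the heteroscedastic, symmetry-constrained sub-Gaussian matrix $E$, namely $\|EE'-\tau^2\boldsymbol I_N\|_{\op} \lesssim \sigma^2(\sqrt{N(p^2+q^2)}+N)$ with probability $1-Ce^{-cN}$. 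I would establish this via an $\varepsilon$-net over $\mathbb{S}^{N-1}$ combined with the Hanson--Wright inequality for the quadratic forms $\boldsymbol a'(EE'-\tau^2\boldsymbol I_N)\boldsymbol a = \|E'\boldsymbol a\|_2^2 - \bbE\|E'\boldsymbol a\|_2^2$; requiring the exponent $\min\{t^2/(\sigma^4(p^2+q^2)),\,t/\sigma^2\}$ to dominate the net cardinality $e^{CN}$ forces exactly this scale. This scale $\sqrt{N(p^2+q^2)}$ is the source of the $(N(p^2+q^2))^{1/4}$ rate and is strictly sharper than the crude $\|E\|_{\op}^2 \asymp \sigma^2(N + p^2+q^2)$ in the high-dimensional regime $N \ll p^2+q^2$.

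Combining the pieces gives $|\sin\theta(\bu^*,\bu^{(0)})| \lesssim \sigma\sqrt N/d_\lambda + \sigma^2(\sqrt{N(p^2+q^2)}+N)/d_\lambda^2$, up to $\sqrt{\log N}$ factors arising from the concentration/probability budget. The final step simplifies this under Assumption \ref{assump:SNR2}: the integrated SNR condition forces each summand to be at most a constant, so I use $x\le\sqrt{x}$ for $x\in[0,1]$ to replace $\sigma^2\sqrt{N(p^2+q^2)}/d_\lambda^2$ by $\sigma(N(p^2+q^2))^{1/4}/d_\lambda$ and absorb $\sigma^2 N/d_\lambda^2 \le \sigma\sqrt N/d_\lambda$, yielding the first claimed inequality. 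For the second inequality I would invoke the sub-Gaussian tensor operator-norm bounds $\|\cE_x\|_{\op}\lesssim\sigma(\sqrt N+\sqrt p)$ and $\|\cE_y\|_{\op}\lesssim\sigma(\sqrt N+\sqrt q)$, so the individual SNR conditions make $\|\cE_x\|_{\op}^2/d_x^{*2}\vee\|\cE_y\|_{\op}^2/d_y^{*2}$ a small constant and the right-hand side bounded below by a positive constant, while the integrated SNR condition drives the left-hand side below that constant for $C$ chosen large enough. The overall failure probability is dominated by the $e^{-cN}$ from the net argument.
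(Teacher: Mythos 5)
Your proposal follows essentially the same route as the paper's proof: both pass to the Gram matrix of the concatenated mode-3 matricization, use the homogeneous per-slice variance in Assumption \ref{assump:subGaussNoise} to make $\bbE[EE']$ a multiple of the identity so that $\bu^*$ is exactly the top eigenvector of $\bbE[MM']$, bound the rank-one cross terms by $\sigma^* \sigma\sqrt{N}$, control $\|EE'-\bbE EE'\|_{\op}$ at the scale $\sigma^2(\sqrt{N(p^2+q^2)}+N)$ (the paper cites a lemma of Zhou et al.\ for the off-diagonal part plus Bernstein for the diagonal, where you propose an equivalent $\varepsilon$-net/Hanson--Wright argument, and the paper also first reindexes the symmetry-duplicated columns to restore independence, a step you acknowledge but leave implicit), and finish with Davis--Kahan and the SNR condition via $x\le\sqrt{x}$. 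The argument is correct and the differences are only in which standard concentration tool is invoked for the quadratic term.
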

\noindent Proposition \ref{thm:init} suggests that under Assumptions \ref{assump:subGaussNoise}-\ref{assump:SNR2}, spectral initialization satisfies Assumption \ref{assump:init} with high probability. We are now in position to state our main statistical error bounds for JisstPCA with sub-Gaussian noise.
\begin{thm}\label{thm:main_subgauss}
    Suppose $\mathcal{X}, \mathcal{Y}$ satisfy \eqref{eq:sst_d} with $K = 1$, and Assumptions \ref{assump:subGaussNoise} and \ref{assump:SNR2} hold. Then Algorithm \ref{alg:single_tt} with spectral initialization \eqref{init} satisfies the following: for $k\geq 1$,
\begin{equation}\label{col_u}
   \left|\sin\theta(\boldsymbol{u}, \boldsymbol{u}^{(k)})\right| \leq \frac{C\sigma (\lambda r_x\sqrt{p+N} + (1-\lambda)r_y\sqrt{q+N})}{\lambda r_xd_x^* + (1-\lambda)r_yd_y^*},
\end{equation}
\begin{equation}\label{col_vw}
    \left\|\sin\Theta(\boldsymbol{V}, \boldsymbol{V}^{(k+1)})\right\|_{\mathrm{op}} \leq  \frac{C\sigma(\sqrt{p}+\sqrt{N})}{d_x^*},\quad \left\|\sin\Theta(\boldsymbol{W}, \boldsymbol{W}^{(k+1)})\right\|_{\mathrm{op}}\leq \frac{C\sigma(\sqrt{q}+\sqrt{N})}{d_y^*},
\end{equation}
with probability at least $1 - C \exp \left( -c N\right)$, where $C, c > 0$ are universal constants.
\end{thm}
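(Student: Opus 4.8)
The plan is to obtain Theorem \ref{thm:main_subgauss} as a specialization of the deterministic guarantee in Theorem \ref{thm:stat_converg_main}: once I show that, under sub-Gaussian noise, the three problem-dependent quantities appearing there---$\|\cE_x\|_{\op}$, $\|\cE_y\|_{\op}$, and the integrated noise level $\|\lambda\cE_x;(1-\lambda)\cE_y\|_{r_x,r_y,\op}$---concentrate at the claimed rates, and that the hypotheses of Theorem \ref{thm:stat_converg_main} hold on a single high-probability event, the bounds \eqref{col_u}--\eqref{col_vw} will follow by direct substitution into \eqref{eq:main_err1}--\eqref{eq:main_err2}. So essentially all the work is in the probabilistic estimates; no new iteration analysis is needed.

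First I would bound the two tensor operator norms. Since each slice $(\cE_x)_{:,:,k}$ has independent (up to symmetry) zero-mean sub-Gaussian-$\sigma$ entries, the semi-symmetric operator norm $\|\cE_x\|_{\op}=\sup_{v\in\mathbb{S}^{p-1},\,w\in\mathbb{S}^{N-1}}\cE_x\times_1 v\times_2 v\times_3 w$ is, for each fixed $(v,w)$, a $\sigma$-sub-Gaussian linear form in the entries; an $\varepsilon$-net over $\mathbb{S}^{p-1}\times\mathbb{S}^{N-1}$, whose metric entropy is of order $p+N$, then yields $\|\cE_x\|_{\op}\le C\sigma(\sqrt{p}+\sqrt{N})$ with probability $1-Ce^{-cN}$, and likewise $\|\cE_y\|_{\op}\le C\sigma(\sqrt{q}+\sqrt{N})$. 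Because the sub-Gaussian parameter is uniformly $\sigma$, the heteroskedasticity permitted by Assumption \ref{assump:subGaussNoise} costs nothing here. Feeding these into the individual SNR part of Assumption \ref{assump:SNR2} ($d_x^*\ge C\sigma(\sqrt{N}+\sqrt{p})$, $d_y^*\ge C\sigma(\sqrt{N}+\sqrt{q})$) verifies Assumption \ref{assump:SNR1} on this event for a suitably large constant $C$, and Proposition \ref{thm:init} supplies Assumption \ref{assump:init} for the spectral initialization, so both hypotheses of Theorem \ref{thm:stat_converg_main} hold simultaneously with the stated probability.

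The main technical step is the uniform control of the integrated noise functional. Writing $[\cE_x;\bV]=\sum_{j=1}^{r_x}[\cE_x;\bv_j]$ for the columns $\bv_j$ of $\bV$, the triangle inequality gives $\sup_{\bV\in\cO_{p,r_x}}\|[\cE_x;\bV]\|_2\le r_x\sup_{v\in\mathbb{S}^{p-1}}\|[\cE_x;v]\|_2$, reducing the matter to a single-vector quadratic-form supremum $\sup_{v,w}\sum_k w_k\,v'(\cE_x)_{:,:,k}v=\sup_{v,w}\cE_x\times_1 v\times_2 v\times_3 w$. For fixed $(v,w)$ this is again $\sigma$-sub-Gaussian up to a universal constant (its generating weight $vv'\otimes w$ has unit Frobenius norm), so the same net over $\mathbb{S}^{p-1}\times\mathbb{S}^{N-1}$ yields $\sup_v\|[\cE_x;v]\|_2\le C\sigma\sqrt{p+N}$. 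Combining the two tensors by the triangle inequality over $\lambda[\cE_x;\cdot]$ and $(1-\lambda)[\cE_y;\cdot]$ then gives $\|\lambda\cE_x;(1-\lambda)\cE_y\|_{r_x,r_y,\op}\le C\sigma(\lambda r_x\sqrt{p+N}+(1-\lambda)r_y\sqrt{q+N})$. I expect this uniform bound---a supremum over two Stiefel manifolds of a quadratic (not linear) function of the noise, with the rank dependence kept explicit---to be the delicate part, and the place where the chaining/Hanson--Wright argument must be carried out with care; the operator-norm estimates above are essentially a special case of the same reasoning.

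Finally I would assemble the pieces. On the intersection of the constantly many high-probability events above, Theorem \ref{thm:stat_converg_main} applies; substituting the noise bounds into \eqref{eq:main_err1} yields \eqref{col_u} and into \eqref{eq:main_err2} yields \eqref{col_vw}, after absorbing universal constants, and a union bound preserves the overall probability $1-Ce^{-cN}$. I would also emphasize that the $\sqrt{\log N}$ factor present in Proposition \ref{thm:init} does not propagate to the final rates: Theorem \ref{thm:stat_converg_main} uses the initialization only through the qualitative condition of Assumption \ref{assump:init}, so the one-step error depends solely on the clean $O(\sqrt{\cdot})$ noise bounds rather than on the initialization quality.
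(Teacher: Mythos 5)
Your proposal is correct and follows essentially the same route as the paper: bound $\|\cE_x\|_{\op},\|\cE_y\|_{\op}$ by $C\sigma(\sqrt{p}+\sqrt{N})$ and $C\sigma(\sqrt{q}+\sqrt{N})$ with probability $1-Ce^{-cN}$ (the paper splits each slice into upper and lower triangular parts to get genuinely independent entries before citing an existing sub-Gaussian tensor spectral norm bound), check that Assumption \ref{assump:SNR2} then implies Assumption \ref{assump:SNR1}, invoke Proposition \ref{thm:init} for Assumption \ref{assump:init}, and substitute into Theorem \ref{thm:stat_converg_main}. The one step you flag as delicate is not: $\sup_{\bV\in\cO_{p,r_x}}\|[\cE_x;\bV]\|_2\le\sum_{j=1}^{r_x}\sup_{v,w}\cE_x\times_1v\times_2v\times_3w\le r_x\|\cE_x\|_{\op}$ deterministically, so no separate chaining or Hanson--Wright argument is needed beyond the operator-norm bounds already in hand, which is exactly how the paper closes the proof.
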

\noindent Theorem \ref{thm:main_subgauss} shows that under sub-Gaussian noise and an SNR condition comparable to the prior literature, all factors converge to their statistical errors after being updated at least once. As far as we are aware, this is the first statistical guarantee for integrative tensor analysis. When $r_x=r_y=r$, $d_x^*=d_y^*=d$, $p=q$, our statistical error for $\bu$, $\bV$, $\bW$ all scale as $\frac{\sigma (\sqrt{p}+\sqrt{N})}{d}$. which improves upon the prior result \citep{weylandt2022multivariate} on non-integrated data ($\frac{\sigma r\sqrt{pN}}{d}$ for $\bu$ and $\frac{\sigma r^{3/2}\sqrt{pN}}{d}$ for $\bV$). In addition, our statistical error bounds are satisfied by one-step iterates with provable initialization, also demonstrating extremely fast convergence. Furthermore, our statistical error bound is comparable to prior results on HOOI for Tucker low-rank tensor PCA \cite{zhang2018tensor, luo2021sharp}, where the error of each factor scales as $\dfrac{\sigma\sqrt{p}}{d}$ when each mode's dimension scales as $p$.

\subsection{Extension: Convergence of Generalized JisstPCA}
To accommodate more general scenarios,  we also study the theoretical properties of the Generalized JisstPCA (Algorithm \ref{single_diag}) under the model \eqref{gen_model}. In this setting, we have different signal strengths for estimating the joint factor and individual factors, as reflected by the two separate SNR conditions as follows.
\begin{assump}[SNR for joint factor]\label{assump:dJisst_SNR1}
The integrated signal satisfies:
$\lambda\|\bD_x^*\|_F^2 + (1-\lambda)\|\bD_y^*\|_F^2\geq C\sigma^2(\sqrt{N(p^2+q^2)}+N)\log N$,
and the individual signals satisfy
$\|\bD_x^*\|_F^2\geq C\sigma^2 r_x(p+N),\,\|\bD_y^*\|_F^2\geq C\sigma^2r_y(q+N).$
\end{assump}
\noindent Assumption \ref{assump:dJisst_SNR1} ensures a good estimate for the joint factor $\bu^*$. We note that the signal strength depends on the Frobenious norms of $\bD_x^*$ and $\bD_y^*$, since they are the singular values of the matricization along the third mode. 
\begin{assump}[SNR for individual factors]\label{assump:dJisst_SNR2}
    $\sigma_{r_x}^2(\bD_x^*) \geq C\sigma^2(p+N)$, $\sigma_{r_y}^2(\bD_y^*) \geq C\sigma^2(q+N)$.
\end{assump}
\noindent Assumption \ref{assump:dJisst_SNR2} ensures good estimates for the individual factors $\bV^*$, $\bW^*$. Different from Assumption \ref{assump:dJisst_SNR1}, the estimation of individual factors depends on the $\sigma_{r_x}(\bD_x^*)$ and $\sigma_{r_y}(\bD_y^*)$, since they are the minimum nonzero singular values of the matricizations along first/second modes. Note that we allow entries of $\bD_x^*$ and $\bD_y^*$ to have arbitrary signs, different from the original JisstPCA model \eqref{eq:sst_d} (Assumption \ref{assump:SNR2}). In the special case when $\bD_x^*$ and $\bD_y^*$ are diagonal matrices with $d_x^*$, $d_y^*$, Assumptions \ref{assump:dJisst_SNR1}-\ref{assump:dJisst_SNR2} are both implied by Assumption \ref{assump:SNR2}.
\begin{thm}\label{thm:converg_D}
    Suppose that $\cX$ and $\cY$ are generated from \eqref{gen_model}, and we apply Algorithm \ref{single_diag} on them with the spectral initialization \eqref{init}. Then as long as Assumptions \ref{assump:subGaussNoise} and \ref{assump:dJisst_SNR1} hold, with probability at least $1-CN^{-c}$, for any iteration number $k\geq 1$, we have
\begin{equation*}
\begin{split}
    |\sin\theta(\bu^*,\bu^{(k)})|\leq &\frac{C\sigma\left(\lambda\sqrt{r_x(p+N)}\|\bD^*_x\|_F + (1-\lambda)\sqrt{r_y(q+N)}\|\bD^*_y\|_F\right)}{\lambda\|\bD^*_x\|_F^2 + (1-\lambda)\|\bD^*_y\|_F^2}\\
    \leq&\frac{C\sigma \sqrt{r_x(p+N)}}{\|\bD^*_x\|_F} \vee \frac{C\sigma \sqrt{r_y(q+N)}}{\|\bD^*_y\|_F}.
\end{split}
\end{equation*}
    If Assumption \ref{assump:dJisst_SNR2} also holds, the following holds for any $k\geq 2$ with the same probability:
   \begin{equation*}
   \begin{split}
        \|\sin\Theta(\bV^*,\bV^{(k)})\|_{\op}\leq \frac{C\sigma \sqrt{p+N}}{\sigma_{r_x}(\bD^*_x)},
        \|\sin\Theta(\bW^*,\bW^{(k)})\|_{\op}\leq \frac{C\sigma \sqrt{q+N}}{\sigma_{r_y}(\bD^*_y)}.
   \end{split}
\end{equation*} 
\end{thm}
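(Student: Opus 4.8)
The plan is to reproduce, at the level of the generalized model~\eqref{gen_model}, the same three-part architecture used for the scalar case (deterministic contraction $\to$ spectral-initialization guarantee $\to$ sub-Gaussian rates), adapting each piece to the diagonal signal matrices $\bD_x^*,\bD_y^*$ and to the extra within-iteration update of $\bD_x,\bD_y$ in Algorithm~\ref{single_diag}. First I would establish a deterministic one-step convergence lemma analogous to Theorem~\ref{thm:stat_converg_main}. The key identity is $\cX\times_3\bu^{(t)} = (\bu^{*\prime}\bu^{(t)})\,\bV^*\bD_x^*\bV^{*\prime} + \cE_x\times_3\bu^{(t)}$, a rank-$r_x$ signal plus noise; applying Davis--Kahan to its leading $r_x$ singular vectors gives $\|\sin\Theta(\bV^*,\bV^{(t+1)})\|_{\op}\lesssim \|\cE_x\times_3\bu^{(t)}\|_{\op}/(|\bu^{*\prime}\bu^{(t)}|\,\sigma_{r_x}(\bD_x^*))$, where the gap is the \emph{smallest} signal singular value $\sigma_{r_x}(\bD_x^*)$ --- this is exactly the quantity in Assumption~\ref{assump:dJisst_SNR2}. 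For the joint-factor step, the signal inside the weighted combination $\lambda[\cX;\bV^{(t+1)},\bD_x^{(t+1)}] + (1-\lambda)[\cY;\bW^{(t+1)},\bD_y^{(t+1)}]$ is $\bu^*\bigl(\lambda\langle\bV^*\bD_x^*\bV^{*\prime},\bV^{(t+1)}\bD_x^{(t+1)}\bV^{(t+1)\prime}\rangle + (1-\lambda)\langle\cdots\rangle\bigr)$, whose coefficient is $\approx\lambda\|\bD_x^*\|_F^2 + (1-\lambda)\|\bD_y^*\|_F^2$ once the network and diagonal estimates are accurate --- matching the $\|\bD\|_F^2$-type denominators in the theorem and the integrated signal in Assumption~\ref{assump:dJisst_SNR1}. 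I would then track how the prior angle $\sin\theta(\bu^*,\bu^{(t)})$ inflates each update and check that the recursion contracts under the two SNR conditions and the initialization condition.

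A genuinely new ingredient relative to the scalar analysis is the estimated diagonal $\bD_x^{(t+1)},\bD_y^{(t+1)}$. I would show that $\bD_x^{(t+1)}$, formed from the relevant diagonal in Algorithm~\ref{single_diag}, concentrates around $\bD_x^*$ with error controlled by $\|\cE_x\times_3\bu^{(t)}\|_{\op}$ and $\|\sin\Theta(\bV^*,\bV^{(t+1)})\|_{\op}$; crucially, once $\sigma_{r_x}(\bD_x^*)$ dominates the noise, the \emph{signs} of the estimated eigenvalues match those of $\bD_x^*$, keeping the trace-product inner product positive and close to $\|\bD_x^*\|_F^2$. This is what allows $\bD^*$ with arbitrary signs, and it also explains the two-step structure of the statement: the $\bu$-bound needs only Assumption~\ref{assump:dJisst_SNR1} and holds for $k\geq 1$, whereas the network-factor bounds additionally require Assumption~\ref{assump:dJisst_SNR2} and the refined joint factor produced by the first $\bu$-update, hence hold only for $k\geq 2$.

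With the deterministic lemma in hand, I would supply the high-probability noise bounds under Assumption~\ref{assump:subGaussNoise}. The operator-norm term is controlled uniformly through $\sup_{\bu\in\mathbb{S}^{N-1}}\|\cE_x\times_3\bu\|_{\op} = \|\cE_x\|_{\op}\lesssim\sigma(\sqrt{p}+\sqrt{N})$ via the semi-symmetric sub-Gaussian tensor operator-norm bound, giving the $\sqrt{p+N}/\sigma_{r_x}(\bD_x^*)$ and $\sqrt{q+N}/\sigma_{r_y}(\bD_y^*)$ rates. The joint-factor noise requires a uniform bound on $\|\lambda[\cE_x;\bV,\bD_x]+(1-\lambda)[\cE_y;\bW,\bD_y]\|_2$ over $\bV,\bW$ in their Stiefel manifolds and over the admissible range of the data-dependent diagonal weights, obtained by an $\varepsilon$-net over the Stiefel manifolds combined with the range bound from the $\bD$-update analysis; this yields the numerator $\lambda\sqrt{r_x(p+N)}\|\bD_x^*\|_F + (1-\lambda)\sqrt{r_y(q+N)}\|\bD_y^*\|_F$ and the $1-CN^{-c}$ failure probability. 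The spectral-initialization guarantee then parallels Proposition~\ref{thm:init}: the leading singular vector of $[\lambda\mathcal{M}_3(\cX),(1-\lambda)\mathcal{M}_3(\cY)]$ concentrates around $\bu^*$ by Davis--Kahan, with the integrated signal governed by $\|\bD_x^*\|_F,\|\bD_y^*\|_F$ and the noise by the operator norm of the stacked noise matricization, so that Assumption~\ref{assump:dJisst_SNR1} forces the initial angle into the region required by the deterministic lemma. Plugging these bounds into the contraction then delivers the stated rates.

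The \emph{hard part} will be the uniform control of the weighted trace-product noise when the diagonal weights $\bD_x,\bD_y$ are themselves estimated and enter the functional multiplicatively. Unlike the scalar case, where the weights are fixed and the integrated noise collapses to the clean quantity $\|\lambda\cE_x;(1-\lambda)\cE_y\|_{r_x,r_y,\op}$, here I must decouple the randomness in $\bD^{(t+1)}$ from the concentration of the trace product and, simultaneously, keep the two distinct signal scalings --- $\|\bD^*\|_F$ for the $\bu$-step and $\sigma_{r_x}(\bD_x^*),\sigma_{r_y}(\bD_y^*)$ for the network steps --- interlocked so the recursion still contracts. Maintaining consistency between these two SNR regimes through the net argument while preserving the sign-recovery of $\bD^*$ is the principal technical burden.
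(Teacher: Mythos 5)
Your overall architecture (deterministic contraction $\to$ spectral initialization $\to$ sub-Gaussian operator-norm bounds) matches the paper's, and your treatment of the network-factor step via Davis--Kahan with gap $\sigma_{r_x}(\bD_x^*)$ and of the initialization via the stacked matricization is essentially what the paper does. But there is a genuine gap in your plan for the joint-factor step. You propose to show that $\bD_x^{(t+1)}$ concentrates entrywise around $\bD_x^*$ and that the \emph{signs} of its estimated eigenvalues match, and you present this sign recovery as the mechanism that keeps the trace-product signal close to $\|\bD_x^*\|_F^2$. Any such per-eigenvalue argument requires the smallest singular value $\sigma_{r_x}(\bD_x^*)$ (and really the eigengaps of $\bD_x^*$) to dominate the noise --- i.e.\ it requires Assumption~\ref{assump:dJisst_SNR2} or stronger. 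The theorem, however, asserts the $\bu^{(k)}$ bound under Assumption~\ref{assump:dJisst_SNR1} alone, which controls only $\|\bD_x^*\|_F$; indeed the paper's Remark on rank misspecification emphasizes that \emph{no} minimum-eigenvalue condition is needed for the joint factor. Your route would therefore prove a strictly weaker statement.

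The paper's missing idea is to never analyze $\bD_x^{(t+1)}$ entrywise at all: since $\bV^{(1)}\bD_x^{(1)}\bV^{(1)\prime}$ is the best rank-$r_x$ approximation of $\cX\times_3\bu^{(0)} = \langle\bu^*,\bu^{(0)}\rangle\bV^*\bD_x^*\bV^{*\prime} + \cE_x\times_3\bu^{(0)}$, the elementary perturbation lemma $\|\bX_r-\bX^*\|_F\leq 2\sqrt{2r}\|\bE\|_{\op}$ (Lemma~\ref{lem:perturb_frobenius_err}) gives $\|\bV^{(1)}\bD_x^{(1)}\bV^{(1)\prime} - \langle\bu^*,\bu^{(0)}\rangle\bV^*\bD_x^*\bV^{*\prime}\|_F\leq 2\sqrt{2r_x}\|\cE_x\|_{\op}$, which by Cauchy--Schwarz lower-bounds the trace product by $\tfrac12\langle\bu^*,\bu^{(0)}\rangle\|\bD_x^*\|_F^2$ using only the Frobenius-scale SNR. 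Relatedly, your proposed $\varepsilon$-net over the Stiefel manifolds and over the range of the random weights $\bD^{(t+1)}$ is unnecessary: the paper bounds the weighted trace-product noise deterministically as $\|\sum_i(\bD_x^{(1)})_{ii}\,\cE_x\times_1\bv_i^{(1)}\times_2\bv_i^{(1)}\|_2\leq\sqrt{r_x}\|\bD_x^{(1)}\|_F\|\cE_x\|_{\op}$ and then controls $\|\bD_x^{(1)}\|_F$ by the same Frobenius perturbation bound, so the only probabilistic inputs are the operator norms of $\cE_x,\cE_y$ and the initialization concentration. You should replace the sign-recovery step with this best-rank-$r$ approximation argument; the rest of your outline then goes through.
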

\noindent Theorem \ref{thm:converg_D} extends the one-step convergence guarantee in Theorem \ref{thm:main_subgauss} to the general model and the Generalized JisstPCA algorithm, where the only change lies in the characterization of the SNR. 
As discussed earlier, the SNR for the joint factor depends on $\|\bD_x^*\|_F$ and $\|\bD_y^*\|_F$, while the SNR for the individual factors depend on $\sigma_{r_x}(\bD_x^*)$ and $\sigma_{r_y}(\bD_y^*)$.
As a comparison, the SNR in prior theoretical results for HOOI and Tucker low-rank tensor PCA \citep{zhang2018tensor} is characterized by the minimum singular value along all modes, which translates to $\sigma_{r_x}(\bD_x^*)$ or $\sigma_{r_y}(\bD_y^*)$ in our setting. Different from prior results, we are able to better separate the estimation guarantees of joint and individual factors since our G-JisstPCA algorithm makes use of a more compact decomposition \eqref{gen_model} than the Tucker decomposition.
\begin{rmk}[Rank misspecification]
    Since no minimum eigenvalue condition on $\bD^*_x$ or $\bD^*_y$ is needed to accurately estimate the joint factor $\bu^*$, using larger ranks $r_x$, $r_y$ in Algorithm \ref{single_diag} does not affect the estimation of $\bu^*$. 
\end{rmk}
\section{Simulation Studies}
\label{sec:simulation}
In this section, we empirically study our JisstPCA algorithms by (i) validating our theoretical guarantees and (ii) comparing the performance of our methods with baseline methods. The code necessary for reproducing our empirical results is available at \href{https://github.com/JmL130169/JisstPCA}{https://github.com/JmL130169/JisstPCA}.
\subsection{Validation of Theoretical Guarantees}
\label{val_thm}
We first validate our main theory (Theorem \ref{thm:main_subgauss} and \ref{thm:converg_D}): one-step convergence and statistical error rates, for the JisstPCA model \eqref{eq:sst_d} and the generalized JisstPCA model \eqref{gen_model}, under the single-factor ($K=1$) case. For both models, we consider tensor dimensions $p=q=\frac{1}{2}N\in\{60,\, 90,\, 120\}$, ranks $r_x = 3$, $r_y = 2$. The ground truth tensor factors $\bV^*$, $\bW^*$, and $\bu^*$ are randomly generated with independent standard Gaussian entries and then orthogonalized. 
Each slice of the noise tensors ($(\cE_x)_{:,;,k}$, $(\cE_y)_{:,:,k}$, $1\leq k\leq N$) is symmetric mean-zero Gaussian with off-diagonal entries of variance $1$ and diagonal entries of variance $2$. 
For JisstPCA model, we let $d_y^* = 1.2d_x^*$, and the values of $d_x^*$ are set such that SNR = $\frac{d_x^*}{\sqrt{p}+\sqrt{N}}$ take values from $1.5$ to $24$. For the G-JisstPCA model, we consider the same set of $d_x^*$, $d_y^*$ but let $\boldsymbol{D}_x^* = d_x^*\mathrm{diag}(1.5, 1, 0.8)$, $\boldsymbol{D}_y^* = d_y^*\mathrm{diag}(1, 0.8)$. Given the noisy observations $\cX,\,\cY$, we run the JisstPCA and Generalized JisstPCA (G-JisstPCA) algorithms with $K=1$, $\lambda=0.5$, oracle ranks, and the spectral initialization \eqref{init}. Figure \ref{fig:validation} reports the spectral norm $\sin\Theta$ errors of the three estimated factors ($|\sin\theta(\bu^{(k)},\bu^*)|$, $\|\sin\Theta(\bV^{(k)},\bV^*)\|_{\op}$, $\|\sin\Theta(\bW^{(k)},\bW^*)\|_{\op}$). The first and third panels plot the estimation errors versus iteration number when the SNR is 1.5, and we can see all three factors converge to the statistical errors after one step update; The second and fourth panels plot the estimation errors of the first update versus its theoretical scaling ($1/\mathrm{SNR}$), validating our theory as well. 

\begin{figure}[!htb]
    \centering
    \includegraphics[width = \textwidth]{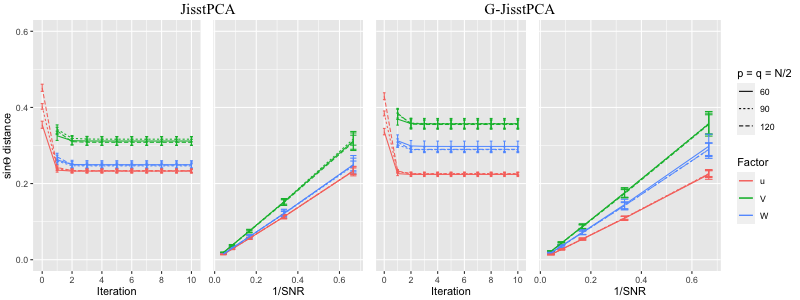}
    \caption{\small Empirical validation of theoretical results for JisstPCA (left) and G-JisstPCA (right). The first and third figures confirm that the $\sin\Theta$ distances of $\bu^{(k)}$, $\bV^{(k)}$, and $\bW^{(k)}$ converge to the statistical error after one or two step iterations, even under a relatively small SNR = 2.25. The second and fourth figures demonstrate linear dependence of the statistical error on 1/SNR, validating our theory. 20 independent replicates are run, and we plot the mean errors with error bars representing $95\%$ approximate normal confidence interval.}
    \label{fig:validation}
\end{figure}
\subsection{Comparative Study}
\label{comp_stu}

We seek to study the comparative advantages of our JisstPCA and G-JisstPCA algorithms across various settings. However, notice that there are no existing methods for integrative semi-symmetric tensor PCA that are natural baselines for comparison. Hence, we develop simple integrative extensions of the classical higher-order SVD (HOSVD) and higher-order orthogonal iteration (HOOI) algorithms as our comparison baselines; we call these ``iHOSVD" and ``iHOOI", respectively. We choose these baselines as the Tucker tensor decomposition model naturally leads to symmetric network factors; existing integrative CP-based approaches \citep{acar2011all} do not offer symmetric factors and are hence incomparable to our approach.  We provide detailed iHOSVD and iHOOI algorithms in the Appendix.  In all our simulations to avoid rank selection for the iHOSVD and iHOOI methods, we employ these approaches with oracle ranks.  Our JisstPCA and G-JisstPCA algorithms, on the other hand, are employed with both oracle ranks and data-driven selection of ranks via our BIC approach described in Section~\ref{sec:practice} and \ref{sec:EmpiricalDetails} of the Appendix.

We first consider data generated from our own models, \eqref{eq:sst_d} and \eqref{gen_model}. We focus on $K=2$ factors, ranks $\boldsymbol{r}_x = \boldsymbol{r}_y = (3,2)'$, $p=150$, $q=50$, and $N=50$. (The Appendix contains additional simulations varying the dimensionality $p,\,q$, and $N$.)  
Noise tensors are generated as previously described.  To simulate different scenarios, the ground truth factors are set as unstructured (randomly generated entries as previously described) or structured (factors of block and star networks as shown in the top panel of Figure~\ref{fig:structuredExample}; more details included in Section~\ref{sec:EmpiricalDetails} of the Appendix); we also leave these factors as non-orthogonal across $k=1,\,2$ or enforce mutual orthogonality along each mode. For non-orthogonal settings, we set $\boldsymbol{d}_x^* = \mathrm{SNR} * (\sqrt{p}+\sqrt{N})(1,\, 0.5)'$, $\boldsymbol{d}_y^* = \mathrm{SNR} * (\sqrt{q}+\sqrt{N})(1,\, 0.5)'$. In the orthogonal setting, we study the effect of singular gap between different factors by letting $\boldsymbol{d}_x^* = \mathrm{SNR} * (\sqrt{p}+\sqrt{N})(1,\, 1)'$, $\boldsymbol{d}_y^* = \mathrm{SNR} * (\sqrt{q}+\sqrt{N})(1,\, 0.9)'$. While under the generalized model \eqref{gen_model}, we let $\bD^*_{x,1} = d_{x,1}^*\mathrm{diag}(2,\,1.5,\,1.2)$, $\bD^*_{x,2} = d_{x,2}^*\mathrm{diag}(2,\,1.6)$, and $\bD^*_{y,1}$, $\bD^*_{y,2}$ similarly defined, where $\boldsymbol{d}_x^*$, $\boldsymbol{d}_y^*$ are set as previously described.

Figure \ref{fig:comp_main} summarizes the $\sin\Theta$ distance errors of the estimated factors by our JisstPCA, G-JisstPCA algorithms (with BIC selected ranks and subtraction deflation), and the baselines iHOSVD and iHOOI (with oracle ranks) under different SNR values. Both iHOSVD and iHOOI enforce orthogonality and hence suffer from biases for non-orthogonal factors. In the orthogonal case, iHOSVD/iHOOI can fail due to the lack of singular gap, while our approach relies on the Frobenious norm gap between  different factors and continues to perform well. (For fair comparison, we also include results with larger singular gaps in Figure \ref{fig:orthogonal2_BIC} in the Appendix, where JisstPCA/G-JisstPCA has similar performance as iHOOI.) For the generalized model, JisstPCA has the same performance as G-JisstPCA, showing some robustness against a slight model misspecification. Of course, when the differences between eigenvalues within the same factor are larger, JisstPCA can fail and be worse than G-JisstPCA (as shown in Figure \ref{fig:generalModel_BIC} in the Appendix). In summary, our methods always outperform the baselines, except for a few settings with small SNR due to incorrect rank selections. When all methods use oracle ranks, our methods always give the lowest errors (as shown in Figures \ref{fig:Main_oracle} in the Appendix). Furthermore, Figure \ref{fig:structuredExample} visualizes the estimated factors of JisstPCA and iHOOI in the structured simulation along with the true factors, showing that iHOOI tends to mix the two graph components because of the rotations possible with the Tucker tensor core as well as the forced orthogonality. Additional simulation details, more empirical results including JisstPCA with projection deflation, and further discussion is available in the Appendix. 

\begin{figure}[!htb]
    \centering
    \includegraphics[width = \textwidth]{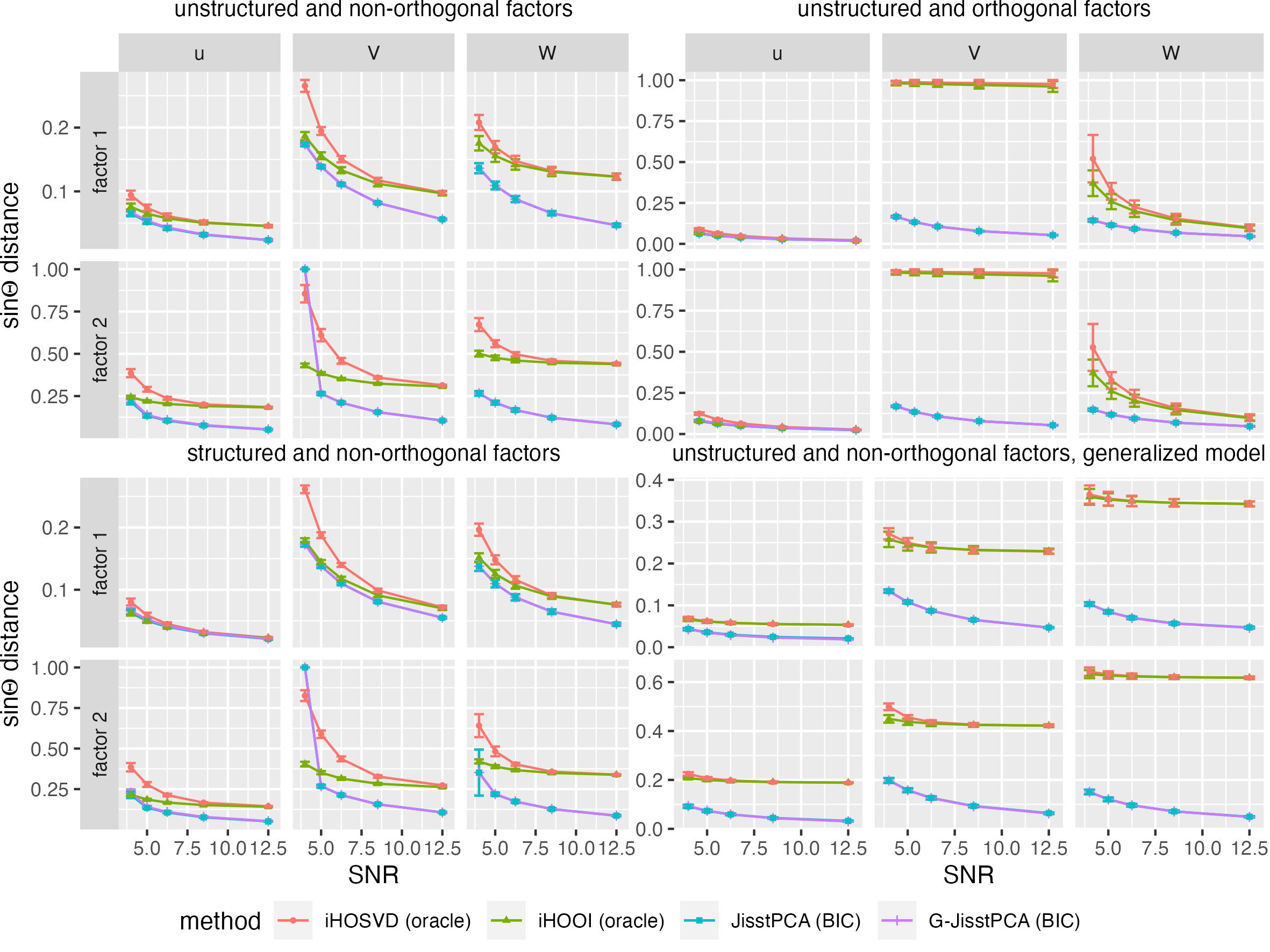}
    \caption{\small Estimation errors ($\sin\Theta$ distances in spectral norm) of all factors using JisstPCA, Generalized JisstPCA, iHOOI, and iHOSVD. The mean errors of $10$ replicates are plotted with error bars. Our methods strongly outperform the baseline methods, even with data-driven rank tuning (BIC).}
    \label{fig:comp_main}
\end{figure}

Finally, we seek to test the appropriateness of both our model as well as our algorithms on multi-modal population network data (binary adjacency tensors); this also serves to test the robustness of our approaches.  Specifically, we divide the $N$ samples randomly into two clusters ($K=2$), with probability $0.75$ and $0.25$, respectively. Each cluster is associated with a pair of stochastic block models (SBMs): $(\mathcal{M}_{k,x},\,\mathcal{M}_{k,y})$ for $k=1,\,2$. For any sample $i$ from cluster $k$, its associated networks $\cX_{:,:,i}$ and $\cY_{:,:,i}$ are adjacency matrices sampled from $\mathcal{M}_{k,x}$ and $\mathcal{M}_{k,y}$, respectively. We set $\mathcal{M}_{1,x}$ and $\mathcal{M}_{1,y}$ as three-block SBMs and $\mathcal{M}_{2,x}$ and $\mathcal{M}_{2,y}$ as two-block SBMs, with the within-block edge probabilities ranging in $(0.5,0.8)$ and out-of-block probability $0.3$.  Thus, these are truly low-rank networks with the rank the number of communities. Given the population of multi-modal adjacency matrices, we test how well our JisstPCA algorithms can extract the population factors and the underlying pairs of network components, which can be further used to detect the population clusters and community structures in each network component by applying k-means.
Note that with populations of adjacency matrices, however, the top singular spaces all share the all one's vector; hence the top singular vectors across different network factors are not linearly independent. Thus, we project the rows and columns of each adjacency matrix onto the complement of the all one's vector before applying all methods; we suggest to follow this procedure in practice for populations of networks represented as adjacency matrices. Noting that the population membership factors are mutually orthogonal while the network factors are not, we apply our JisstPCA algorithms with partial projection deflation on the population mode. We also select the ranks for JisstPCA using BIC-deflation, while applying iHOSVD, and iHOOI with the oracle ranks (the number of stochastic blocks minus one). We then apply k-means on the estimated population factors to cluster the samples, as well as on each network factor to cluster the nodes in the network. Table \ref{tab:network_cluster1} shows the clustering accuracy (Adjusted Rand Index) based on the extracted factors from all methods with different sample sizes, when $p=80$, $q=50$. We also report the $\sin\Theta$ distance errors of each factor; we also perform the same projection for the edge probability tensors as for the adjacency tensors and then extract the ground truth factors. Here, we use relatively smaller dimensions $p,\,q,\,N$ since larger dimensions increase the SNR in SBMs and make the task too easy. Our methods strongly outperform the Tucker model based approaches in all scenarios, hence highlighting the advantanges of our modeling framework and algorithms for analyzing and detecting clusters in real network data. Additional results with the subtraction deflation and different network sizes, more details on the model set-up, and visualizations of the true and estimated network factors can be found in the Appendix. 

    \begin{figure}[!htb]
        \centering
        \includegraphics[width = 15cm]{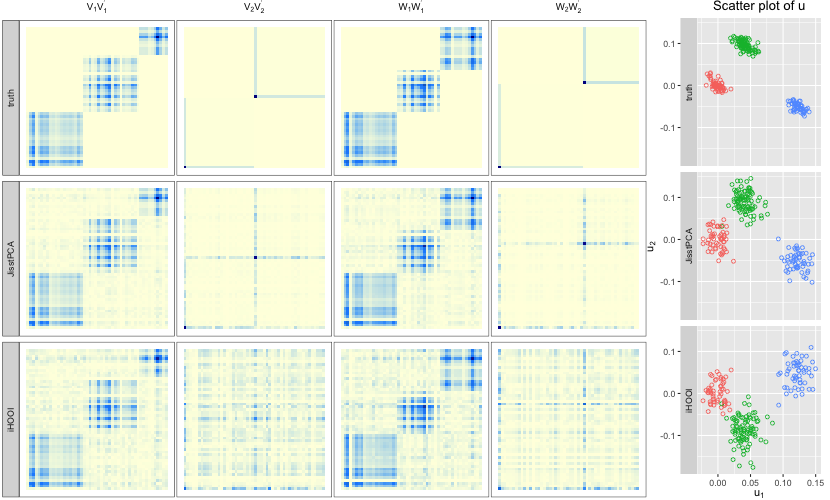}
        \caption{\small Heatmaps and scatterplots for the structured network factors and population factors, respectively.  The truth is shown in the top panel, results of JisstPCA (with BIC rank selection) in the middle panel, and iHOOI (with oracle rank) in the bottom panel.}
        \label{fig:structuredExample}
    \end{figure}

\begin{table}[!htb]
\centering
\caption{\small Population clustering and network community detection for multi-modal populations of networks, based on k-means on the estimated factors from JisstPCA, G-JisstPCA, iHOSVD, and iHOOI. The presented Adjusted Rand Index (ARI) values demonstrate the accuracy of sample clustering and node clustering of two network factors for each modality, when network sizes $p=80$, $q=50$. The $\sin\Theta$ estimation errors of each factor is also presented. The average ARI and $\sin\Theta$ distances of 20 independent repeats are presented, with standard deviation inside the parenthesis. The largest average ARI and lowest estimation error for each setting are marked in bold.}\label{tab:network_cluster1}
\centering
\scalebox{0.65}{
\begin{tabular}{c|c|c|c|c|c|c|c|c}
\hline
\multirow{2}{*}{Clustering}&\multicolumn{4}{|c|}{$N=20$}&\multicolumn{4}{|c}{$N=40$}\\
\cline{2-9}
\multirow{2}{*}{ARI}&{\bf JisstPCA} &{\bf G-JisstPCA}&{\bf iHOSVD}&{\bf iHOOI}&{\bf JisstPCA} &{\bf G-JisstPCA}&{\bf iHOSVD}&{\bf iHOOI}\\
&{\bf (BIC)} &{\bf (BIC)}&{\bf (oracle)}&{\bf (oracle)}&{\bf (BIC)} &{\bf (BIC)}&{\bf (oracle)}&{\bf (oracle)}\\
\hline
Sample&0.947(0.238) & \textbf{1}(0) & \textbf{1}(0) & 0.954(0.205) & \textbf{1}(0) & \textbf{1}(0) & \textbf{1}(0) & \textbf{1}(0)\\
Network 1 of $\cX$&0.971(0.108) & \textbf{1}(0) & 0.912(0.195) & 0.739(0.251) & \textbf{1}(0) & \textbf{1}(0) & 0.805(0.241) & 0.663(0.224)\\
Network 2 of $\cX$&0.995(0.022) & \textbf{0.997}(0.011) & 0.146(0.057) & 0.139(0.043) & \textbf{1}(0) & \textbf{1}(0) & 0.156(0.015) & 0.153(0)\\
Network 1 of $\cY$&0.974(0.114) & \textbf{1}(0) & 0.99(0.027) & 0.94(0.155) & \textbf{1}(0) & \textbf{1}(0) & 0.973(0.108) & \textbf{1}(0)\\
Network 2 of $\cY$&\textbf{0.87}(0.31) & 0.867(0.309) & 0.014(0.046) & 0.121(0.16) & \textbf{1}(0) & \textbf{1}(0) & 0.116(0.101) & 0.246(0.153)\\
\hline
\hline
$\sin\theta(\hat{\bu}_1,\bu_1^*)$&0.087(0.03) & 0.089(0.031) & 0.138(0.051) & \textbf{0.08}(0.026) & 0.092(0.016) & 0.093(0.017) & 0.142(0.028) & \textbf{0.083}(0.016)\\
$\sin\theta(\hat{\bu}_2,\bu_2^*)$&0.167(0.077) & \textbf{0.163}(0.066) & 0.214(0.04) & 0.188(0.162) & 0.152(0.014) & 0.152(0.014) & 0.197(0.017) & \textbf{0.145}(0.015)\\
$\|\sin\Theta(\hat{\bV}_1,\bV_1^*)\|_{\op}$&\textbf{0.084}(0.007) & \textbf{0.084}(0.007) & 0.163(0.06) & 0.163(0.059) & \textbf{0.062}(0.006) & \textbf{0.062}(0.007) & 0.154(0.046) & 0.156(0.047)\\
$\|\sin\Theta(\hat{\bV}_2,\bV_2^*)\|_{\op}$&\textbf{0.21}(0.074) & 0.211(0.072) & 0.817(0.065) & 0.799(0.07) & \textbf{0.154}(0.024) & 0.155(0.023) & 0.776(0.022) & 0.768(0.028)\\
$\|\sin\Theta(\hat{\bW}_1,\bW_1^*)\|_{\op}$&\textbf{0.158}(0.018) & \textbf{0.158}(0.018) & 0.253(0.056) & 0.196(0.058) & \textbf{0.118}(0.014) & \textbf{0.118}(0.014) & 0.202(0.026) & 0.173(0.04)\\
$\|\sin\Theta(\hat{\bW}_2,\bW_2^*)\|_{\op}$&\textbf{0.416}(0.204) & 0.417(0.203) & 0.969(0.044) & 0.903(0.098) & \textbf{0.272}(0.045) & 0.273(0.045) & 0.901(0.062) & 0.771(0.079)\\
\bottomrule
\end{tabular}}
\end{table}

\section{Case Study: Multi-Modal Population Brain Connectivity}
\label{real_data}

We apply our proposed Generalized JisstPCA method to understand multi-modal and multi-subject brain connectivity patterns estimated from neuroimaging data.  We analyze data from the Human Connectome Project (HCP), which can be easily accessed through the ConnectomeDB website and contains various traits, structural MRI (sMRI), functional MRI (fMRI), and diffusion MRI (dMRI) data  for 1058 subjects \citep{glasser2013minimal}.  Our objective is to understand major joint patterns in functional connectivity (FC) and structural connectivity (SC) as well as to see how these patterns vary and are related to other traits across the population.  
We process the data by applying the population-based connectome (PSC) extraction pipeline \citep{zhang2018mapping} to construct the SC. Using the Desikan-Killiany atlas, we identify 68 cortical regions of interest (ROIs) and 19 subcortical ROIs, totaling 87 ROI nodes in our networks. The number of fiber curves between a pair of regions, measured at the logarithmic scale, is used to quantify the connection strength, or edge in our networks. Thus, the final dimension of our population SC tensor is \( 87 \times 87 \times 1058 \), (ROIs by ROIs by subjects). We use the same atlas to extract FC, the computation of which is straightforward since the HCP provides preprocessed fMRI \citep{glasser2013minimal}. Specifically, we compute the mean fMRI time series for each ROI and calculate the Pearson correlation between different ROIs to generate a full FC matrix for each subject; the final population FC tensor has the same dimension as the SC tensor. Figure \ref{fig:mean_sc_fc} in the Appendix shows the mean SC and FC weighted adjacency matrices, where the first 19 rows are the subcortical ROIs and the next 68 are the cortical ROIs; Supplement II is an Excel spreadsheet showing the ROI names. In addition, we also seek to understand how brain connectivity patterns relate to 45 cognitive traits which measure aspects such as fluid intelligence, delay discounting, and language/vocabulary comprehension.

Due to the complexity of human brain networks, we employ Generalized JisstPCA to jointly analyze the SC and FC tensors; we also compare our approach to PCA methods in Section~\ref{supp:realdata} of the Appendix. All hyperparameters are chosen as discussed in Section~\ref{sec:practice} of the Appendix, with the number of factors, $K$ chosen via the proportion of variance explained.  This yields $K=2$ factors which explain \( 85.5\% \) of the variance in the FC data and \( 77.9\% \) in the SC data; increasing $K$ further does not significantly increase the variance explained.  Further, the BIC deflation strategy selects SC and FC network loadings of rank 5 for both sets of factors.  The results from G-JisstPCA are visualized in Figure~\ref{fig:hcp_result_set1} (a), which shows a scatter plot of the estimated population components, $\boldsymbol{\hat{u}}_1$ and $\boldsymbol{\hat{u}}_2$. Figure~\ref{fig:hcp_result_loading_circle} shows circle plots of the top connections from the estimated SC and FC network loadings, $\boldsymbol{\hat{V}}_1 \boldsymbol{\hat{D}}_{SC,1} \boldsymbol{\hat{V}}_1^{\prime}$, $\boldsymbol{\hat{V}}_2 \boldsymbol{\hat{D}}_{SC,2}\boldsymbol{\hat{V}}_2^{\prime}$ and $\boldsymbol{\hat{W}}_1 \boldsymbol{\hat{D}}_{FC,1} \boldsymbol{\hat{W}}_1^{\prime}$,  $\boldsymbol{\hat{W}}_2 \boldsymbol{\hat{D}}_{FC,2}\boldsymbol{\hat{W}}_2^{\prime}$; estimated weighted adjacency matrices are shown in Figure~\ref{fig:hcp_result_set2} of the Appendix.

\begin{figure}[!htb]
    \centering
    \includegraphics[width=\textwidth]{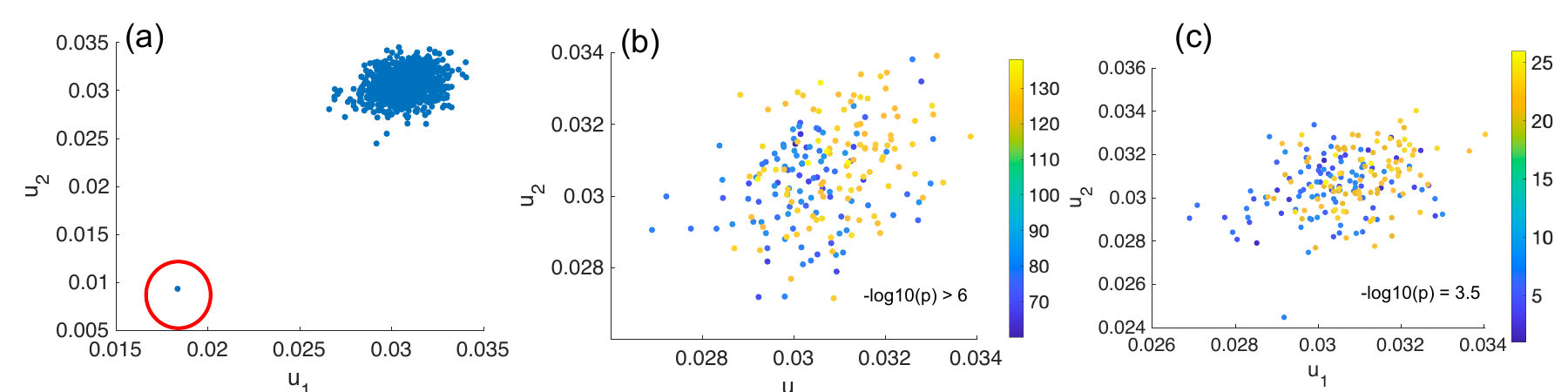}
    \caption{\small Results of Generalized JisstPCA applied to the HCP connectome data analyzing multi-subject Functional Connectivity (FC) and Structural Connectivity (SC). Panel (a) shows the scatter plot of the first two estimated population components, \( \boldsymbol{\hat{u}}_{1} \) and \( \boldsymbol{\hat{u}}_{2} \), exhibiting a major outlier.  Panels (b) and (c) show scatter plots of 200 subjects colored according to their measures on the English Reading and Penn Line Orientation tests, respectively, showing a clear statistically significant association.  }
    \label{fig:hcp_result_set1}
\end{figure}

From Figure~\ref{fig:hcp_result_set1} (a), we identify a clear outlier (marked in the red circle). Upon checking the intermediate outputs of PSC, we find that this subject has a much sparser SC due to misalignment between sMRI and dMRI. This demonstrates a potential application of G-JisstPCA - outlier brain network identification, an important problem in brain network analysis \citep{dey2022outlier}. Next, we explore the relationship between the joint factors obtained by JisstPCA and cognitive traits. In (b) and (c) of Figure~\ref{fig:hcp_result_set1}, we plot \( \boldsymbol{u}^{1} \) and \( \boldsymbol{u}^{2} \) for 200 subjects and color them according to their measures on the English Reading and Penn Line Orientation tests, respectively. For panel (b), the 200 subjects are selected based on their English Reading scores; the first 100 subjects have the highest scores and the second 100 have the lowest scores. The p-value testing the distribution difference between these two groups is displayed in the lower right corner. Similarly, 200 subjects are selected based on the Penn Line Orientation test for panel (c). The small p-values indicate that both SC and FC are significantly associated with the two traits under consideration. 

In Figure~\ref{fig:gjisstpca_correlation} of the Appendix, we correlate \( \boldsymbol{u}^{*}_1 \) and \( \boldsymbol{u}^{*}_2 \) with the 45 cognitive traits. From this result, we observe that 1) \( \boldsymbol{u}^{*}_1 \) correlates better with behavioral traits than does \( \boldsymbol{u}^{*}_2 \), and 2) most behavioral traits show a decent amount of correlation with the joint factors. We also examine how well we can predict the traits using both \( \boldsymbol{u}^{*}_1 \) and \( \boldsymbol{u}^{*}_2 \), and compared the prediction with principal components analysis (PCA). Figure \ref{fig:hcp_result_set2} of the Appendix shows the results, where we can see that the joint factors obtained by G-JisstPCA give much better prediction results, indicating that the joint components from SC and FC are more closely related to cognitive behavior traits.

\begin{figure}[!htb]
    \centering
    \includegraphics[width=0.8\textwidth]{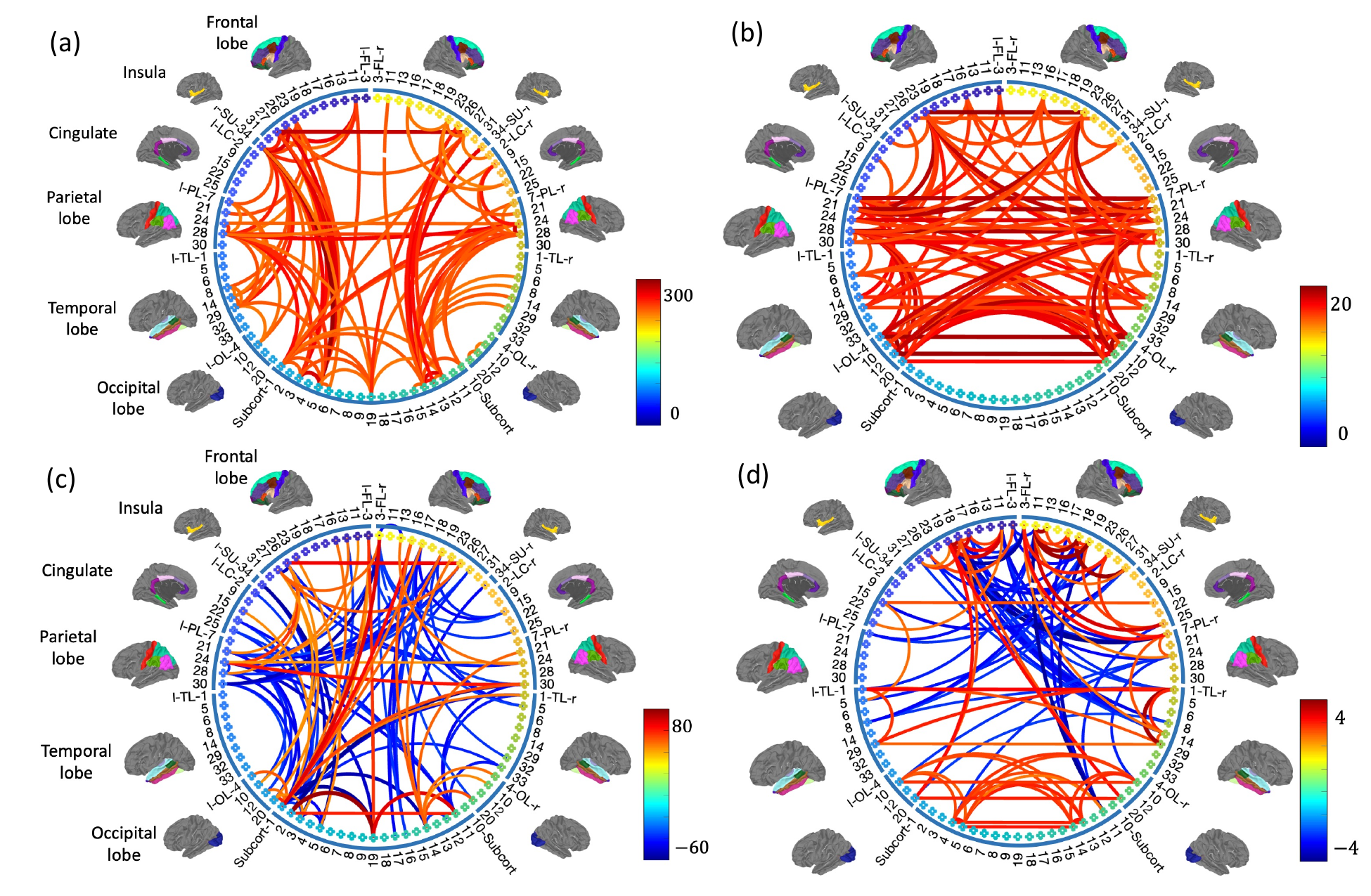}
    \caption{\small
    Visualizations of the estimated brain network loadings for Generalized JisstPCA applied to the HCP data.  Panels (a) and (b) show the first loadings while panels (c) and (d) show the second loads for Structural Connectivity (SC), panels (a) and (c), and Functional Connectivity (FC), panels (b) and (d).  These reveal many expected inter- and intra-hemisphere connections as well as major connectome variations that have associations with cognitive traits.}
\label{fig:hcp_result_loading_circle}
\end{figure}

\section{Discussion}
\label{discussion}
In this paper, we establish the first dimension reduction framework for the joint analysis of multi-modal populations of networks. Specifically, we proposed a novel joint integrative semi-symmetric tensor PCA (JisstPCA) model and associated algorithms to extract both the shared population factors across different modalities as well as low-rank network factors for each modality. We prove the convergence and statistical error bounds of our algorithms under the single-factor model, which improves or is comparable to prior results for single tensor PCA. Finally, a series of simulation studies validate the efficacy of our JisstPCA algorithm and its extensions; it also reveals intriguing structures in the human brain when applied to a real neuroimaging data example.

The joint network tensor PCA is a new problem with many potential fruitful future directions, a few of which we list as follows. First, it is challenging but interesting to extend our current theoretical results to the more general multi-factor model. Second, it is of interest to address common real challenges like missing data and heteroskedastic noise, by leveraging the recent advances in PCA. 
Third, when applied to neuroimaging data, our methods may be extended to incorporate brain connectome data subject to different parcellation. Lastly, there is great potential for applying our methods to integrate genomics data with brain connectomes, possibly revealing genetic effects on brain development. In summary, our work pioneers the analysis of multi-modal populations of networks that paves the way for many future advances.
\section*{Acknowledgments}
JL, LZ, and GIA acknowledge support from NSF NeuroNex-1707400, NIH 1R01GM140468, and NSF DMS-2210837. ZZ acknowledges support from NIH award R25DA058940.
\clearpage
\appendix
\section{Extensions, Additional Details, and Results}
\label{app1}
\subsection{Detailed Literature Review}
Both tensor algebra \citep{kolda2009tensor} and topics about tensors, or multiway arrays, have been studied comprehensively during the recent years. Many efficient tensor PCA algorithms under different tensor low-rank structures have been proposed in the literature with strong theoretical guarantees \citep{han2022tensor,zhang2018tensor,luo2021sharp,zhou2022optimal}, some even exploiting sparsity \citep{allen2012sparse,zhang2019optimal}. Beyond tensor PCA or tensor SVD, there is also rich literature on tensor completion \citep{yuan2016tensor,cai2019nonconvex,xia2021statistically} and tensor regression \citep{raskutti2019convex,hao2020sparse,zhang2020islet}. 

In terms of semi-symmetric tensor PCA, aside from \cite{weylandt2022multivariate}, some other prior works \citep{zhang2019tensor,winter2020multi,wang2014canonical} also studied this topic for analyzing brain connectomes and magnetic resonance spectroscopy data. In particular, \cite{zhang2019tensor} first proposes to study brain connectomes using a semi-symmetric tensor PCA approach, which is further extended by \cite{winter2020multi} to jointly analyze multi-scale graphs from different brain parcellations. \cite{wang2014canonical} considers a semi-symmetric and semi-nonnegative decomposition approach to perform Independent Component Analysis for magnetic resonance spectroscopy data. A comparison between existing modeling approaches and ours is included in Section 1.2.
\cite{jing2021community} considers a semi-symmetric Tucker low-rank model of multilayer networks, with a focus on community detection. We note that the phrase ``semi-symmetric tensor" is also used by \cite{deng2023correlation} to denote fourth-order tensors with two pairs of symmetric modes, different from the third-order tensors we are considering.

There also exist an extensive literature on data integration, which aims to find joint patterns across multiple sources of data. Most existing data integration methods focus on tabular data that can be arranged into matrices, including the JIVE \citep{lock2013joint} that decompose multiple data sets into sum of joint and individual principal components, the iPCA \citep{tang2021integrated} built upon the matrix-variate normal model, the multi-block PCA family \citep{abdi2013multiple,westerhuis1998analysis} that applies regular PCA on concatenated data sets after normalization, and many others. Extended from matrix integration problem, \cite{acar2011all, acar2014flexible, wu2018ctf, schenker2020flexible} consider joint factorization for tensors and tensor-matrix integration, based on the CP decomposition structure, as we mentioned in Section 1.2. In particular, \cite{acar2011all} solves coupled matrix and tensor factorization (CMTF) problem based on gradient methods, and \cite{acar2014flexible} extends this to a more general version that incorporate additional linear or nonlinear constraints in CMTF problem; also see some other extensions in \cite{wu2018ctf,schenker2020flexible}. As for tensor-tensor type of integration, \cite{genicot2016coupled} proposes a joint tensor factorization method RCTF that can extract shared and unshared factors as well as robust components between integrated tensors. And \cite{farias2016exploring} uses Bayesian framework to define flexible coupling models and uncovers joint factors in terms of joint MAP estimators. In addition, some other algorithms, such as CIF-OPT and HOPM, have also been proposed to deal with joint tensor factorization problem or tensor canonical correlation analysis in different scenarios \citep{lu2020exploring, chen2021tensor}.

Lastly, there also exist many prior works devoted to collectively or integratively analyzing multiple networks and extracting interpretable knowledge from them. In particular, one line of literature focuses on the analysis of multiplex networks \citep{mucha2010community,paul2020spectral,macdonald2022latent} where one has the access to a collection of networks associated with the same set of nodes, such as networks measured across a population \citep{wang2019common,paul2020random,pavlovic2020multi} or different time points \citep{mucha2010community,kim2018review}. Many existing works approach this problem by assuming a latent space model with a joint component and individual components that capture the heterogeneity across networks; under each specific model, estimation methods and statistical guarantees for identifying the latent structures are provided \citep{paul2020spectral,macdonald2022latent,zhang2020flexible}. On the other hand, another line of work is concerned with the integrative analysis of multi-modal networks, such as the functional and structural brain connectivity networks based off the fMRI and sMRI data \citep{sui2012review,yao2015review,cole2021surface}. Different from these prior works, we aim to propose a novel statistical method to jointly analyze populations of multimodal networks, extracting meaningful insights both across the population and linking the functional and structural connectivity of the brain.

\subsection{Relationship between the JisstPCA Model and the Tucker and CP Models}
Now we give a detailed correspondence between our model and the Tucker/CP low-rank models. For simplicity, we will focus on the single tensor case, while it is straightforward to extend the model connections from a single tensor to joint factorization of multiple tensors. 

Recall our multi-factor semi-symmetric tensor decomposition $\mathcal{X} = \sum_{k = 1}^{K} d^*_{x,k} \cdot \boldsymbol{V}_{k}^*\boldsymbol{V}^{*\prime}_{k} \circ \boldsymbol{u}_{k}^* \in \mathbb{R}^{p \times p \times N}+\cE_x$. When the factors are mutually orthogonal: $\bV_i^*\perp \bV_j^*$, $\bu_i^*\perp \bu_j^*$ for $1\leq i\neq j\leq K$, we can also write $\cX$ as the following low-rank Tucker decomposition plus noise: $\mathcal{X} = \mathcal{S}^* \times_{1} \boldsymbol{U}_{1}^* \times_{2} \boldsymbol{U}_{2}^* \times_{3} \boldsymbol{U}_{3}^* + \cE_x$. Here, \begin{align*}
    \boldsymbol{U}_{1}^* = \boldsymbol{U}_{2}^* = \left[\boldsymbol{V}_{1}^*, \cdots, \boldsymbol{V}_{K}^*\right] \in \mathbb{R}^{p \times r},\quad \boldsymbol{U}_{3}^* = \left[\boldsymbol{u}_{1}^*, \cdots, \boldsymbol{u}_{k}^*\right] \in 
    \mathbb{R}^{N \times K},
\end{align*}
where $r = \sum r_{k}$ is the sum of ranks of each $\boldsymbol{V}_{k}^*$, $k = 1, \cdots, K$. The core tensor $\mathcal{S}\mathbb{R}^{r \times r \times K}$ satisfies
\begin{align*}
    \mathcal{S}^*_{ijk} = d_{x,k}^* \cdot \mathbbm{1}\left\{\sum_{l = 1}^{k-1} r_{l} + 1 \leq i = j \leq \sum_{l = 1}^{k} r_{l} \right\}.
\end{align*}
That is, the $k$ slice of the core tensor $\mathcal{S}^*$ is a diagonal matrix with only $r_k$ non-zero diagonal entries. 

In addition, we can also write $\cX$ under the CP decomposition model $\mathcal{X} = \sum\limits_{i = 1}^{r} \lambda_{i} \cdot \boldsymbol{a}_{i} \circ \boldsymbol{b}_{i} \circ \boldsymbol{c}_{i} + \cE_x$ with $r=\sum_{k=1}^Kr_k$. If we further denote $\boldsymbol{\lambda} = (\lambda_{1}, \cdots, \lambda_{r})$, $\boldsymbol{A} = [\boldsymbol{a}_{1}, \cdots, \boldsymbol{a}_{r}], \boldsymbol{B} = [\boldsymbol{b}_{1}, \cdots, \boldsymbol{b}_{r}]$ and $\boldsymbol{C} = [\boldsymbol{c}_{1}, \cdots, \boldsymbol{c}_{r}]$, they would satisfy the following:
\begin{align*}
    &\boldsymbol{\lambda} = (\underbrace{d_{x,1}, \cdots, d_{x,1}}_{r_1}, \underbrace{d_{x,2}, \cdots, d_{x,2}}_{r_2}, \cdots, \underbrace{d_{x,k}, \cdots, d_{x,k}}_{r_k}) \in \mathbb{R}^{r},\\
    &\boldsymbol{A} = \boldsymbol{B} = \left[\boldsymbol{V}_{1}, \cdots, \boldsymbol{V}_{K}\right] \in \mathbb{R}^{p \times r},\\
    &\boldsymbol{U} = [\underbrace{u_1, \cdots, u_1}_{r_1}, \underbrace{u_2, \cdots, u_2}_{r_2}, \cdots, \underbrace{u_k, \cdots, u_k}_{r_k}] \in \mathbb{R}^{N \times r}.
\end{align*}
In summary, both the CP and Tucker low-rank models are closely related to our model; however, as discussed in Section 1.2, they both add additional, undesirable constraints (orthogonality or incoherence) to the factors and hence fall short for our network modeling purposes.

\subsection{Additional Notations}
Here, we provide additional details of some notations briefly introduced in Section 2.1. Suppose $\mathcal{X} \in \mathbb{R}^{p_1 \times p_2 \times p_3}$ is a general third-order tensor, for a matrix $\boldsymbol{U} \in \mathbb{R}^{p_1 \times r_1}$, the (marginal) multiplication $\times_1$ of tensor and matrix is $\mathcal{X} \times_1 \boldsymbol{U}\in \bbR^{r_1\times p_2\times p_3}$ satisfying:
\begin{equation*}
     \left(\mathcal{X} \times_1 \boldsymbol{U}\right)_{i,j,k} = \sum\limits_{l = 1}^{p_1} \mathcal{X}_{ljk}\boldsymbol{U}_{il}.
\end{equation*}
 And $\times_2$, $\times_3$ can be defined similarly. The matricization of $\mathcal{X}$ by the $k$th-mode $\mathcal{M}_{k}(\mathcal{X})$ is a $p_k \times p_{-k}$ matrix, where $p_{-k} = \prod\limits_{i \neq k} p_{i}$. Elementwisely,  $\mathcal{M}_{k}(\mathcal{X})$ can be written as 
 \begin{equation*}
     \left[\mathcal{M}_{1}(\mathcal{X})\right]_{i, (k-1)p_2+j} = \mathcal{X}_{ijk},\, \left[\mathcal{M}_{2}(\mathcal{X})\right]_{j, (i-1)p_3+k} = \mathcal{X}_{ijk},\, \left[\mathcal{M}_{3}(\mathcal{X})\right]_{k, (j-1)p_1+i} = \mathcal{X}_{ijk},
 \end{equation*} 
 where $1 \leq i \leq p_1,\, 1 \leq j \leq p_2,\, 1 \leq k \leq p_3$. For two tensors of the same order and dimension, $\mathcal{X}, \mathcal{Y} \in \mathbb{R}^{p_1 \times p_2 \times p_3}$, the inner product of tensors is $\langle \mathcal{X}, \mathcal{Y} \rangle = \sum\limits_{i = 1}^{p_1}\sum\limits_{j = 1}^{p_2}\sum\limits_{k = 1}^{p_3} \mathcal{X}_{ijk} \mathcal{Y}_{ijk}.$ And tensor Frobenius norm is the square root of the inner product with itself, i.e. $\|\mathcal{X}\|_{F} = \sqrt{\langle \mathcal{X}, \mathcal{X} \rangle} = \sqrt{\sum\limits_{i = 1}^{p_1}\sum\limits_{j = 1}^{p_2}\sum\limits_{k = 1}^{p_3} \mathcal{X}_{ijk}^{2}}.$
We follow the definition of sub-Gaussian random variables in \citep{wainwright2019high} and say that $X$ is sub-Gaussian-$\sigma$ if and only if $\bbE[e^{\lambda (X-\bbE X)}]\leq e^{\lambda^2\sigma^2/2}$ for all $\lambda\in \bbR$.

\subsection{Additional Algorithms \& Details}\label{sec:morealg}
In this section, we provide the detailed additional algorithms mentioned in the main paper, including the multi-factor JisstPCA (with and without BIC-based rank selection), different deflation schemes one can apply, matrix-tensor JisstPCA, selection strategies for the number of factors $K$ and weight parameter $\lambda\in (0,1)$, and our comparison baselines iHOOI and iHOSVD (integrated versions of HOOI and HOSVD).
\subsubsection{Multi-factor PCA}
\begin{algorithm}
\caption{Multi-factor JisstPCA with Subtraction Deflation and Prespecified Ranks}\label{alg:multi-Jisst}
\begin{itemize}
\item Input: $\mathcal{X}, \mathcal{Y}$, number of factors $K$, $\boldsymbol{r}_x,\,\boldsymbol{r}_y\in \bbR^{K}$, and maximum iteartion $t_{\max}$ 
\item Initialization: Let $k = 1$, $\mathcal{X}^{1} = \mathcal{X}$, and $\mathcal{Y}^{1} = \mathcal{Y}$.
\item While $k \leq K$:
\begin{itemize}
    \item Let $\lambda = \frac{\|\cX^{(k)}\|_F}{\|\cX^{(k)}\|_F+\|\cY^{(k)}\|_F}$.
    \item Apply Single-Factor JisstPCA (Algorithm \ref{alg:single_tt}) on $\cX^{(k)}$, $\cY^{(k)}$, with ranks $r_{x,k},\, r_{y,k}$, $\lambda$, and maximum iteration $t_{\max}$ to obtain $\hat{d}_{x,k}, \hat{d}_{y,k}, \hat{\boldsymbol{V}}_{k}, \hat{\boldsymbol{W}}_{k}, \hat{\boldsymbol{u}}_{k}$. 
    \item Apply subtract deflation to obtain $\mathcal{X}^{k+1}, \mathcal{Y}^{k+1}$ as
    \begin{align*}
        & \mathcal{X}^{k+1} = \mathcal{X}^{k} - \hat{d}^x_{k} \cdot \hat{\boldsymbol{V}}_{k} \hat{\boldsymbol{V}}_{k}^{\prime} \circ \hat{\boldsymbol{u}}_{k}\\
        & \mathcal{Y}^{k+1} = \mathcal{Y}^{k} - \hat{d}^y_{k} \cdot \hat{\boldsymbol{W}}_{k} \hat{\boldsymbol{W}}_{k}^{\prime} \circ \hat{\boldsymbol{u}}_{k}.
    \end{align*}
    \item $k = k + 1$.
\end{itemize}
\item \textbf{return} $\{\hat{\bu}_k,\,\hat{\bV}_k,\,\hat{\bW}_k, \hat{d}_{x,k},\,\hat{d}_{y,k}\}_{k=1}^K$.
\end{itemize}
\end{algorithm}

The subtraction deflation \eqref{eq:sub_def} is computationally efficient and imposes no extra constraint on different factors, such as orthogonality. However, in some cases, orthogonality on certain modes might be desirable due to prior belief or for interpretation purposes. To accommodate for this, one might consider the projection deflation \citep{mackey2008deflation} where the tensor data is projected onto the orthogonal complement of the space spanned by previously estimated factors. In particular, when orthogonality is required for both the joint factor ($\bu_i\perp\bu_j$ for $i\neq j$) and individual factors ($\bV_i\perp\bV_j$, $\bW_i\perp\bW_j$, for $i\neq j$), we let 
\begin{equation}\label{eq:proj_def}
\begin{aligned}
    & \mathcal{X}^{k+1} = \mathcal{X}^{k} \times_{1} \left(\boldsymbol{I}_{p}-\hat{\boldsymbol{V}}_{k}\hat{\boldsymbol{V}}_{k}^{\prime} \right) \times_2 \left(\boldsymbol{I}_{p}-\hat{\boldsymbol{V}}_{k}\hat{\boldsymbol{V}}_{k}^{\prime} \right) \times_3 \left(\boldsymbol{I}_{N}-\hat{\boldsymbol{u}}_{k}\hat{\boldsymbol{u}}_{k}^{\prime} \right)\\
    & \mathcal{Y}^{k+1} = \mathcal{Y}^{k} \times_{1} \left(\boldsymbol{I}_{q}-\hat{\boldsymbol{W}}_{k}\hat{\boldsymbol{W}}_{k}^{\prime} \right) \times_2 \left(\boldsymbol{I}_{q}-\hat{\boldsymbol{W}}_{k}\hat{\boldsymbol{W}}_{k}^{\prime} \right) \times_3 \left(\boldsymbol{I}_{N}-\hat{\boldsymbol{u}}_{k}\hat{\boldsymbol{u}}_{k}^{\prime} \right).
\end{aligned}
\end{equation}
Furthermore, one can also enforce pairwise orthogonality only for the joint factors or for the individual factors through a ``partial projection deflation" scheme. 
\begin{equation}\label{eq:proj_def_u_supp}
\begin{aligned}
    & \mathcal{X}^{k+1} = \left(\mathcal{X}^{k} - \hat{d}_{x,k} \cdot \hat{\boldsymbol{V}}_{k}\hat{\boldsymbol{V}}_{k}^{\prime} \circ \hat{\boldsymbol{u}}_{k}\right)\times_3 \left(\boldsymbol{I}_{N}-\hat{\boldsymbol{u}}_{k}\hat{\boldsymbol{u}}_{k}^{\prime} \right)\\
    & \mathcal{Y}^{k+1} = \left(\mathcal{Y}^{k} - \hat{d}_{y,k} \cdot \hat{\boldsymbol{W}}_{k}\hat{\boldsymbol{W}}_{k}^{\prime} \circ \hat{\boldsymbol{u}}_{k}\right) \times_3 \left(\boldsymbol{I}_{N}-\hat{\boldsymbol{u}}_{k}\hat{\boldsymbol{u}}_{k}^{\prime} \right);
\end{aligned}
\end{equation}
Or, when only the individual factors need to be mutually orthogonal, we can similarly let 
\begin{equation}\label{eq:proj_def_vw}
\begin{aligned}
    & \mathcal{X}^{k+1} = \left(\mathcal{X}^{k} - \hat{d}_{x,k} \cdot \hat{\boldsymbol{V}}_{k}\hat{\boldsymbol{V}}_{k}^{\prime} \circ \hat{\boldsymbol{u}}_{k}\right) \times_{1} \left(\boldsymbol{I}_{p}-\hat{\boldsymbol{V}}_{k}\hat{\boldsymbol{V}}_{k}^{\prime} \right) \times_2 \left(\boldsymbol{I}_{p}-\hat{\boldsymbol{V}}_{k}\hat{\boldsymbol{V}}_{k}^{\prime} \right)\\
    & \mathcal{Y}^{k+1} = \left(\mathcal{Y}^{k} - \hat{d}_{y,k} \cdot \hat{\boldsymbol{W}}_{k}\hat{\boldsymbol{W}}_{k}^{\prime} \circ \hat{\boldsymbol{u}}_{k}\right) \times_{1} \left(\boldsymbol{I}_{q}-\hat{\boldsymbol{W}}_{k}\hat{\boldsymbol{W}}_{k}^{\prime} \right) \times_2 \left(\boldsymbol{I}_{q}-\hat{\boldsymbol{W}}_{k}\hat{\boldsymbol{W}}_{k}^{\prime} \right).
\end{aligned}
\end{equation}
\begin{rmk}
    Applying multi-factor JisstPCA with projection deflation \eqref{eq:proj_def} yield mutually orthogonal factors: $\hat{\bu}_i\perp \hat{\bu}_j$ for $i\neq j$; when using the partial projection deflation \eqref{eq:proj_def_u} or \eqref{eq:proj_def_vw}, the factors with projection would be mutually orthogonal. 
\end{rmk}
\subsubsection{Selection of Hyperparameters}\label{sec:practice}
Our model includes a number of hyperparameters that must be specified or tuned in practice: the number of factors $K$, the ranks of each factor $r_{x,k}$, $r_{y,k}$, $1\leq k\leq K$, and the integrative scaling parameter $\lambda$. 
First, there are many widely employed approaches to estimate the number of PCA factors, $K$ \citep{wold1978cross,jolliffe2016principal,dobriban2019deterministic,donoho2023screenot}; these methods are also applicable for JisstPCA. Noting that $\bbE\mathcal{M}_{3}(\mathcal{X}) = \mathcal{M}_{3}(\mathcal{X}^*)$ and $\mathrm{rank}(\mathcal{M}_{3}(\mathcal{X}^*)) = K$, one possible way to estimate $K$ is to apply usual PCA-type methods on $\mathcal{M}_{3}(\mathcal{X})$. 
Another possible approach to estimate $K$ is based on the cumulative proportion of variance explained by the factors. Note that the proportion of variance explained by each single factor is given by $d_{x,k}^2 / \| \mathcal{X} \|_{F}^{2}$ for $\mathcal{X}$ and $d_{y,k}^{2} / \| \mathcal{Y} \|_{F}^{2}$ for $\mathcal{Y}$.  But as the components are not orthogonal across factors (unless projection deflation is employed), we cannot simply add the variance explained by each factor to get the cumulative variance explained \citep{allen2012regularized}.  Instead, we must calculate this by projecting out the effect of all components from the $k$ factors: 
\begin{rmk}\label{prop_var_exp}
Let $\mathbf{P}_{k}^{(\mathbf{U})} = \mathbf{U}_{k} (\mathbf{U}_{k}^{\prime} \mathbf{U}_{k} )^{-1} \mathbf{U}_{k}^{\prime}$, with $\mathbf{U}_{k} = [ \mathbf{u}_1, \ldots, \mathbf{u}_{k} ]$, and let $\mathbf{P}_{k}^{(\mathbf{V})} = \tilde{\mathbf{V}}_{k} (\tilde{\mathbf{V}}_{k}^{\prime} \tilde{\mathbf{V}}_{k} )^{-1} \tilde{\mathbf{V}}_{k}^{\prime}$, with $\tilde{\mathbf{V}}_{k} = [\mathbf{V}_{1}, \ldots, \mathbf{V}_{k}]$; define $\mathbf{P}_{k}^{(\mathbf{W})}$ and $\tilde{\mathbf{W}}_{k}$ analogously.  Then, the cumulative proportion of variance explained by the first $k$ JisstPCA factors in $\mathcal{X}$ and $\mathcal{Y}$ is given by $\frac{\| \mathcal{X} \times_{1} \mathbf{P}_{k}^{(\mathbf{V})} \times_2 \mathbf{P}_{k}^{(\mathbf{V})} \times_3  \mathbf{P}_{k}^{(\mathbf{U})} \|_{F}^{2}}{\| \mathcal{X}\|_{F}^{2}}$ and $\frac{\| \mathcal{Y} \times_{1} \mathbf{P}_{k}^{(\mathbf{W})} \times_2 \mathbf{P}_{k}^{(\mathbf{W})} \times_3  \mathbf{P}_{k}^{(\mathbf{U})} \|_{F}^{2}}{\| \mathcal{Y}\|_{F}^{2}}$ respectively.  
\end{rmk}

Second, $\lambda$ is important when $\mathcal{X}$ and $\mathcal{Y}$ have different scales. To study the effect of $\lambda$, we generate data from two-factor JisstPCA models with different SNR levels and apply JisstPCA with a range of $\lambda$. The detailed set-up is the same as the unstructured simulation in Section \ref{sec:simulation} in the main paper, where the SNR of both tensors $\cX$ and $\cY$ are the same. The factor estimation errors are summarized in Figure \ref{fig:lambda_selection}, suggesting that when both modalities have similar SNR levels, $\lambda = \dfrac{\|\mathcal{X}^*\|_{F}}{\|\mathcal{X}^*\|_{F} + \|\mathcal{Y}^*\|_{F}}$ is likely the optimal choice (red line). Without the knowledge of the true tensor norms, we suggest using $\lambda = \dfrac{\|\mathcal{X}\|_{F}}{\|\mathcal{X}\|_{F} + \|\mathcal{Y}\|_{F}}$ as a surrogate (blue line); this is also our default selection of $\lambda$ throughout the empirical studies in this paper. We can also see from Figure \ref{fig:lambda_selection} that JisstPCA is not sensitive to the choice of $\lambda$ as long as we do not use extreme value. when both modalities have similar noise levels, we suggest using $\lambda = \dfrac{\|\mathcal{X}\|_{F}}{\|\mathcal{X}\|_{F} + \|\mathcal{Y}\|_{F}}$; we employ this scheme in all our empirical studies.  

\begin{figure}
    \centering
    \includegraphics[width = \textwidth]{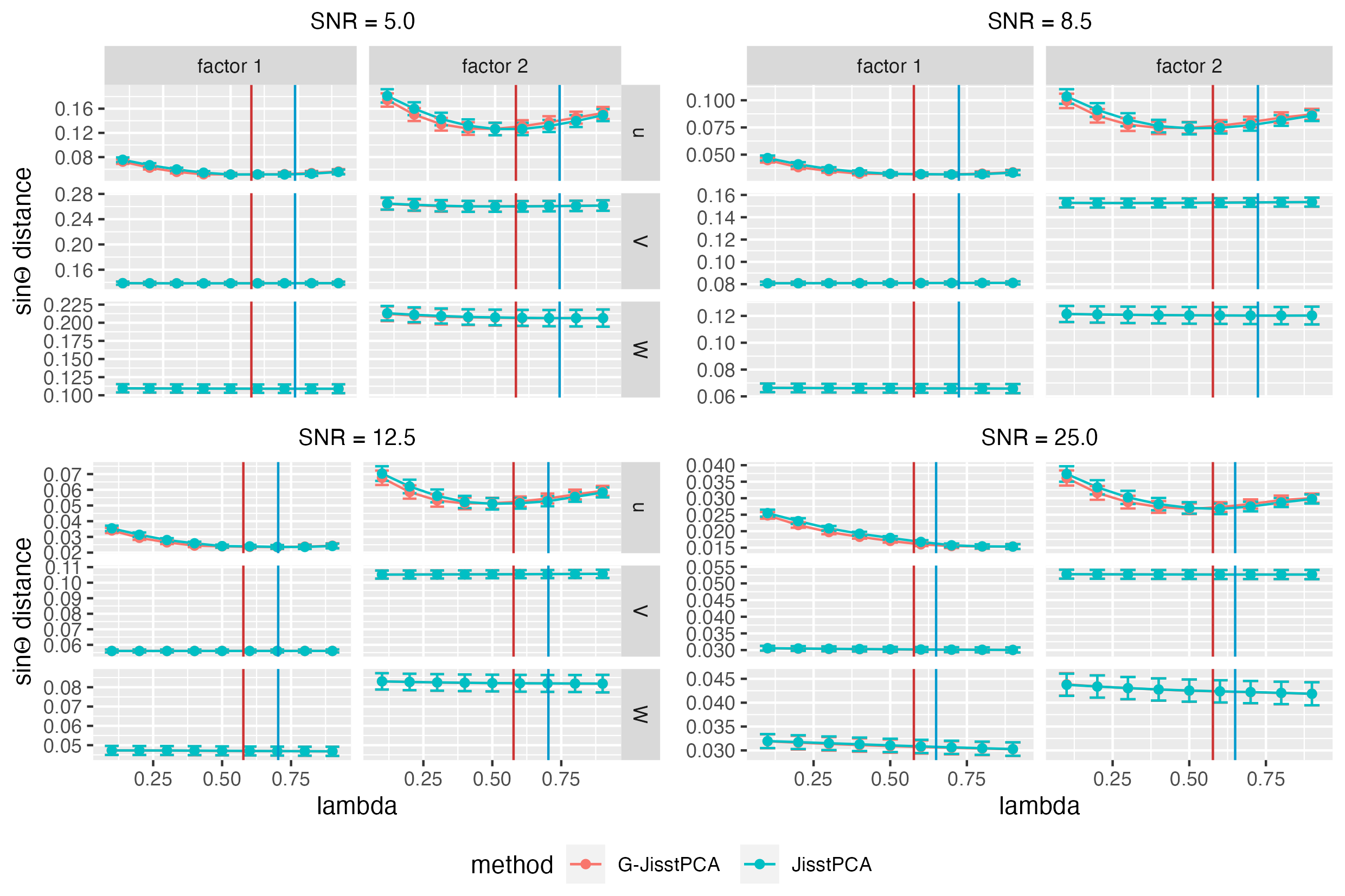}
    \caption{The effect of $\lambda$ values on the estimation errors ($\sin\Theta$ distances in spectral norm) of all factors using JisstPCA and G-JisstPCA. The red vertical line is the conjectured oracle $\lambda = \frac{\|\cX^*\|_F}{\|\cX^*\|_F+\|\cY^*\|_F}$ when the noise level is the same, and the blue vertical line is its estimate: $\frac{\|\cX\|_F}{\|\cX\|_F+\|\cY\|_F}$. The estimated $\lambda$ is closer to the oracle $\lambda$ when SNR increases; in addition, the performance of (G-)JisstPCA seems robust when changing $\lambda$.}
    \label{fig:lambda_selection}
\end{figure}

Finally, selecting the ranks of each factor is more complicated and this choice has more influence on the results. In the multi-factor ($K>1$) case, we need to select $2K$ total ranks. Many prior works approach this problem by finding the minimizer of the Bayesian information criterion (BIC) \citep{allen2012sparse, zhou2022optimal, hu2022generalized}. However, directly minimizing the BIC for all $2K$ ranks simultaneously is a combinatoral problem and computationally infeasible for large $K$. Instead, and inspired by \citep{allen2012sparse}, we propose a ``single-factor BIC + deflation" scheme, called ``BIC deflation".  This greedy approach successively applies single-factor BIC to select the rank for each factor, and then single-factor JisstPCA with the selected rank and deflation to get residual data for extracting the next factor.

In particular, given data $\cX,\,\cY$ and potential single-factor ranks $r_{x},\,r_{y}$, we first apply single-factor JisstPCA to obtain tensor estimates $\hat{\mathcal{X}}(r_x),\,\hat{\mathcal{Y}}(r_y)$, and then compute the BIC for this rank combination as 
\begin{equation}\label{eq:bic}
\begin{aligned}
    \mathrm{BIC}(r_{x}, r_{y}) 
    &=p^2N\log{\|\mathcal{X}-\hat{\mathcal{X}}(r_{x})\|_F^2+ q^2N\log\|\mathcal{Y}-\hat{\mathcal{Y}}}(r_{y})\|_F^2\\
    &\quad+(pr_{x}+qr_{y})\log((p^{2}+q^{2})N)+C(p,q,N).
\end{aligned}
\end{equation}
For a grid of potential rank combinations, we select $(r_x,r_y)$ that minimizes $\mathrm{BIC}(r_{x}, r_{y})$ for the current factor. The detailed ``BIC deflation" scheme is summarized in Algorithm \ref{bic_def}.
\begin{algorithm}
\caption{Multi-factor JisstPCA with Subtraction Deflation and BIC-selected Ranks}\label{bic_def}
\begin{itemize}
\item Input: $\mathcal{X}, \mathcal{Y}$, and number of factors $K$, maximum ranks $r_{x,\max}\leq p$, $r_{y,\max}\leq q$, and maximum iteration $t_{\max}$.
\item Initialization: Let $k = 1$, $\mathcal{X}^{1} = \mathcal{X}$, and $\mathcal{Y}^{1} = \mathcal{Y}$.
\item While $k \leq K$:
\begin{itemize}
    \item Let $\lambda = \frac{\|\cX^{(k)}\|_F}{\|\cX^{(k)}\|_F+\|\cY^{(k)}\|_F}$.
    \item Select rank for factor $k$ via BIC: \\For $i = 1, \cdots, r_{x,\max}$ and $j = 1, \cdots, r_{y,\max}$:
    \begin{itemize}
        \item Apply single-factor JisstPCA (Algorithm \ref{alg:single_tt}) with ranks $(i,j)$, $\lambda$, and maximum iteration $t_{\max}$.
        \item Calculate BIC($i, j$) in equation \eqref{eq:bic} for $\mathcal{X}^{k}, \mathcal{Y}^{k}$.
    \end{itemize}
    Let $(\hat{r}_{x,k}, \hat{r}_{y,k}) = \underset{i \in \{1, \cdots, r_{x,\max}\}, j \in \{1, \cdots, r_{y,\max}\}}{\arg\min} \text{BIC}(i, j)$, and save the output of single-factor JisstPCA with ranks $(\hat{r}_{x,k}, \hat{r}_{y,k})$ as $\hat{d}_{x,k}, \hat{d}_{y,k}, \hat{\boldsymbol{V}}_{k}, \hat{\boldsymbol{W}}_{k}, \hat{\boldsymbol{u}}_{k}$. 
    \item Apply subtract deflation to obtain $\mathcal{X}^{k+1}, \mathcal{Y}^{k+1}$ as
    \begin{align*}
        & \mathcal{X}^{k+1} = \mathcal{X}^{k} - \hat{d}^x_{k} \cdot \hat{\boldsymbol{V}}_{k} \hat{\boldsymbol{V}}_{k}^{\prime} \circ \hat{\boldsymbol{u}}_{k}\\
        & \mathcal{Y}^{k+1} = \mathcal{Y}^{k} - \hat{d}^y_{k} \cdot \hat{\boldsymbol{W}}_{k} \hat{\boldsymbol{W}}_{k}^{\prime} \circ \hat{\boldsymbol{u}}_{k}.
    \end{align*}
    \item $k = k + 1$.
\end{itemize}
\item \textbf{return} $\hat{\boldsymbol{r}}^{x} = (\hat{r}_{x,1}, \cdots, \hat{r}_{x,K})$, $\hat{\boldsymbol{r}}_{y} = (\hat{r}_{y,1}, \cdots, \hat{r}_{y,K})$, $\{\hat{\bu}_k,\,\hat{\bV}_k,\,\hat{\bW}_k, \hat{d}_{x,k},\,\hat{d}_{y,k}\}_{k=1}^K$.
\end{itemize}
\end{algorithm}

\subsubsection{Extensions of JisstPCA}
Here, we provide more details on the matrix-tensor JisstPCA model, the generalized JisstPCA model, as well as their associated algorithms. We focus on the single-factor algorithms, while the multi-factor extensions of them can be similarly defined as our deflation-procedure in Algorithm \ref{alg:multi-Jisst}.

\paragraph{Matrix-Tensor JisstPCA}
One important extension of our JisstPCA framework is to jointly analyze vector-valued covariates, organized as a matrix, together with network data organized as tensors. In neuroimaging studies, for example, we may additionally observe feature vectors such as genomics, demographic, or behavioral traits for each subject. In this case, we would like to find joint population factors shared between the network data and vector-valued covariates, as well as individual low-rank factors for networks and for features separately. 
Specifically, suppose we observe a semi-symmetric tensor $\mathcal{X} \in \mathbb{R}^{p \times p \times N}$ representing a population of networks, and a matrix $\boldsymbol{Y} \in \mathbb{R}^{q \times N}$ representing features measured for the same population. We consider the following joint matrix-tensor PCA model: 
\begin{equation}\label{mt_xy}
\begin{aligned}
    \mathcal{X} = \sum\limits_{k = 1}^{K} d_{x,k} \cdot \boldsymbol{V}_{k}^*\boldsymbol{V}_{k}^{*\prime} \circ \boldsymbol{u}_{k}^* + \mathcal{E}_{x},\quad\boldsymbol{Y} = \sum\limits_{k = 1}^{K} d_{y,k} \cdot \boldsymbol{w}_{k}^* \boldsymbol{u}_{k}^{*'} + \boldsymbol{E}_{y},
\end{aligned}
\end{equation}
where $\cE_x\in \bbR^{p\times p\times N}$, $\bE_y\in \bbR^{p\times N}$ are observational noise, $\boldsymbol{u}_{k}$ is the population factor, and $\boldsymbol{w}_{k}^*\in \mathbb{S}^{q - 1}$ is the $k$th feature factor associated with the network factor $\bV_k^*$. 
By contrasting $\boldsymbol{w}_{k}^*$ with $\bV_k^*$, one can establish a connection between the features and the network, which in the neuroimaging example corresponds to how certain gene expression or human behavior is associated with the brain connectome. As in the JisstPCA model, we do not impose orthogonality constraints on $\boldsymbol{w}_{k}^*$'s. Although orthogonality is required for single matrix decomposition to ensure identifiability, it is not necessary for \eqref{mt_xy}, since matrix $\boldsymbol{Y}$ shares factors $\bu_k^*$ with tensor $\cX$. 

As discussed earlier in Section \ref{sec:relatedwork}, although there are many prior methods that jointly factorize matrices and tensors \citep{acar2011all, acar2014flexible, fu2015joint, wu2018ctf, schenker2020flexible}, they are often based on the CP decomposition, which could be less appropriate for our study of tensor networks. Under model \eqref{mt_xy}, in order to estimate factors $\bV_k^*$, $\bw_k^*$, and $\bu_k^*$ from noisy observations $\cX$, $\cY$, we consider a similar strategy to the JisstPCA algorithm, which consists of successive deflation and single-factor PCA based on joint power iteration. The detailed single-factor matrix-tensor JisstPCA algorithm is summarized in Algorithm \ref{single_mt}.

\begin{algorithm}
\caption{Single-factor Matrix-Tensor JisstPCA}\label{single_mt}
\begin{itemize}
    \item Input: $\mathcal{X}$, $\boldsymbol{Y}$, $r_x$, $\lambda$, and maximum iteration $t_{\max}$.
    \item Initialization: Let $t = 0$, and $\boldsymbol{u}^{(0)} = \text{Leading left singular vector of } \left[\lambda \mathcal{M}_{3}( \mathcal{X}), (1 - \lambda) \boldsymbol{Y}^{\prime} \right]$.
    \item \textbf{repeat} until $t = t_{\max}$ or convergence:
    \begin{align*}
        & \boldsymbol{V}^{(t+1)} = \text{Leading $r_x$ left singular vectors of } \mathcal{X} \times_3 \boldsymbol{u}^{(t)}\\
        & \boldsymbol{w}^{(t+1)} = \dfrac{\boldsymbol{Y}\boldsymbol{u}^{(t)}}{\|\boldsymbol{Y}\boldsymbol{u}^{(t)}\|_{2}}\\
        & \boldsymbol{u}^{(t+1)} = \dfrac{\lambda[\mathcal{X}; \boldsymbol{V}^{(t+1)}] + (1-\lambda) \boldsymbol{Y}^{\prime}\boldsymbol{w}^{(t+1)}}{\left\|\lambda[\mathcal{X}; \boldsymbol{V}^{(t+1)}] + (1-\lambda) \boldsymbol{Y}^{\prime}\boldsymbol{w}^{(t+1)}\right\|_{2}}\\
        & t = t+1
    \end{align*}
    \item \textbf{return} $\hat{\boldsymbol{u}} = \boldsymbol{u}^{(t)}$, $\hat{\boldsymbol{V}} = \boldsymbol{V}^{(t)}$, $\hat{\boldsymbol{w}} = \boldsymbol{w}^{(t)}$; $\hat{d}_x = \langle \mathcal{X}, \hat{\boldsymbol{V}}\hat{\boldsymbol{V}}^{\prime} \circ \hat{\boldsymbol{u}} \rangle / r_{x}$, $\hat{d}_y = \hat{\boldsymbol{w}}^{\prime} \boldsymbol{Y} \hat{\boldsymbol{u}}$; $\hat{\mathcal{X}} = \hat{d}_x \cdot \hat{\boldsymbol{V}} \hat{\boldsymbol{V}}^{\prime} \circ \hat{\boldsymbol{u}}$, $\hat{\boldsymbol{Y}} = \hat{d}_y \cdot \hat{\boldsymbol{w}} \circ \hat{\boldsymbol{u}}$.
\end{itemize}
\end{algorithm}
\paragraph{Generalized JisstPCA} In addition, we consider a generalized JisstPCA model to accommodate for potential different eigenvalues within each network factor. Specifically, we extend \eqref{eq:sst_d} to the following: 
\begin{equation}\label{gen_model_supp}
\begin{aligned}
     \mathcal{X} = \sum\limits_{k = 1}^{K} \boldsymbol{V}_{k}^*\boldsymbol{D}^{*}_{x,k} \boldsymbol{V}^{*\prime}_{k} \circ \boldsymbol{u}_{k}^* + \mathcal{E}_{x},\quad \mathcal{Y} = \sum\limits_{k = 1}^{K} \boldsymbol{W}_{k}^*\boldsymbol{D}^{*}_{y,k}\boldsymbol{W}^{*\prime}_{k} \circ \boldsymbol{u}_{k}^* + \mathcal{E}_{y}.
\end{aligned}
\end{equation}
where $\bD^*_{x,k}\in \bbR^{r_{x,k}\times r_{x,k}}$ and $\bD^*_{y,k}\in \bbR^{r_{y,k}\times r_{y,k}}$ are diagonal matrices, replacing the scalar eigenvalues $d^*_{x,k}$, $d^*_{y,k}$ in \eqref{eq:sst_d}.
To estimate factors in this general JisstPCA model, 
we still propose a power iteration algorithm for the single-factor case and apply a successive deflation scheme for multi-factor models. The main change we make to the single-factor JisstPCA algorithm is that we update the diagonal matrix $\bD_x$ and $\bD_y$ within each iteration due to its increased importance, and we use them weight the columns of $\bV$, $\bW$ when updating $\bu$. 
We term this new algorithm the generalized JisstPCA (G-JisstPCA) algorithm, whose detailed procedures are summarized in Algorithm \ref{single_diag}.
\begin{algorithm}
\caption{Generalized Single-Factor JisstPCA}\label{single_diag}
\begin{itemize}
\item Input: $\mathcal{X}$, $\mathcal{Y}$, $r_x$, $r_y$, $\lambda$, and maximum iteration $t_{\max}$.
\item Initialization: Let $t = 0$, and $\boldsymbol{u}^{(0)} = \text{Leading left singular vector of } \left[\lambda \mathcal{M}_{3}(\mathcal{X}), (1- \lambda) \mathcal{M}_{3}(\mathcal{Y})\right]$.
\item \textbf{repeat} until $t = t_{\max}$ or convergence:
\begin{align*}
    & \boldsymbol{V}^{(t+1)} = \text{Leading $r_x$ left singular vectors of } \mathcal{X} \times_3 \boldsymbol{u}^{(t)}\\
    & \boldsymbol{W}^{(t+1)} = \text{Leading $r_y$ left singular vectors of } \mathcal{Y} \times_3 \boldsymbol{u}^{(t)}\\
    & \boldsymbol{D}^{(t+1)}_x = (\boldsymbol{V}^{(t+1)})^{\prime} (\mathcal{X} \times_{3} \boldsymbol{u}^{(t)}) \boldsymbol{V}^{(t+1)} \\
    & \boldsymbol{D}^{(t+1)}_y = (\boldsymbol{W}^{(t+1)})^{\prime} (\mathcal{Y} \times_{3} \boldsymbol{u}^{(t)}) \boldsymbol{W}^{(t+1)}\\
    & \boldsymbol{u}^{(t+1)} = \dfrac{\lambda[\mathcal{X}; \boldsymbol{V}^{(t+1)}, \boldsymbol{D}^{(t+1)}_x] + (1-\lambda)[\mathcal{Y}; \boldsymbol{W}^{(t+1)}, \boldsymbol{D}^{(t+1)}_y]}{\left\|\lambda[\mathcal{X}; \boldsymbol{V}^{(t+1)}, \boldsymbol{D}^{,(t+1)}_x] + (1-\lambda)[\mathcal{Y}; \boldsymbol{W}^{(t+1)}, \boldsymbol{D}^{(t+1)}_y]\right\|_2}\\
    & t = t+1
\end{align*}
\item \textbf{return} $\hat{\boldsymbol{u}} = \boldsymbol{u}^{(t)}$, $\hat{\boldsymbol{V}} = \boldsymbol{V}^{(t)}$, $\hat{\boldsymbol{W}} = \boldsymbol{W}^{(t)}$; $\hat{\boldsymbol{D}}_{x} = \boldsymbol{D}^{(t)}_x$, $\hat{\boldsymbol{D}}_y = \boldsymbol{D}^{(t)}_y$; $\hat{\mathcal{X}} = \hat{\boldsymbol{V}} \hat{\boldsymbol{D}}_{x} \hat{\boldsymbol{V}}^{\prime} \circ \hat{\boldsymbol{u}}$, $\hat{\mathcal{Y}} = \hat{\boldsymbol{W}} \hat{\boldsymbol{D}}_{y} \hat{\boldsymbol{W}}^{\prime} \circ \hat{\boldsymbol{u}}$.
\end{itemize}
\end{algorithm}
\subsubsection{Comparison Baselines: iHOSVD and iHOOI}
To construct comparison baselines for our methods, we consider two straightforward extensions of HOSVD \citep{de2000multilinear} and HOOI \citep{de2000best}, power method for Tucker model, when we have integrated data. We refer to them as iHOSVD and iHOOI, which simply concatenate the matricized data along the third mode at each iteration. The detailed procedures of iHOSVD and iHOOI are summarized in Algorithms \ref{iHOSVD} and \ref{iHOOI}.

\begin{algorithm}[H]
\caption{Integrated High-Order SVD (iHOSVD)}\label{iHOSVD}
\begin{itemize}
    \item Input: $\mathcal{X} \in \mathbb{R}^{p \times p \times N}$, $\mathcal{Y} \in \mathbb{R}^{q \times q \times N}$, $(r_{x_{1}}, \cdots, r_{x_{K}})$, $(r_{y_{1}}, \cdots, r_{y_{K}})$.
    \item $\hat{\boldsymbol{V}} =$ The leading $\sum\limits_{k = 1}^{K} r^x_k$ singular vectors of $\mathcal{M}_{1}(\mathcal{X})$.
    \item $\hat{\boldsymbol{W}} =$ The leading $\sum\limits_{k = 1}^{K} r^y_k$ singular vectors of $\mathcal{M}_{1}(\mathcal{Y})$.
    \item $\hat{\boldsymbol{U}} =$ The leading $K$ singular vectors of the matrix $\left(\mathcal{M}_{3}(\mathcal{X}), \mathcal{M}_{3}(\mathcal{Y})\right)$.
    \item The estimation of Tucker core tensors are
    \begin{align*}
        & \hat{\mathcal{S}}_{x} = \mathcal{X} \times_{1} \hat{\boldsymbol{V}}^{\prime} \times_{2} \hat{\boldsymbol{V}}^{\prime} \times_{3} \hat{\boldsymbol{U}}^{\prime}\\
        & \hat{\mathcal{S}}_{y} = \mathcal{Y} \times_{1} \hat{\boldsymbol{W}}^{\prime} \times_{2} \hat{\boldsymbol{W}}^{\prime} \times_{3} \hat{\boldsymbol{U}}^{\prime}.
    \end{align*}
    And then the estimation of true parameter tensors are 
    \begin{align*}
        & \hat{\mathcal{X}} = \hat{\mathcal{S}}_{x} \times_{1} \hat{\boldsymbol{V}} \times_{2} \hat{\boldsymbol{V}} \times_{3} \hat{\boldsymbol{U}}\\
        & \hat{\mathcal{Y}} = \hat{\mathcal{S}}_{y} \times_{1} \hat{\boldsymbol{W}} \times_{2} \hat{\boldsymbol{W}} \times_{3} \hat{\boldsymbol{U}}.
    \end{align*}
    \item \textbf{return} $\hat{\boldsymbol{u}}_{i} = \hat{\boldsymbol{U}}_{\cdot, i}$, $\hat{\boldsymbol{V}}_{i} = \hat{\boldsymbol{V}}_{\cdot, 1+\sum_{k = 1}^{i-1} r^x_k: \sum_{k = 1}^{i} r^x_k}$, $\hat{\boldsymbol{W}}_{i} = \hat{\boldsymbol{W}}_{\cdot, 1+\sum_{k = 1}^{i-1} r^y_k: \sum_{k = 1}^{i} r^y_k}$ for $i = 1, \cdots, K$ as the corresponding estimated factors of multi-factor semi-symmetric tensor. And $\hat{\mathcal{X}} $, $\hat{\mathcal{Y}}$.
\end{itemize}
\end{algorithm}

\begin{algorithm}[H]
 \caption{Integrated High-Order Orthogonal Iteration (iHOOI)}\label{iHOOI}
 \begin{itemize}
     \item Input: $\mathcal{X} \in \mathbb{R}^{p \times p \times N}$, $\mathcal{Y} \in \mathbb{R}^{q \times q \times N}$, $(r_{x_{1}}, \cdots, r_{x_{K}})$, $(r_{y_{1}}, \cdots, r_{y_{K}})$, and maximum iteration number $k_{\max}$.
     \item Initialization: Let $k = 0$, and $\boldsymbol{u}^{(0)}, \boldsymbol{V}^{(0)}, \boldsymbol{W}^{(0)}$ are the outputs from iHOSVD (Algorithm \ref{iHOSVD}).
     \item \textbf{repeat} until $k = k_{max}$ or convergence:
     \begin{align*}
         & \boldsymbol{V}^{(k+1)} = \text{The leading $\sum\limits_{k = 1}^{K} r^x_k$ singular vectors of } \mathcal{M}_{1}(\mathcal{X} \times_{2} (\boldsymbol{V}^{(k)})^{\prime} \times_{3} (\boldsymbol{U}^{(k)})^{\prime})\\
         & \boldsymbol{W}^{(k+1)} = \text{The leading $\sum\limits_{k = 1}^{K} r^y_k$ singular vectors of } \mathcal{M}_{1}(\mathcal{Y} \times_{2} (\boldsymbol{W}^{(k)})^{\prime} \times_{3} (\boldsymbol{U}^{(k)})^{\prime})\\
         & \boldsymbol{U}^{(k+1)} = \text{The leading $K$ singular vectors of } \\&\quad\left(\mathcal{M}_{3}(\mathcal{X} \times_{1} (\boldsymbol{V}^{(k+1)})^{\prime} \times_{2} (\boldsymbol{V}^{(k+1)})^{\prime}), \mathcal{M}_{3}(\mathcal{Y} \times_{1} (\boldsymbol{W}^{(k+1)})^{\prime} \times_{2} (\boldsymbol{W}^{(k+1)})^{\prime})\right),
         k = k+1.
     \end{align*}
     \item Denote $\hat{\boldsymbol{U}} = \boldsymbol{U}^{(k)}$, $\hat{\boldsymbol{V}} = \boldsymbol{V}^{(k)}$, $\hat{\boldsymbol{W}} = \boldsymbol{W}^{(k)}$. Then the estimation of Tucker core tensors are
     \begin{align*}
         & \hat{\mathcal{S}}_{x} = \mathcal{X} \times_{1} \hat{\boldsymbol{V}}^{\prime} \times_{2} \hat{\boldsymbol{V}}^{\prime} \times_{3} \hat{\boldsymbol{U}}^{\prime}\\
        & \hat{\mathcal{S}}_{y} = \mathcal{Y} \times_{1} \hat{\boldsymbol{W}}^{\prime} \times_{2} \hat{\boldsymbol{W}}^{\prime} \times_{3} \hat{\boldsymbol{U}}^{\prime}.
     \end{align*}
     And then the estimation of true parameter tensors are 
     \begin{align*}
         & \hat{\mathcal{X}} = \hat{\mathcal{S}}_{x} \times_{1} \hat{\boldsymbol{V}} \times_{2} \hat{\boldsymbol{V}} \times_{3} \hat{\boldsymbol{U}}\\
         & \hat{\mathcal{Y}} = \hat{\mathcal{S}}_{y} \times_{1} \hat{\boldsymbol{W}} \times_{2} \hat{\boldsymbol{W}} \times_{3} \hat{\boldsymbol{U}}.
     \end{align*}
     \item \textbf{return} $\hat{\boldsymbol{u}}_{i} = \hat{\boldsymbol{U}}_{\cdot, i}$, $\hat{\boldsymbol{V}}_{i} = \hat{\boldsymbol{V}}_{\cdot, 1+\sum_{k = 1}^{i-1} r^x_k: \sum_{k = 1}^{i} r^x_k}$, $\hat{\boldsymbol{W}}_{i} = \hat{\boldsymbol{W}}_{\cdot, 1+\sum_{k = 1}^{i-1} r^y_k: \sum_{k = 1}^{i} r^y_k}$ for $i = 1, \cdots, K$ as the
     corresponding estimated factors of multi-factor semi-symmetric tensor. And $\hat{\mathcal{X}}, \hat{\mathcal{Y}}$.
 \end{itemize}
\end{algorithm}

\subsection{Additional Empirical Details and Results}\label{sec:EmpiricalDetails}
We first provide some additional details on the simulation setup of our comparative studies. 
\begin{enumerate}
    \item Setup of the structured simulation presented in the bottom left panel of Figure \ref{fig:comp_main}: $K=2$, $\boldsymbol{r}_x=\boldsymbol{r}_y=(3,2)'$. The ground truth factors are generated as follows.
    \begin{enumerate}
        \item To generate the true population factors $\bu_1^*$ and $\bu_2^*$, we consider a Gaussian mixture model with three components, where the mean of three components $\mu_k$, $k=1,2,3$, are generated from $U[0,1]$. We first randomly assign each sample $i$ into one of the three clusters, say $k$, and then generate $((\bu^*_1)_i,(\bu^*_2)_i)$ from $\mathcal{N}(\mu_k, 0.05)$. We then normalize both $\bu^*_1$ and $\bu^*_2$ to unit vectors.
        \item To generate the true network factors in the first layer, $\bV_1^*\in \bbR^{p\times 3}$ and $\bW_1^*\in \bbR^{q\times 3}$, we consider a three-block graph structure. In particular, each row $i$ of $\bV_1^*$ ($\bW_1^*$) is randomly assigned to one of the three blocks, say $k$, and then we generate $(\bV_1^*)_{i,k}$ from $\mathcal{N}(3,1)$ and let $(\bV_1^*)_{i,\backslash k} = 0$. Both $\bV_1^*$ and $\bW_1^*$ are then orthogonalized individually.
        \item To generate the true network factors in the second layer, $\bV_2^*\in \bbR^{p\times 2}$ and $\bW_2^*\in \bbR^{q\times 2}$, we consider two-star graphs of $p$ and $q$ nodes, respectively. We randomly assign each of the $p$ or $q$ nodes into two components, each being a star graph, and then compute the top two singular vectors of the Laplacian matrices for the two-star graphs. These singular vectors yield are $\bV_2^*$ and $\bW_2^*$, respectively.
        \item The signal strength $\boldsymbol{d}_x^*$, $\boldsymbol{d}_y^*$ are set as described in Section \ref{sec:simulation}, which are the same across structured and unstructured simulations.
    \end{enumerate}
    \item Setup of the structured simulation example in Figure \ref{fig:structuredExample} is slightly different from Figure \ref{fig:comp_main}. In particular, for clearer visualization, we set $p=q=50$ and $N=200$, but the network factors are still generated from three-block graphs, two-star graphs, and the population factors are generated from a Gaussian mixture with three components.
    \item Setup of the generalized models: we consider two settings of the diagonal matrices $\bD_{x,k}^*$, $\bD_{y,k}^*$.
    \begin{enumerate}
        \item For the results presented in Figure \ref{fig:comp_main} and the setting 1 in Figure \ref{fig:generalModel_oracle} and \ref{fig:generalModel_BIC}, we set $\bD_{x,1}^* = \mathrm{SNR} * (\sqrt{p} + \sqrt{N})\mathrm{diag}([2,\,1.5,\,1.2])$, $\bD_{x,2}^* = \mathrm{SNR} * (\sqrt{p} + \sqrt{N})\mathrm{diag}([1,\,0.8])$; $\bD_{y,1}^* = \mathrm{SNR} * (\sqrt{q} + \sqrt{N})\mathrm{diag}([2,\,1.5,\,1.2])$, $\bD_{y,2}^* = \mathrm{SNR} * (\sqrt{q} + \sqrt{N})\mathrm{diag}([1,\,0.8])$. The eigenvalues are moderately different from each other within each factor. 
        \item We also consider a more difficult setting where $\bD_{x,1}^* = \mathrm{SNR} * (\sqrt{p} + \sqrt{N})\mathrm{diag}([3.2,\,2,\,1.2])$, $\bD_{x,2}^* = \mathrm{SNR} * (\sqrt{p} + \sqrt{N})\mathrm{diag}([1,\,0.8])$; $\bD_{y,1}^* = \mathrm{SNR} * (\sqrt{q} + \sqrt{N})\mathrm{diag}([3.2,\,2,\,1.2])$, $\bD_{y,2}^* = \mathrm{SNR} * (\sqrt{q} + \sqrt{N})\mathrm{diag}([1,\,0.8])$.. The results are presented as the setting 2 in Figure \ref{fig:generalModel_oracle} and \ref{fig:generalModel_BIC}.
    \end{enumerate}
    \item Setup of the network simulations: we construct two pairs of stochastic block models (SBMs) for modeling the multi-modal networks, and each sample $i$ falls within each of the two pairs with probability $0.75$ and $0.25$. For the first pair, the two SBMs are of size $p,\,q$ with three blocks of sizes $0.4p$, $0.3p$, $0.3p$ and $0.4q$, $0.4q$, $0.2q$, respectively. For the second pair, the two SBMs have two blocks with sizes $0.5p$, $0.5p$ and $0.6q$, $0.4q$ respectively. The top panel of Figure \ref{fig:networkFactors} shows an example of the block structures of the two pairs of SBMs when $p=q=80$. The within-block connection probabilities range from $0.5$ to $0.8$, and the out-of-block probabilities are set as $0.3$. Formally, the two pairs of SBM probability matrices can be denoted as $(\boldsymbol{P}_{x,1},\, \boldsymbol{P}_{y,1})$, $(\boldsymbol{P}_{x,2},\, \boldsymbol{P}_{y,2})$, where $\boldsymbol{P}_{x,k} = \boldsymbol{\Theta}_{x,k}\boldsymbol{B}_{x,k}\boldsymbol{\Theta}_{x,k}'$, $\boldsymbol{P}_{y,k} = \boldsymbol{\Theta}_{y,k}\boldsymbol{B}_{y,k}\boldsymbol{\Theta}_{y,k}'$. Specifically, $\boldsymbol{\Theta}_{x,1} = \begin{pmatrix}
        1_{0.4p}&0&0\\
        0&1_{0.3p}&0\\
        0&0&1_{0.3p}
    \end{pmatrix}$, $\boldsymbol{B}_{x,1} = \begin{pmatrix}
        0.8&0.3&0.3\\
        0.3&0.8&0.3\\
        0.3&0.3&0.8
    \end{pmatrix}$; $\boldsymbol{\Theta}_{x,2} = \begin{pmatrix}
        1_{0.3p}&0\\
        0&1_{0.2p}\\
        1_{0.2p}&0\\
        0&1_{0.3p}
    \end{pmatrix}$, $\boldsymbol{B}_{x,2} = \begin{pmatrix}
        0.6&0.3\\
        0.3&0.6
    \end{pmatrix}$; $\boldsymbol{\Theta}_{y,1} = \begin{pmatrix}
        1_{0.4q}&0&0\\
        0&1_{0.4q}&0\\
        0&0&1_{0.2q}
    \end{pmatrix}$, $\boldsymbol{B}_{y,1} = \begin{pmatrix}
        0.7&0.3&0.3\\
        0.3&0.7&0.3\\
        0.3&0.3&0.7
    \end{pmatrix}$; $\boldsymbol{\Theta}_{y,2} = \begin{pmatrix}
        1_{0.3p}&0\\
        0&1_{0.4p}\\
        1_{0.3p}&0
    \end{pmatrix}$, $\boldsymbol{B}_{y,2} = \begin{pmatrix}
        0.5&0.3\\
        0.3&0.5
    \end{pmatrix}$. 
\end{enumerate}

Next, we give more implementation details of our methods in the simulation studies. For the hyperparameter selection of our JisstPCA and G-JisstPCA methods, we choose $K$ as the true number of factors, $\lambda = \frac{\|\cX\|_F}{\|\cX\|_F+\|\cY\|_F}$, and we select ranks using the BIC-deflation scheme described in Algorithm \ref{bic_def}, with input rank range from 1 to 5. For the comparative studies with JisstPCA or G-JisstPCA models, subtraction deflation is applied for all the results presented in the main paper, while the results of projection deflation are also included in the Figures \ref{fig:Main_oracle} and \ref{fig:Main_BIC}. Partial projection deflation (projection on the population mode) and subtraction deflation are used for the network simulation. For the clustering experiments in network simulations, we apply the k-means function in Matlab, which uses the squared Euclidean distance metric and the k-means++ algorithm for cluster center initialization. 

Finally, we present and discuss some additional empirical results briefly mentioned in the main paper.
\begin{enumerate}
    \item Figures \ref{fig:Main_oracle} and \ref{fig:Main_BIC} present more detailed results for the settings in Figure \ref{fig:comp_main}. In particular, both subtraction deflation and projection deflation versions of (Generalized) JisstPCA are presented; Subtraction deflation turns out to be the best across different settings. When using BIC tuned ranks (Figure \ref{fig:Main_BIC}), (Generalized) JisstPCA sometimes give larger errors with low SNR. When using oracle ranks as iHOOI and iHOSVD, our methods are always the best.
    \item Figure \ref{fig:non-orthogonal_oracle} and \ref{fig:non-orthogonal_BIC} focus on the non-orthogonal, unstructured factor setting. Baseline methods include iHOSVD and iHOOI with oracle ranks; our methods include JisstPCA and Generalized JisstPCA with oracle ranks and BIC-selected ranks, when the deflation strategy is subtraction or projection.
    The results suggest that
    \begin{itemize}
        \item When the factors are non-orthogonal, our JisstPCA and G-JisstPCA algorithms with subtraction deflation works the best. iHOOI, iHOSVD and JisstPCA with orthogonal deflation makes wrong assumptions so they all have larger errors, but for some reasons, JisstPCA is still slightly better.
        \item When SNR is small, sometimes the BIC selected ranks are inaccurate, leading to $\sin\Theta$ distance 1. But when all methods use oracle ranks, JisstPCA and G-JisstPCA with subtraction deflation work the best.
    \end{itemize}
    \item Figure \ref{fig:orthogonal2_oracle} and \ref{fig:orthogonal2_BIC} focus on a setting where factors are mutually orthogonal, unstructured, and has sufficient singular gap. In particular, we set $\boldsymbol{d}_x^* = \mathrm{SNR} * (\sqrt{p}+\sqrt{N})(1,\, 0.5)'$, $\boldsymbol{d}_y^* = \mathrm{SNR} * (\sqrt{q}+\sqrt{N})(1,\, 0.5)'$ as in the non-orthogonal simulations. The ground truth tensors are Tucker low-rank tensors and iHOOI and iHOSVD are designed to perform well in this setting. We can see that our methods behave comparably to iHOOI and iHOSVD. Also note that our methods with subtraction deflation has the same performance as projection deflation, suggesting it to be a safe scheme to use when it is unclear if the factors should be orthogonal or not. 
    \item Effects of dimensionality: Figure \ref{fig:non-orthogonal_oracle}-\ref{fig:orthogonal2_BIC} study the effect of dimenisonalities on the estimation accuracy of different factors. Four cases of different dimensions are considered: $p=50,\,q=50,\,N=200$ (case 1); $p=150,\,q=150,\,N=50$ (case 2); $p=150,\,q=50,\,N=200$ (case 3); $p=150,\,q=50,\,N=50$ (case 4). The results suggest that 
    \begin{itemize}
        \item when $N$ is larger, $u$ and $V$ tend to be estimated better;
        \item when $p>q$, $V$ tends to be estimated better. 
    \end{itemize}
    \item Figure \ref{fig:generalModel_oracle} and \ref{fig:generalModel_BIC} consider two different generalized model settings with (i) different but similar eigenvalues within each layer, and (ii) more different eigenvalues within each layer, as described earlier in the simulation setup. The factors are not orthogonal across factors. We have the following observations:
    \begin{itemize}
        \item When eigenvalues within each layer are not that different, both JisstPCA and G-JisstPCA with subtraction deflation work very well.
        \item When eigenvalues within each layer are very different, JisstPCA with oracle ranks and subtraction deflation fails miserably on the second factor. 
        \item Interestingly, JisstPCA with projection deflation performs better in this case, probably since it left less residuals from the first factor. G-JisstPCA with subtraction deflation works the best since it assumes the correct model for the generated data.
    \end{itemize} 
    \item Network visualization: We visualize the true factors and the estimated factors using different methods in Figure \ref{fig:networkFactors}. The ground truth population factor is the normalized cluster membership vector. To obtain the ground truth network factors, we take the edge probability matrix of each network component, project both its rows and columns onto the orthogonal complement of the all-one's vector, and then take the top singular vectors. 
\end{enumerate}

\begin{figure}[!t]
    \centering
    \includegraphics[width = \textwidth]{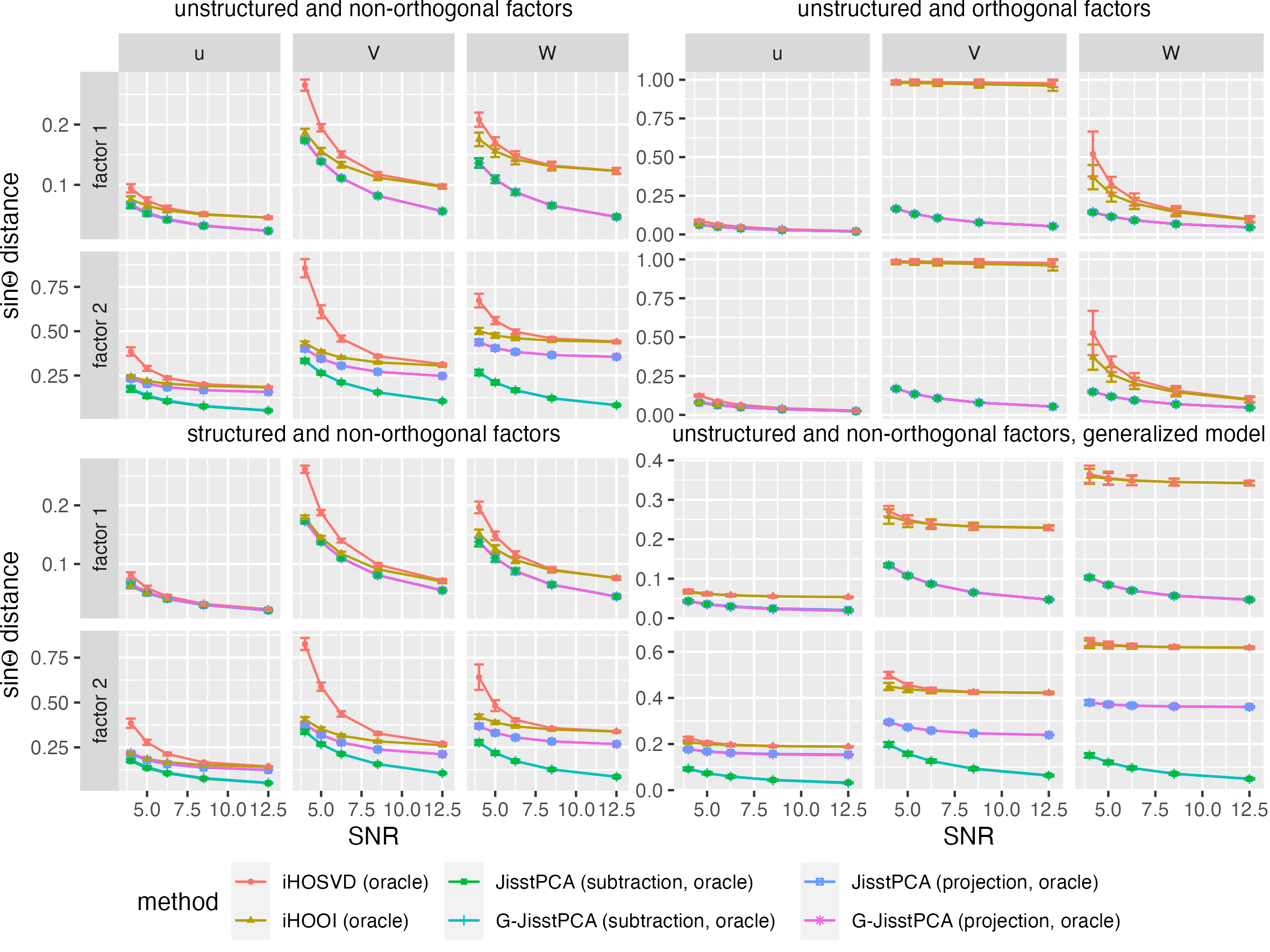}
    \caption{Detailed results of Figure \ref{fig:comp_main}: Estimation errors ($\sin\Theta$ distances in spectral norm) of all factors using iHOSVD, iHOOI, JisstPCA, and Generalized JisstPCA with subtraction and projected deflation, where all methods use the {\bf true ranks}. This figure presents the same four scenarios as those included in the main paper. The mean errors of $10$ replicates are plotted, where the error bars represent 95\% confidence intervals.}
    \label{fig:Main_oracle}
\end{figure}
\begin{figure}[!t]
    \centering
    \includegraphics[width = \textwidth]{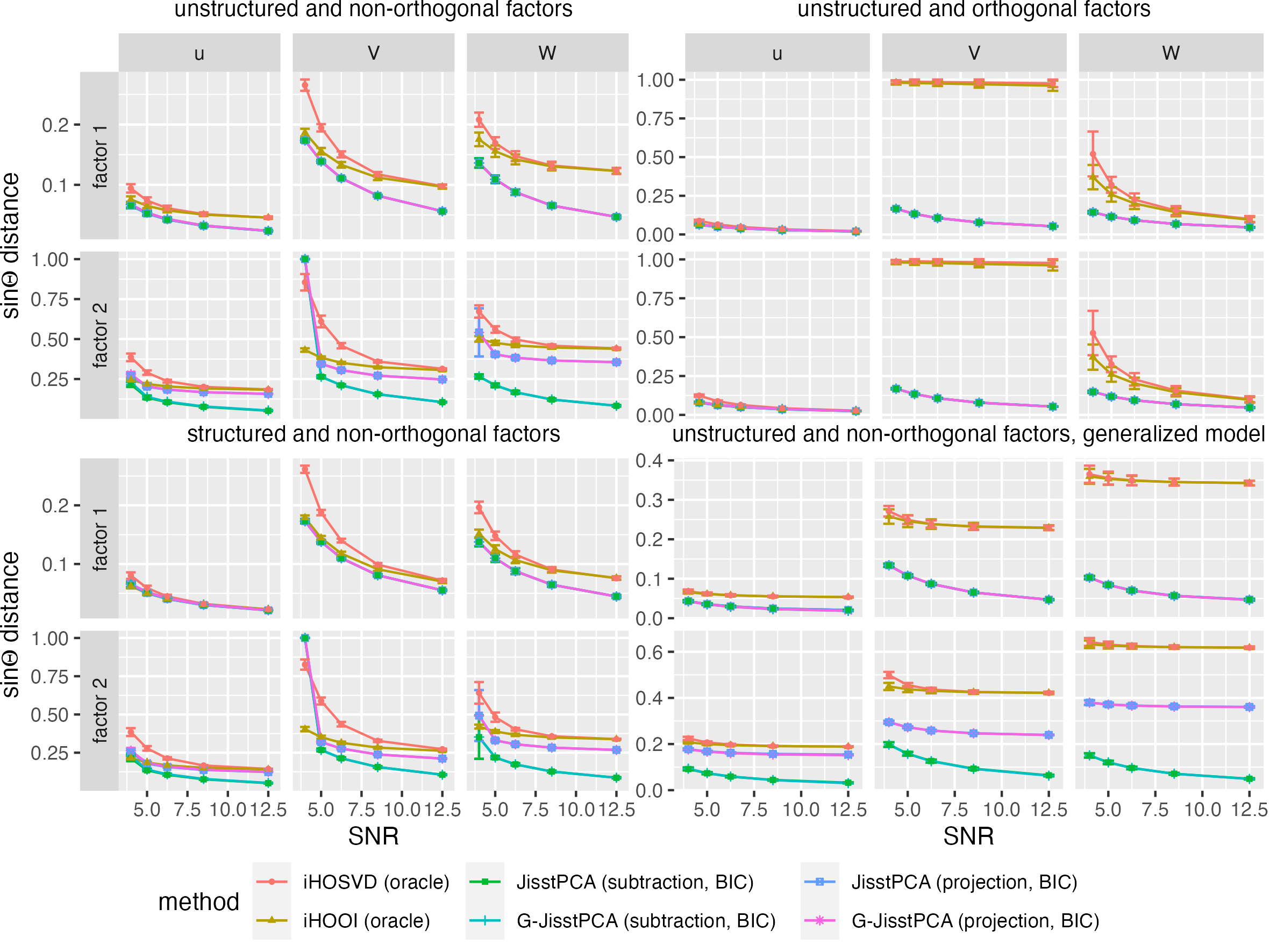}
    \caption{Detailed results of Figure \ref{fig:comp_main}: Estimation errors ($\sin\Theta$ distances in spectral norm) of all factors using iHOSVD, iHOOI, JisstPCA, and Generalized JisstPCA with {\bf both subtraction and projected deflation}, where our methods (JisstPCA and G-JisstPCA) use the BIC selected ranks. The underlying factors are unstructured and non-orthogonal. This figure presents the same four scenarios as those included in the main paper. The mean errors of $10$ replicates are plotted, where the error bars represent 95\% confidence intervals.}
    \label{fig:Main_BIC}
\end{figure}
\begin{figure}[!t]
    \centering
    \includegraphics[width = \textwidth]{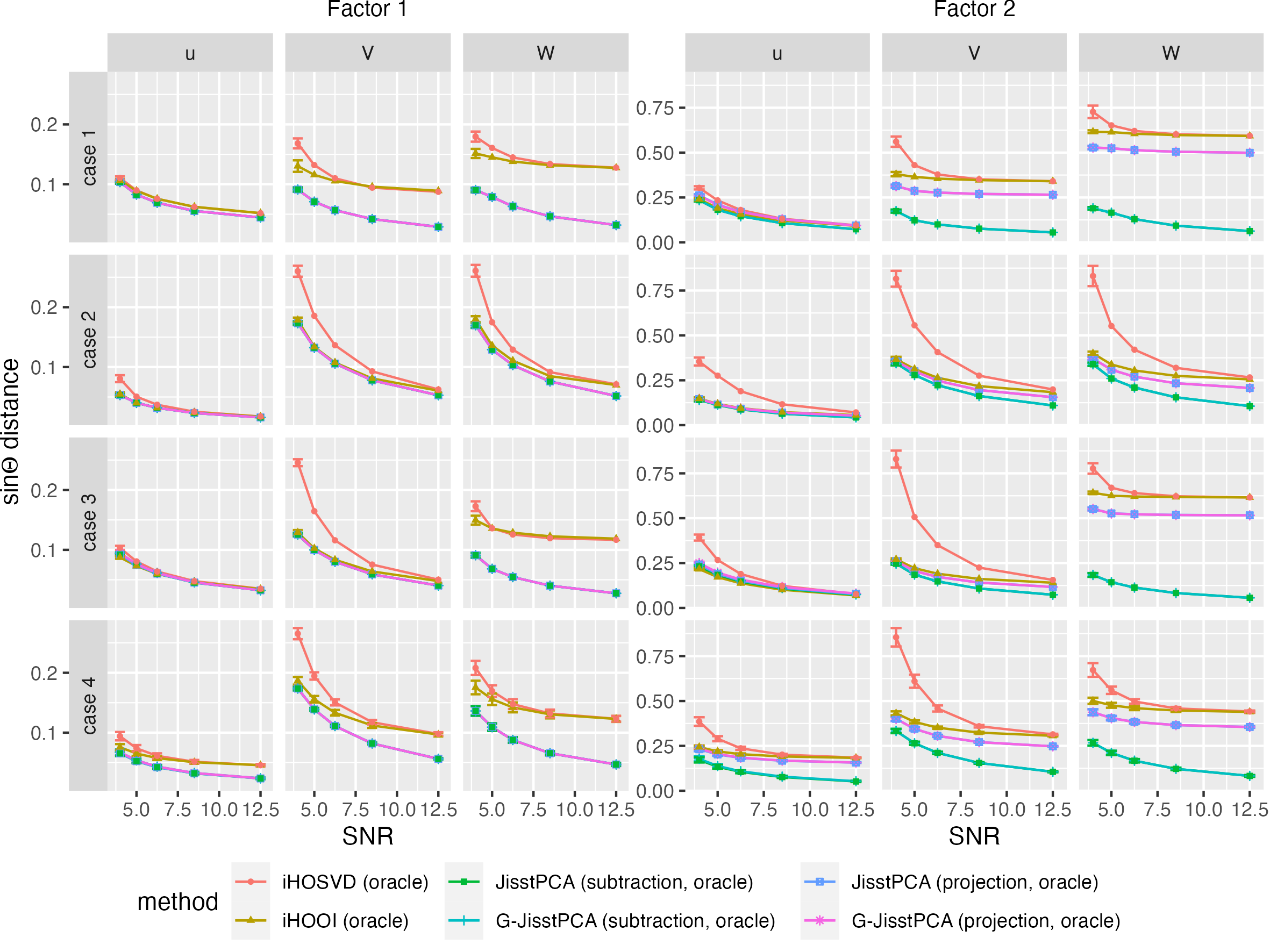}
    \caption{Dimensionality study: Four cases of different dimensions are considered: $p=50,\,q=50,\,N=200$ (case 1); $p=150,\,q=150,\,N=50$ (case 2); $p=150,\,q=50,\,N=200$ (case 3); $p=150,\,q=50,\,N=50$ (case 4). All methods use the true ranks. The underlying factors are unstructured and non-orthogonal. The mean errors of $10$ replicates are plotted, where the error bars represent 95\% confidence intervals.}
    \label{fig:non-orthogonal_oracle}
\end{figure}

\begin{figure}[!t]
    \centering
    \includegraphics[width = \textwidth]{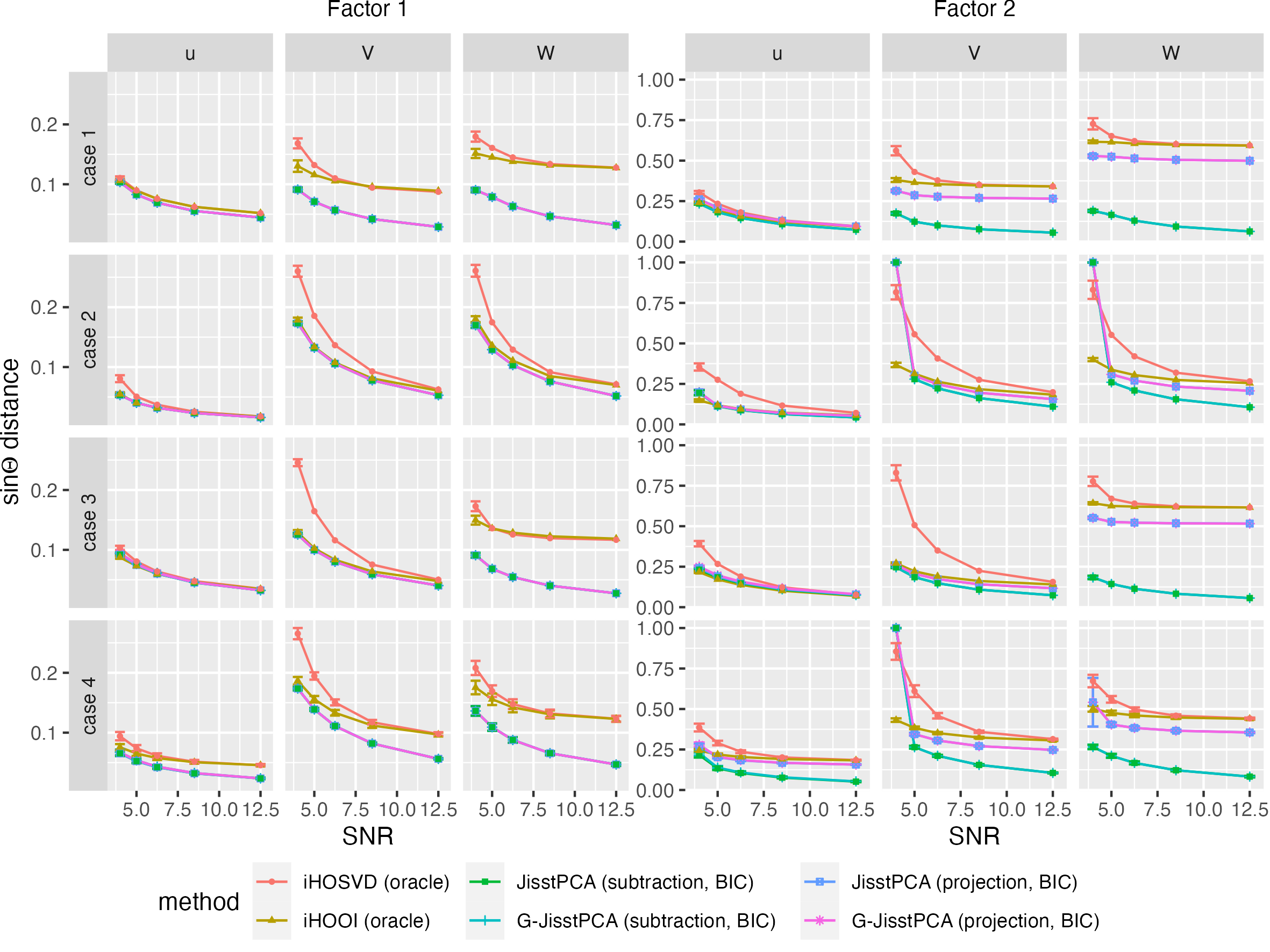}
    \caption{Dimensionality study: Four cases of different dimensions are considered: $p=50,\,q=50,\,N=200$ (case 1); $p=150,\,q=150,\,N=50$ (case 2); $p=150,\,q=50,\,N=200$ (case 3); $p=150,\,q=50,\,N=50$ (case 4). Our methods (JisstPCA and G-JisstPCA) use the BIC selected ranks. The underlying factors are unstructured and non-orthogonal. The mean errors of $10$ replicates are plotted, where the error bars represent 95\% confidence intervals.}
    \label{fig:non-orthogonal_BIC}
\end{figure}

\begin{figure}[!t]
    \centering
    \includegraphics[width = \textwidth]{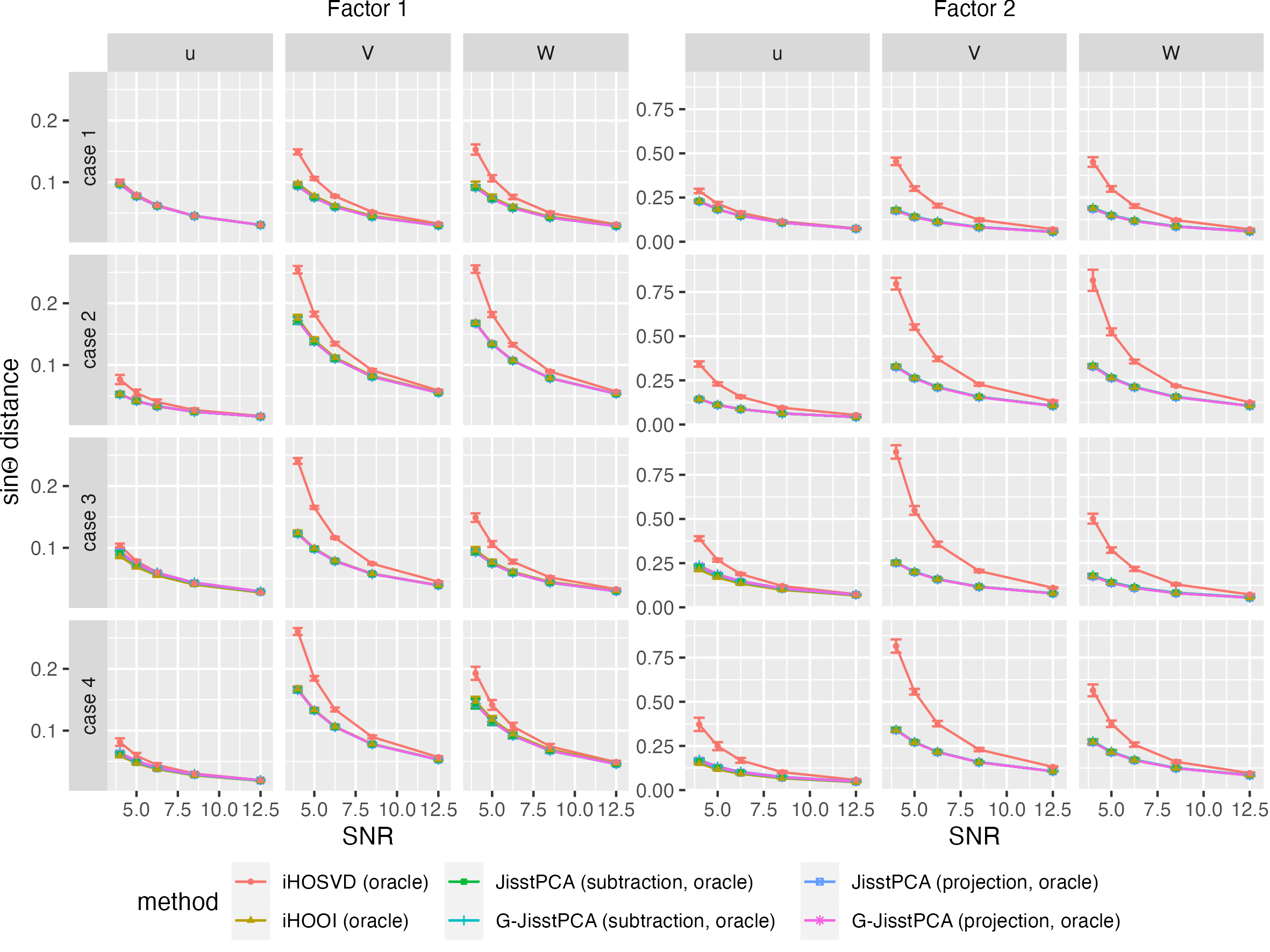}
    \caption{Orthogonal factors with significant singular gaps: we have comparative performance with iHOOI. All methods use the true ranks. Four cases of different dimensions are considered as in Figures \ref{fig:non-orthogonal_oracle}-\ref{fig:non-orthogonal_BIC} are considered. The mean errors of $10$ replicates are plotted, where the error bars represent 95\% confidence intervals.}
    \label{fig:orthogonal2_oracle}
\end{figure}

\begin{figure}[!t]
    \centering
    \includegraphics[width = \textwidth]{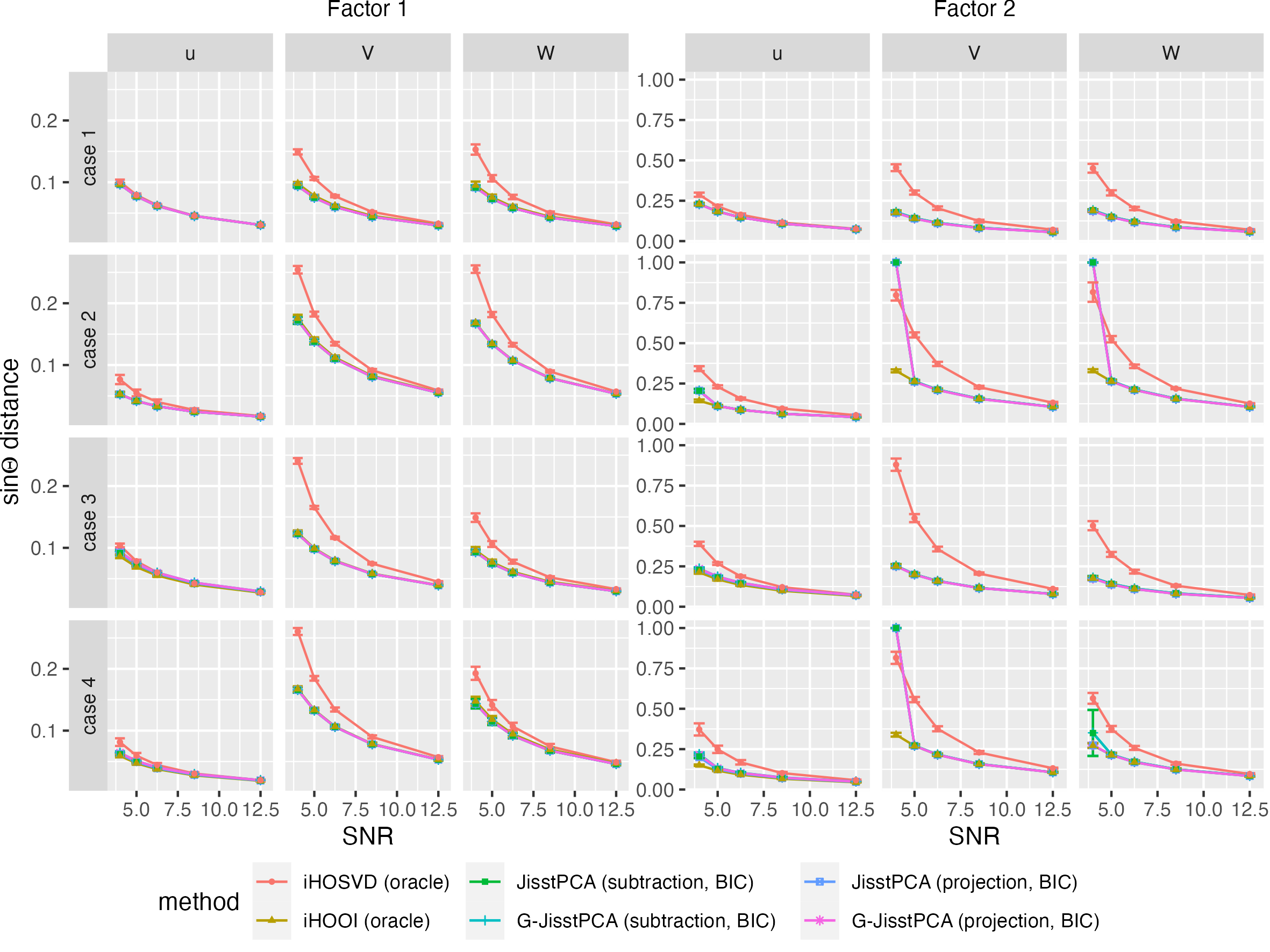}
    \caption{Orthogonal factors with significant singular gaps. Our methods (JisstPCA and G-JisstPCA) use the BIC selected ranks. We have comparative performance with iHOOI except for some small SNR scenario where ranks are selected wrong. Four cases of different dimensions are considered as in Figures \ref{fig:non-orthogonal_oracle}-\ref{fig:non-orthogonal_BIC} are considered. The mean errors of $10$ replicates are plotted, where the error bars represent 95\% confidence intervals.}
    \label{fig:orthogonal2_BIC}
\end{figure}

\begin{figure}[!t]
    \centering
    \includegraphics[width = \textwidth]{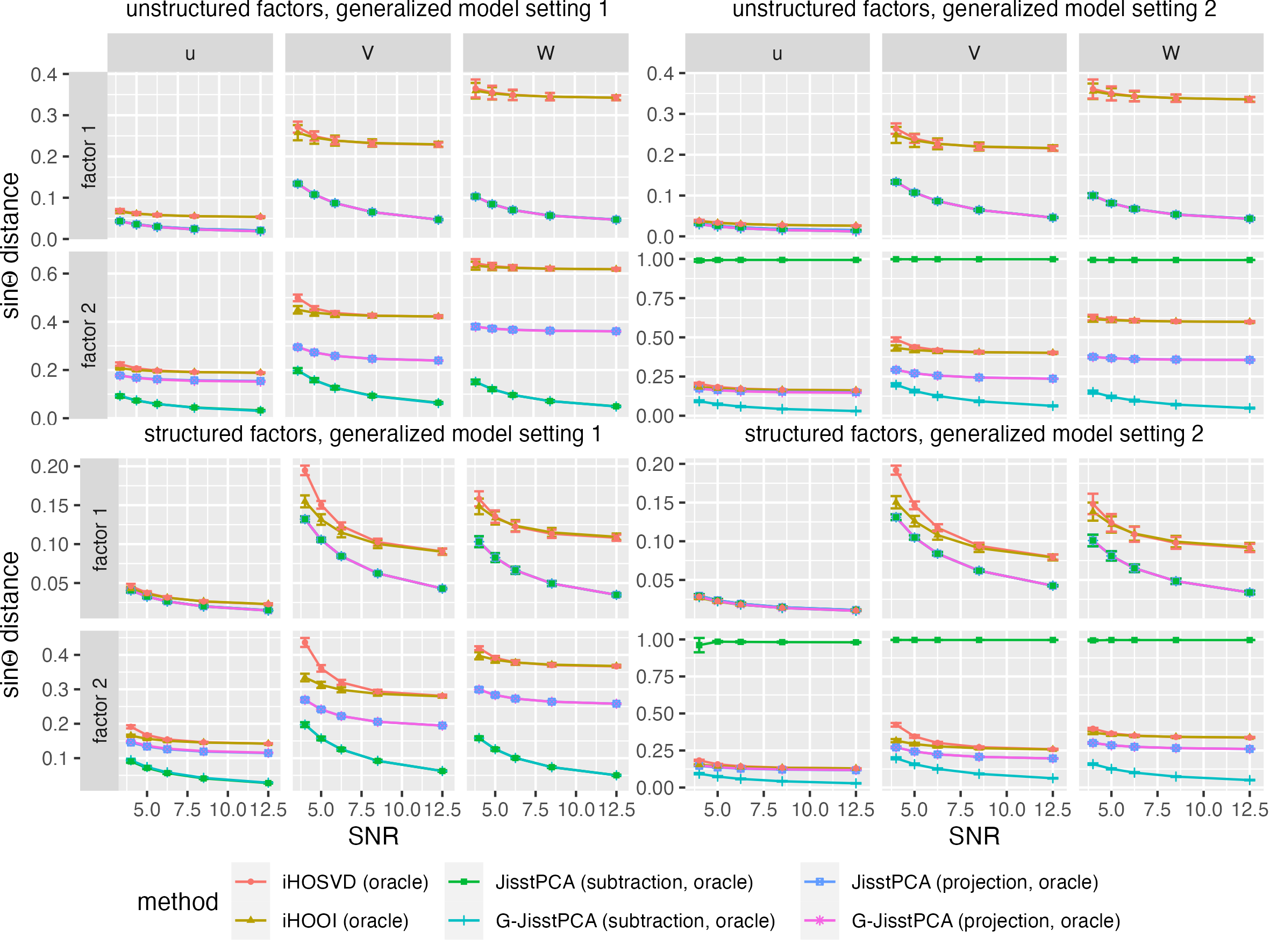}
    \caption{General models with different eigenvalues within each factor: setting 1 has more similar eigenvalues in the first factor while setting 2 is more different. All methods use the true ranks. The mean errors of $10$ replicates are plotted, where the error bars represent 95\% confidence intervals.}
    \label{fig:generalModel_oracle}
\end{figure}
\begin{figure}[!t]
    \centering
    \includegraphics[width = \textwidth]{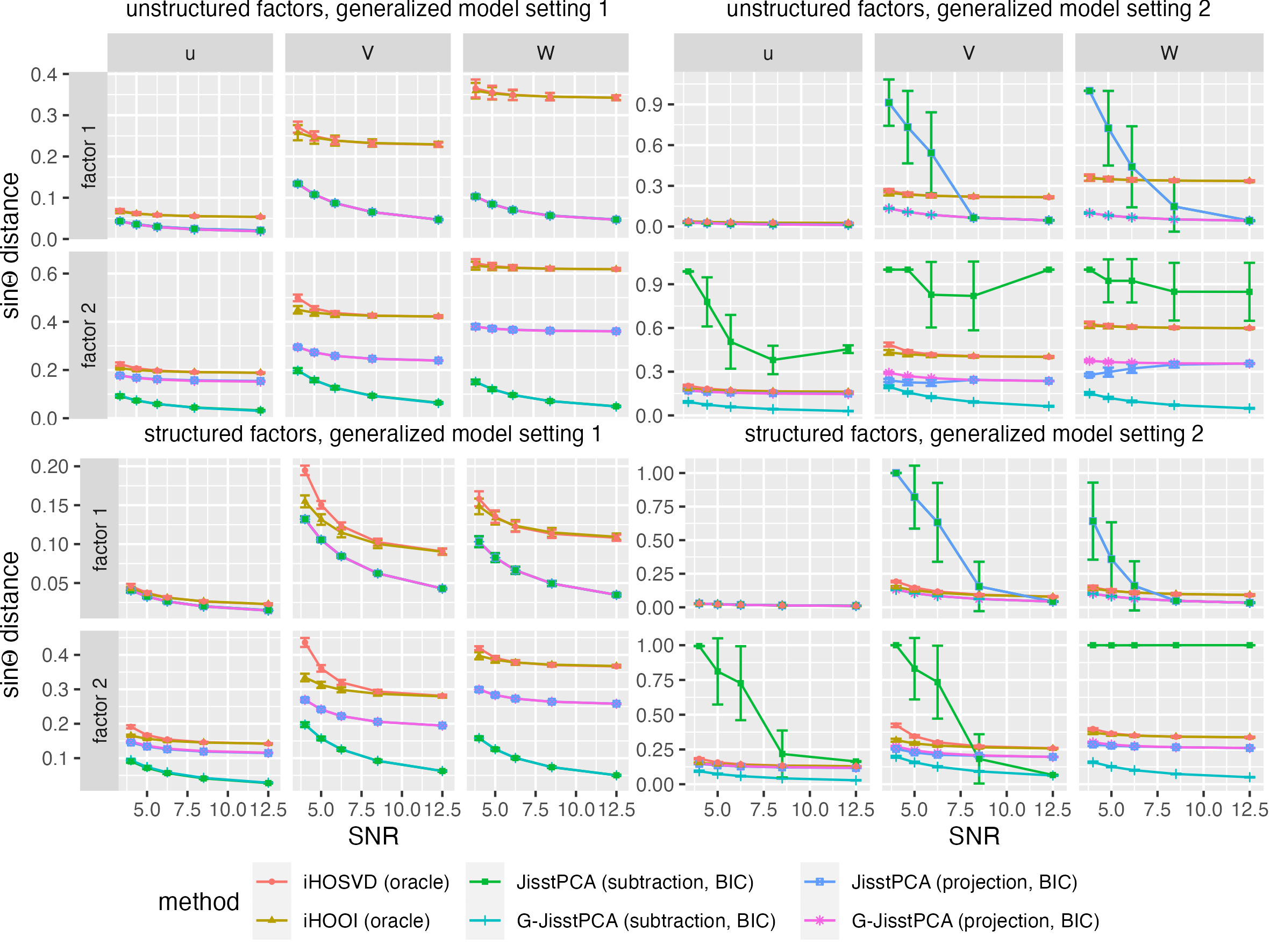}
    \caption{General models with different eigenvalues within each factor: setting 1 has more similar eigenvalues in the first factor while setting 2 is more different. Our methods (JisstPCA and G-JisstPCA) use the BIC selected ranks. The mean errors of $10$ replicates are plotted, where the error bars represent 95\% confidence intervals.}
    \label{fig:generalModel_BIC}
\end{figure}
\begin{figure}
    \centering
    \includegraphics[width = \textwidth]{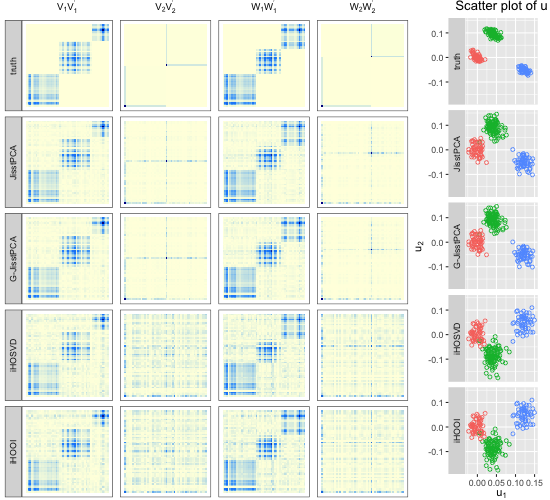}
    \caption{Heatmaps and scatterplots for structured network and population tensor factors reconstruction, by JisstPCA, G-JisstPCA, iHOSVD, iHOOI, together with the true factors.}
    \label{fig:structuredExampleFull}
\end{figure}

\begin{table}[!htb]
\centering
\caption{Population clustering and network community detection for multi-modal populations of networks, based on k-means on the estimated factors from JisstPCA, G-JisstPCA, iHOSVD, and iHOOI. The presented Adjusted Rand Index (ARI) values demonstrate the accuracy of sample clustering and node clustering of two network factors for each modality, when network sizes $p=80$, $q=50$, $N=20$. The $\sin\Theta$ estimation errors of each factor is also presented. The average ARI and $\sin\Theta$ distances of 20 independent repeats are presented, with standard deviation inside the parenthesis. The largest average ARI and lowest estimation error for each setting are marked in bold. Both partial projection deflation and subtraction deflation are considered for JisstPCA and G-JisstPCA.}
\label{tab:network_cluster1Detail1}
\scalebox{0.7}{
\begin{tabular}{c|c|c|c|c|c|c}
\hline
\multirow{2}{*}{Clustering}&{\bf JisstPCA} &{\bf G-JisstPCA}&{\bf JisstPCA} &{\bf G-JisstPCA}&\multirow{2}{*}{\bf iHOSVD}&\multirow{2}{*}{\bf iHOOI}\\
\multirow{2}{*}{ARI}&{\bf partial projection} &{\bf partial projection}&{\bf subtraction}&{\bf subtraction}&&\\
&{\bf (BIC)} &{\bf (BIC)}&{\bf (BIC)} &{\bf (BIC)}&{\bf (oracle)}&{\bf (oracle)}\\
\hline
Sample&0.947(0.238) & {\bf 1}(0) & 0.801(0.41) & 0.842(0.386) & {\bf 1}(0) & 0.954(0.205)\\
Network 1 of $\cX$&0.971(0.108) & {\bf 1}(0) & 0.942(0.162) & {\bf 1}(0) & 0.912(0.195) & 0.739(0.251)\\
Network 2 of $\cX$&0.995(0.022) & {\bf 0.997}(0.011) & 0.699(0.471) & 0.749(0.446) & 0.146(0.057) & 0.139(0.043)\\
Network 1 of $\cY$&0.974(0.114) & {\bf 1}(0) & {\bf 1}(0) & 0.974(0.115) & 0.99(0.027) & 0.94(0.155)\\
Network 2 of $\cY$&{\bf 0.87}(0.31) & 0.867(0.309) & 0.685(0.468) & 0.732(0.442) & 0.014(0.046) & 0.121(0.16)\\
\hline
\hline
$\sin\theta(\hat{\bu}_1,\bu_1^*)$&0.087(0.03) & 0.089(0.031) & 0.087(0.03) & 0.089(0.031) & 0.138(0.051) & 0.08(0.026)\\
$\sin\theta(\hat{\bu}_2,\bu_2^*)$&0.167(0.077) & 0.163(0.066) & 0.431(0.349) & 0.38(0.343) & 0.214(0.04) & 0.188(0.162)\\
$\|\sin\Theta(\hat{\bV}_1,\bV_1^*)\|_{\op}$&0.084(0.007) & 0.084(0.007) & 0.084(0.007) & 0.084(0.007) & 0.163(0.06) & 0.163(0.059)\\
$\|\sin\Theta(\hat{\bV}_2,\bV_2^*)\|_{\op}$&0.21(0.074) & 0.211(0.072) & 0.427(0.384) & 0.428(0.383) & 0.817(0.065) & 0.799(0.07)\\
$\|\sin\Theta(\hat{\bW}_1,\bW_1^*)\|_{\op}$&0.158(0.018) & 0.158(0.018) & 0.158(0.018) & 0.158(0.018) & 0.253(0.056) & 0.196(0.058)\\
$\|\sin\Theta(\hat{\bW}_2,\bW_2^*)\|_{\op}$&0.416(0.204) & 0.417(0.203) & 0.536(0.311) & 0.504(0.293) & 0.969(0.044) & 0.903(0.098)\\
\hline
\end{tabular}}
\end{table}

\begin{table}[!htb]
\centering
\caption{Clustering ARI and factor estimation error for multi-modal populations of networks. The set-up is the same as in Table \ref{tab:network_cluster1Detail1}, but with $N=40$. }
\label{tab:network_cluster1Detail2}
\scalebox{0.7}{
\begin{tabular}{c|c|c|c|c|c|c}
\hline
\multirow{2}{*}{Clustering}&{\bf JisstPCA} &{\bf G-JisstPCA}&{\bf JisstPCA} &{\bf G-JisstPCA}&\multirow{2}{*}{\bf iHOSVD}&\multirow{2}{*}{\bf iHOOI}\\
\multirow{2}{*}{ARI}&{\bf partial projection} &{\bf partial projection}&{\bf subtraction}&{\bf subtraction}&&\\
&{\bf (BIC)} &{\bf (BIC)}&{\bf (BIC)} &{\bf (BIC)}&{\bf (oracle)}&{\bf (oracle)}\\
\hline
Sample&1(0) &{\bf 1}(0) & 0.863(0.334) & 0.851(0.363) & {\bf 1}(0) & {\bf 1}(0)\\
Network 1 of $\cX$&{\bf 1}(0) & {\bf 1}(0) & {\bf 1}(0) & {\bf 1}(0) & 0.805(0.241) & 0.663(0.224)\\
Network 2 of $\cX$&{\bf 1}(0) & {\bf 1}(0) & 0.748(0.448) & 0.85(0.365) & 0.156(0.015) & 0.153(0)\\
Network 1 of $\cY$&{\bf 1}(0) & {\bf 1}(0) & {\bf 1}(0) & {\bf 1}(0) & 0.973(0.108) & {\bf 1}(0)\\
Network 2 of $\cY$&{\bf 1}(0) & {\bf 1}(0) & 0.747(0.45) & 0.847(0.373) & 0.116(0.101) & 0.246(0.153)\\
\hline
\hline
$\sin\theta(\hat{\bu}_1,\bu_1^*)$&0.092(0.016) & 0.093(0.017) & 0.092(0.016) & 0.093(0.017) & 0.142(0.028) & {\bf 0.083}(0.016)\\
$\sin\theta(\hat{\bu}_2,\bu_2^*)$&0.152(0.014) & 0.152(0.014) & 0.387(0.324) & 0.3(0.276) & 0.197(0.017) & {\bf 0.145}(0.015)\\
$\|\sin\Theta(\hat{\bV}_1,\bV_1^*)\|_{\op}$&{\bf 0.062}(0.006) & {\bf 0.062}(0.007) & {\bf 0.062}(0.006) & {\bf 0.062}(0.007) & 0.154(0.046) & 0.156(0.047)\\
$\|\sin\Theta(\hat{\bV}_2,\bV_2^*)\|_{\op}$&{\bf 0.154}(0.024) & 0.155(0.023) & 0.36(0.379) & 0.319(0.349) & 0.776(0.022) & 0.768(0.028)\\
$\|\sin\Theta(\hat{\bW}_1,\bW_1^*)\|_{\op}$&{\bf 0.118}(0.014) & {\bf 0.118}(0.014) & {\bf 0.118}(0.014) & {\bf 0.118}(0.014) & 0.202(0.026) & 0.173(0.04)\\
$\|\sin\Theta(\hat{\bW}_2,\bW_2^*)\|_{\op}$&{\bf 0.272}(0.045) & 0.273(0.045) & 0.448(0.327) & 0.375(0.271) & 0.901(0.062) & 0.771(0.079)\\
\hline
\end{tabular}}
\end{table}

\begin{table}[!htb]
\centering
\caption{Clustering ARI and factor estimation error for multi-modal populations of networks. The set-up is the same as in Table \ref{tab:network_cluster1Detail1}, but with $q=80,\,N=20$.}
\label{tab:network_cluster2Detail1}
\scalebox{0.7}{
\begin{tabular}{c|c|c|c|c|c|c}
\hline
\multirow{2}{*}{Clustering}&{\bf JisstPCA} &{\bf G-JisstPCA}&{\bf JisstPCA} &{\bf G-JisstPCA}&\multirow{2}{*}{\bf iHOSVD}&\multirow{2}{*}{\bf iHOOI}\\
\multirow{2}{*}{ARI}&{\bf partial projection} &{\bf partial projection}&{\bf subtraction}&{\bf subtraction}&&\\
&{\bf (BIC)} &{\bf (BIC)}&{\bf (BIC)} &{\bf (BIC)}&{\bf (oracle)}&{\bf (oracle)}\\
\hline
Sample&{\bf 1}(0) & {\bf 1}(0) & 0.809(0.35) & 0.905(0.294) & {\bf 1}(0) & {\bf 1}(0)\\
Network 1 of $\cX$&0.992(0.026) & {\bf 1}(0) & 0.994(0.019) & {\bf 1}(0) & 0.792(0.248) & 0.686(0.237)\\
Network 2 of $\cX${\bf 0.995}(0.022) & {\bf 0.995}(0.022) & 0.497(0.516) & 0.698(0.473) & 0.142(0.042) & 0.16(0.032) \\
Network 1 of $\cY$&{\bf 0.977}(0.105) & 0.975(0.113) & {\bf 1}(0) & {\bf 1}(0) & 0.946(0.143) & {\bf 1}(0) \\
Network 2 of $\cY$&{\bf 0.948}(0.135) & 0.929(0.227) & 0.499(0.509) & 0.699(0.467) & 0.135(0.093) & 0.298(0.15)\\
\hline
\hline
$\sin\theta(\hat{\bu}_1,\bu_1^*)$&0.081(0.022) & 0.08(0.023) & 0.081(0.022) & 0.08(0.023) & 0.13(0.043) & {\bf 0.072}(0.022)\\
$\sin\theta(\hat{\bu}_2,\bu_2^*)$&0.148(0.058) & 0.145(0.058) & 0.55(0.389) & 0.398(0.365) & 0.206(0.048) & {\bf 0.139}(0.055)\\
$\|\sin\Theta(\hat{\bV}_1,\bV_1^*)\|_{\op}$&{\bf 0.082}(0.011) & {\bf 0.082}(0.011) & {\bf 0.082}(0.011) & {\bf 0.082}(0.011) & 0.158(0.061) & 0.159(0.062)\\
$\|\sin\Theta(\hat{\bV}_2,\bV_2^*)\|_{\op}$&{\bf 0.207}(0.069) & {\bf 0.207}(0.069) & 0.582(0.428) & 0.423(0.387) & 0.815(0.059) & 0.777(0.03)\\
$\|\sin\Theta(\hat{\bW}_1,\bW_1^*)\|_{\op}$&0.135(0.015) & {\bf 0.134}(0.015) & 0.135(0.015) & {\bf 0.134}(0.015) & 0.203(0.055) & 0.192(0.059)\\
$\|\sin\Theta(\hat{\bW}_2,\bW_2^*)\|_{\op}$&{\bf 0.323}(0.115) & 0.339(0.175) & 0.624(0.377) & 0.488(0.342) & 0.87(0.067) & 0.782(0.075)\\
\hline
\end{tabular}}
\end{table}

\begin{table}[!htb]
\centering
\caption{Clustering ARI and factor estimation error for multi-modal populations of networks. The set-up is the same as in Table \ref{tab:network_cluster1Detail1}, but with $q=80,\,N=40$.}
\label{tab:network_cluster2Detail2}
\scalebox{0.7}{
\begin{tabular}{c|c|c|c|c|c|c}
\hline
\multirow{2}{*}{Clustering}&{\bf JisstPCA} &{\bf G-JisstPCA}&{\bf JisstPCA} &{\bf G-JisstPCA}&\multirow{2}{*}{\bf iHOSVD}&\multirow{2}{*}{\bf iHOOI}\\
\multirow{2}{*}{ARI}&{\bf partial projection} &{\bf partial projection}&{\bf subtraction}&{\bf subtraction}&&\\
&{\bf (BIC)} &{\bf (BIC)}&{\bf (BIC)} &{\bf (BIC)}&{\bf (oracle)}&{\bf (oracle)}\\
\hline
Sample & {\bf 1}(0) & {\bf 1}(0) & {\bf 1}(0) & 0.949(0.229) & {\bf 1}(0) & {\bf 1}(0)\\
Network 1 of $\cX$ & {\bf 1}(0) & {\bf 1}(0) & {\bf 1}(0) & {\bf 1}(0) & 0.714(0.262) & 0.689(0.236)\\
Network 2 of $\cX$ & {\bf 1}(0) & {\bf 1}(0) & 0.647(0.494) & 0.848(0.371) & 0.156(0.015) & 0.153(0)\\
Network 1 of $\cY$ & {\bf 1}(0) & {\bf 1}(0) & {\bf 1}(0) & {\bf 1}(0) & 0.975(0.11) & 0.977(0.102)\\
Network 2 of $\cY$ & {\bf 0.997}(0.011) & {\bf 0.997}(0.011) & 0.656(0.48) & 0.849(0.37) & 0.281(0.109) & 0.338(0.079)\\
\hline
\hline
$\sin\theta(\hat{\bu}_1,\bu_1^*)$&0.084(0.016) & 0.084(0.016) & 0.084(0.016) & 0.084(0.016) & 0.131(0.027) & {\bf 0.077}(0.015)\\
$\sin\theta(\hat{\bu}_2,\bu_2^*)$&0.14(0.015) & 0.136(0.015) & 0.457(0.379) & 0.291(0.285) & 0.184(0.013) & {\bf 0.131}(0.018)\\
$\|\sin\Theta(\hat{\bV}_1,\bV_1^*)\|_{\op}$&{\bf 0.061}(0.006) & {\bf 0.061}(0.006) & {\bf 0.061}(0.006) & {\bf 0.061}(0.006) & 0.152(0.044) & 0.153(0.045)\\
$\|\sin\Theta(\hat{\bV}_2,\bV_2^*)\|_{\op}$&{\bf 0.146}(0.014) & {\bf 0.146}(0.014) & 0.441(0.42) & 0.399(0.403) & 0.774(0.019) & 0.764(0.029)\\
$\|\sin\Theta(\hat{\bW}_1,\bW_1^*)\|_{\op}$&{\bf 0.093}(0.006) & {\bf 0.093}(0.006) & {\bf 0.093}(0.006) & {\bf 0.093}(0.006) & 0.161(0.029) & 0.154(0.033)\\
$\|\sin\Theta(\hat{\bW}_2,\bW_2^*)\|_{\op}$&0.{\bf 218}(0.028) &{\bf  0.218}(0.028) & 0.468(0.365) & 0.33(0.289) & 0.787(0.054) & 0.718(0.018)\\
\hline
\end{tabular}}
\end{table}

\begin{figure}[!tb]
    \centering
    \includegraphics[width = 15cm]{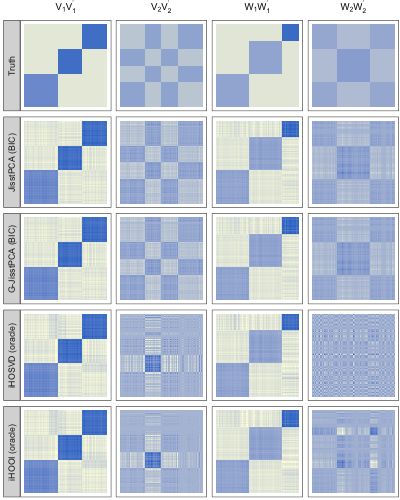}
    \caption{Heatmaps for the network factors reconstruction from network data, by JisstPCA, G-JisstPCA, iHOSVD, and iHOOI, together with the true factors.}
    \label{fig:networkFactors}
\end{figure}

\section{Proof of Main Results}
\label{app2}
\subsection{Additional Notations}
Throughout this paper, we use the following notations repeatedly. The calligraphic letters represent tensors (e.g. $\mathcal{X}$), and the boldface uppercase letters represent matrices (e.g. $\boldsymbol{V}$). Also, we use boldface lowercase letters to denote vectors (e.g. $\boldsymbol{u}$) and lowercase letters to denote the real numbers (e.g. $a$). For $p, q \in \mathbb{R}$, let $p \vee q = \max\{p, q\}$ and $p \wedge q = \min\{p, q\}$. Suppose $\{a_n: n \in \mathbb{N}\}$ and $\{b_n: n \in \mathbb{N}\}$ are two sequences of real numbers, we claim that $a_n = O(b_n)$ if and only if there exists some $N \in \mathbb{N}$ and some positive constant $C > 0$, such that $a_n \leq C b_n$ for all $n \geq N$. And we claim $a_n = o(b_n)$ if for $\forall \epsilon > 0, \exists N \in \mathbb{N}$ such that $a_n \leq \epsilon b_n$ for all $n \geq N$. If there are two positive constants $c>0, C>0$ such that $ca_n \leq b_n \leq Ca_n$ for $\forall n$, we say $a_n \asymp b_n$.

We let $\mathbb{S}^{N-1}$ be the Euclidean unit sphere in $\mathbb{R}^{N}$, i.e. $\mathbb{S}^{N-1} = \{\boldsymbol{u} \in \mathbb{R}^{N}: \|\boldsymbol{u}\|_{2}=1\}$, and $\mathcal{O}_{p, r}$ is the set of $p \times r$ orthogonal matrices, i.e. $\mathcal{O}_{p, r} = \{\boldsymbol{V} \in \mathbb{R}^{p \times r}: \boldsymbol{V}^{\prime}\boldsymbol{V} = \boldsymbol{I}_{r}\}$.
\subsection{Proof of Main Theoretical Results for the JisstPCA}
\begin{proof}[Proof of Theorem \ref{thm:stat_converg_main}]
We use an induction proof for Theorem \ref{thm:stat_converg_main}. Specifically, we will show the following claim holds true, as long as Assumptions \ref{assump:SNR1} and \ref{assump:init} hold.
\begin{claim}\label{claim:induction1}
    For $l\geq 0$, 
    \begin{equation}\label{eq:u_err_induction1}
        |\sin\theta(\bu^*,\bu^{(l)})|^2\leq 1-8\left(\frac{\|\cE_x\|_{\op}^2}{d_x^{*2}}\vee\frac{\|\cE_y\|_{\op}^2}{d_y^{*2}}\right).
    \end{equation}
    For $l\geq 1$, 
    \begin{equation}\label{eq:u_err_induction2}
        |\sin\theta(\bu^*,\bu^{(l)})|\leq \frac{4\|\lambda\cE_x; (1-\lambda)\cE_y\|_{r_x, r_y, \op}}{\lambda r_xd_x^* + (1-\lambda)r_yd_y^*},
    \end{equation}
    \begin{equation}\label{eq:VW_err_induction}
        |\sin\Theta(\bV^*,\bV^{(l+1)})|\leq \frac{2\|\cE_x\|_{\op}}{d_x^*\sqrt{1-\sin^2\theta(\bu^*,\bu^{(l)})}},\quad
        |\sin\Theta(\bW^*,\bW^{(l)})|\leq \frac{2\|\cE_y\|_{\op}}{d_y^*\sqrt{1-\sin^2\theta(\bu^*,\bu^{(l)})}}.
    \end{equation}
\end{claim}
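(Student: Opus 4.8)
The plan is to prove the induction Claim \ref{claim:induction1} on the iteration index $l$, from which Theorem \ref{thm:stat_converg_main} follows directly: bound \eqref{eq:main_err1} is exactly \eqref{eq:u_err_induction2}, while \eqref{eq:main_err2} will follow from \eqref{eq:VW_err_induction} once the induction has shown $\sin^2\theta(\bu^*,\bu^{(l)})\le 16/25$ for every $l\ge 1$, so that $\sqrt{1-\sin^2\theta(\bu^*,\bu^{(l)})}\ge 3/5$ and hence $\tfrac{2}{3/5}=\tfrac{10}{3}<4$. The base case $l=0$ of \eqref{eq:u_err_induction1} is precisely Assumption \ref{assump:init}. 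The inductive step is a three-stage cycle: assuming \eqref{eq:u_err_induction1} at level $l$, I would first control the network-factor updates $\bV^{(l+1)},\bW^{(l+1)}$, then the population update $\bu^{(l+1)}$, and finally re-establish \eqref{eq:u_err_induction1} at level $l+1$.

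For the network step, I would write $\cX\times_3\bu^{(l)} = d_x^*(\bu^{*\prime}\bu^{(l)})\,\bV^*\bV^{*\prime} + \cE_x\times_3\bu^{(l)}$. The signal term is a symmetric rank-$r_x$ matrix whose nonzero singular values all equal $d_x^*|\bu^{*\prime}\bu^{(l)}| = d_x^*\sqrt{1-\sin^2\theta(\bu^*,\bu^{(l)})}$, which is the relevant gap. Since $\|\cE_x\times_3\bu^{(l)}\|_{\op}\le \|\cE_x\|_{\op}$ for the unit vector $\bu^{(l)}$, a Davis--Kahan $\sin\Theta$ bound yields \eqref{eq:VW_err_induction} with the factor $2$, and symmetrically for $\bW^{(l+1)}$. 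Combining this with the inductive hypothesis \eqref{eq:u_err_induction1}, which forces $\sqrt{1-\sin^2\theta(\bu^*,\bu^{(l)})}\ge 2\sqrt2\,\|\cE_x\|_{\op}/d_x^*$, gives $\|\sin\Theta(\bV^*,\bV^{(l+1)})\|_{\op}^2\le 1/2$, hence $\|\sin\Theta(\bV^*,\bV^{(l+1)})\|_F^2\le r_x/2$, and likewise $\|\sin\Theta(\bW^*,\bW^{(l+1)})\|_F^2\le r_y/2$.

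For the population step, I would expand the trace products using the identity $\langle \bV^*\bV^{*\prime},\bV^{(l+1)}\bV^{(l+1)\prime}\rangle = \|\bV^{*\prime}\bV^{(l+1)}\|_F^2 = r_x-\|\sin\Theta(\bV^*,\bV^{(l+1)})\|_F^2$, giving $[\cX;\bV^{(l+1)}] = c_x\bu^* + [\cE_x;\bV^{(l+1)}]$ with $c_x = d_x^*(r_x-\|\sin\Theta(\bV^*,\bV^{(l+1)})\|_F^2)\ge r_x d_x^*/2$ by the previous stage, and similarly $c_y\ge r_y d_y^*/2$. Thus the unnormalized update equals $a\bu^* + \boldsymbol{\eta}$ with signal coefficient $a=\lambda c_x+(1-\lambda)c_y\ge \tfrac12(\lambda r_x d_x^*+(1-\lambda)r_y d_y^*)$ and noise $\boldsymbol{\eta} = \lambda[\cE_x;\bV^{(l+1)}]+(1-\lambda)[\cE_y;\bW^{(l+1)}]$, whose norm is at most the integrated noise level $\|\lambda\cE_x;(1-\lambda)\cE_y\|_{r_x,r_y,\op}$ by definition. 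The elementary unit-vector perturbation bound $\sin\theta(\bu^*,\bu^{(l+1)})\le \|\boldsymbol{\eta}\|_2/(a-\|\boldsymbol{\eta}\|_2)$, together with the crude estimate $\|[\cE_x;\bV]\|_2\le r_x\|\cE_x\|_{\op}$ and Assumption \ref{assump:SNR1}, gives $\|\boldsymbol{\eta}\|_2\le \tfrac15(\lambda r_x d_x^*+(1-\lambda)r_y d_y^*)\le \tfrac25 a$, so the bound collapses to $\sin\theta(\bu^*,\bu^{(l+1)})\le 2\|\boldsymbol{\eta}\|_2/a\le 4\|\lambda\cE_x;(1-\lambda)\cE_y\|_{r_x,r_y,\op}/(\lambda r_x d_x^*+(1-\lambda)r_y d_y^*)$, which is \eqref{eq:u_err_induction2}.

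Closing the loop is then immediate: the same estimate $\|\boldsymbol{\eta}\|_2\le \tfrac15(\lambda r_x d_x^*+(1-\lambda)r_y d_y^*)$ forces $\sin\theta(\bu^*,\bu^{(l+1)})\le 4/5$, hence $\sin^2\theta\le 16/25$; since Assumption \ref{assump:SNR1} gives $8(\|\cE_x\|_{\op}^2/d_x^{*2}\vee\|\cE_y\|_{\op}^2/d_y^{*2})\le 8/25$, we obtain $16/25\le 1-8/25\le 1-8(\|\cE_x\|_{\op}^2/d_x^{*2}\vee\|\cE_y\|_{\op}^2/d_y^{*2})$, re-establishing \eqref{eq:u_err_induction1} at level $l+1$. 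The main obstacle I anticipate is the bookkeeping of constants needed to guarantee contraction through one full power-iteration cycle: the gap controlling the $\bV,\bW$ updates shrinks with the error of $\bu^{(l)}$, while the signal coefficient $a$ controlling the $\bu$ update shrinks with the errors of $\bV^{(l+1)},\bW^{(l+1)}$, so one must verify that the SNR and initialization constants ($5$ and $8$, respectively) are chosen tightly enough that these mutually inflated bounds still close at the threshold $4/5$ rather than diverging.
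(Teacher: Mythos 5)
Your proposal is correct and follows essentially the same route as the paper's proof: the same induction on the iterate index, the same decomposition $\cX\times_3\bu^{(l)} = d_x^*\langle\bu^*,\bu^{(l)}\rangle\bV^*\bV^{*\prime}+\cE_x\times_3\bu^{(l)}$ with Davis--Kahan for the network factors, the same trace-product expansion $\alpha\bu^*+\boldsymbol{\eta}$ with $\alpha\ge\tfrac12(\lambda r_xd_x^*+(1-\lambda)r_yd_y^*)$ for the population factor, and the same closing of the loop via $\sin\theta\le 4/5$ versus the threshold $1-8/25$. The constant bookkeeping you flag as the main obstacle does close exactly as you anticipate, matching the paper's computation.
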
 
We first note that \eqref{eq:u_err_induction1} is directly implied by Assumption \ref{assump:init} when $l=0$. In the following, we will show that when \eqref{eq:u_err_induction1} holds for $l=k$, then \eqref{eq:u_err_induction1}-\eqref{eq:VW_err_induction} would all hold for $l = k + 1$. For notational simplicity, throughout the rest of the proof, we denote $|\sin\theta(\bu^*,\bu^{(k)})|$, $\|\sin\Theta(\bV^*,\bV^{(k)})\|_{\op}$, and $\|\sin\Theta(\bW^*,\bW^{(k)})\|_{\op}$ by $\varepsilon_{u,k}$, $\varepsilon_{v,k}$, and $\varepsilon_{w,k}$, respectively.

\paragraph{Analysis of $\bV^{(k+1)}$ and $\bW^{(k+1)}$.} We start with the update of network factors $\bV^{(k+1)}$ and $\bW^{(k+1)}$. Based on the Algorithm \ref{alg:single_tt}, we know in the $(k+1)$th iteration, the update of $\boldsymbol{V}^{(k+1)}$ is leading $r^x$ singular vectors of $\mathcal{X} \times_{3} \boldsymbol{u}^{(k)}$. Since $\mathcal{X} = d_x^* \cdot \boldsymbol{V}^*\boldsymbol{V}^{*\prime} \circ \boldsymbol{u}^* + \mathcal{E}_{x}$, by some basic tensor algebra, we have
    \begin{align*}
        \mathcal{X} \times_{3} \boldsymbol{u}^{(k)} &= (d_x^* \cdot \boldsymbol{V}^* \boldsymbol{V}^{*\prime} \circ \boldsymbol{u}^* + \mathcal{E}_{x}) \times_3 \boldsymbol{u}^{(k)}\\
        &= (d_x^* \cdot \boldsymbol{V}^* \boldsymbol{V}^{*\prime} \circ \boldsymbol{u}^*) \times_3 \boldsymbol{u}^{(k)} + \mathcal{E}_{x} \times_3 \boldsymbol{u}^{(k)}\\
        &= d_x^* \cdot \langle \boldsymbol{u}^*, \boldsymbol{u}^{(k)} \rangle \cdot \boldsymbol{V}^*\boldsymbol{V}^{*\prime} + \mathcal{E}_{x} \times_3 \boldsymbol{u}^{(k)},
    \end{align*}
    where $\langle \bu^*,\bu^{(k)}\rangle = \sqrt{1-\sin^2\theta(\bu^*,\bu^{(k)})}> 2\left(\frac{\|\cE_x\|_{\op}}{d_x^*}\vee \frac{\|\cE_y\|_{\op}}{d_y^*}\right)$, where the last inequality arises from our induction assumption that \eqref{eq:u_err_induction1} holds for $l=k$. By Weyl's inequality, we know that $\lambda_{r+1}(\cX\times_3 \bu^{(k+1)}) \leq \|\cE_x\times_3\bu^{(k)}\|_{\op}\leq \|\cE_x\|_{\op}<\frac{d_x}{2}$, where the third inequality is due to the definition of the tensor operator norm: $\|\cE_x\|_{\op} = \sup_{\bu\in \bbR^N,\bv\in\bbR^p,\bw\in\bbR^p}\cE_x\times_1\bv\times_2\bw\times_3\bu$, and the last inquality is due to the lower bound for $\langle \bu^*,\bu^{(k)}\rangle$ we just derived.
    Therefore, by Davis-Kahan's theorem \citep[see, e.g., Theorem 2.7 in][]{chen2021spectral}, we know that
    \begin{equation}\label{eq:V_err_update}
        \varepsilon_{v,k} =\|\sin\Theta(\bV^*,\bV^{(k+1)})\|_{\op}\leq \frac{\|\cE_x\|_{\op}}{d_x^* \langle \boldsymbol{u}^*, \boldsymbol{u}^{(k)} \rangle - \lambda_{r+1}(\cX\times_3 \bu^{(k+1)})}\leq \frac{2\|\cE_x\|_{\op}}{d_x^* \langle \boldsymbol{u}^*, \boldsymbol{u}^{(k)} \rangle} = \frac{2\|\cE_x\|_{\op}}{d_x^*\sqrt{1-\varepsilon_{u,k}^2}}.
    \end{equation}
 Furthermore, since we have assumed $\varepsilon_{u,k} = |\sin\theta(\bu^*,\bu^{(k)})|\leq \sqrt{1-8\left(\frac{\|\cE_x\|_{\op}^2}{d_x^{*2}}\vee\frac{\|\cE_y\|_{\op}^2}{d_y^{*2}}\right)}$, one can immediately show that
 \begin{equation}\label{eq:V_err_updatebnd}
         \varepsilon_{v,k}^2 \leq \frac{4\|\cE_x\|_{\op}^2}{d_x^{*2}(1-\varepsilon_{u,k}^2)}\leq \frac{1}{2}.
 \end{equation}
 Following the same argument, we can also bound $\|\sin\Theta(\bW,\bW^{(k+1)})\|_{\op}$ as follows:
    \begin{equation}\label{eq:W_err_update}
        \varepsilon_{w,k} =\|\sin\Theta(\bW^*,\bW^{(k+1)})\|_{\op}\leq \frac{2\|\cE_y\|_{\op}}{d_y^* \langle \boldsymbol{u}^*, \boldsymbol{u}^{(k)} \rangle} \leq \frac{2\|\cE_y\|_{\op}}{d_y^*\sqrt{1-\varepsilon_{u,k}^2}}\leq \frac{\sqrt{2}}{2}.
    \end{equation}
    Now we have verified that \eqref{eq:VW_err_induction} holds for $l=k+1$ when \eqref{eq:u_err_induction1} holds for $l=k$.
    
\paragraph{Analysis of $\bu^{(k+1)}$.}  Recall that we update $\boldsymbol{u}$ by $\boldsymbol{u}^{(k+1)} = \mathrm{Norm}\left(\lambda \left[\mathcal{X}; \boldsymbol{V}^{(k+1)}\right] + (1-\lambda) \left[\mathcal{Y}; \boldsymbol{W}^{(k+1)}\right] \right)$, where $\mathrm{Norm}(\cdot)$ is a normalization function that outputs a unit vector, and $[\cX; \bV^{(k+1)}]\in \bbR^{N}$ denotes the trace product defined in Section \ref{sec:notation}. Due to the definition of trace product, one has
    \begin{align*}
        \lambda \left[\mathcal{X}; \boldsymbol{V}^{(k+1)}\right] &= \lambda \left[d_{x}^* \cdot \boldsymbol{V}^*\boldsymbol{V}^{*\prime} \circ \boldsymbol{u}^* + \mathcal{E}_{x}; \boldsymbol{V}^{(k+1)}\right]\\
        & = \lambda \left[ d_{x}^* \cdot \boldsymbol{V}^*\boldsymbol{V}^{*\prime} \circ \boldsymbol{u}^*; \boldsymbol{V}^{(k+1)} \right] + \lambda \left[ \mathcal{E}_{x}; \boldsymbol{V}^{(k+1)} \right]\\
        & = \lambda d_{x}^* \mathrm{Tr}\left( (\boldsymbol{V}^{(k+1)})^{\prime} \boldsymbol{V}^* \boldsymbol{V}^{*\prime} \boldsymbol{V}^{(k+1)} \right) \cdot \boldsymbol{u}^* + \lambda \cdot \sum\limits_{i = 1}^{r} \mathcal{E}_{x} \times_{1} \boldsymbol{v}_{i}^{(k+1)} \times_{2} \boldsymbol{v}_{i}^{(k+1)}\\
        & = \lambda d_{x}^* \left\| \boldsymbol{V}^{*\prime} \boldsymbol{V}^{(k+1)} \right\|_{F}^{2} \cdot \boldsymbol{u}^* + \lambda \cdot \sum\limits_{i = 1}^{r} \mathcal{E}_{x} \times_{1} \boldsymbol{v}_{i}^{(k+1)} \times_{2} \boldsymbol{v}_{i}^{(k+1)}
    \end{align*}
    where in the third equality $\boldsymbol{v}_{i}$ denotes $i$th column of $\boldsymbol{V}$. By the same argument, we also have 
    \begin{align*}
        (1-\lambda) \left[\mathcal{Y}; \boldsymbol{W}^{(k+1)}\right]  = (1 - \lambda) d_{y}^* \left\|\boldsymbol{W}^{*\prime}\boldsymbol{W}^{(k+1)}\right\|_{F}^{2} \cdot \boldsymbol{u} + (1 - \lambda) \cdot \sum\limits_{i = 1}^{r} \mathcal{E}_{y} \times_{1} \boldsymbol{w}_{i}^{(k+1)} \times_{2} \boldsymbol{w}_{i}^{(k+1)},
    \end{align*}
    with $\boldsymbol{w}_{i}$ being the $i$th column of $\boldsymbol{W}$. Let 
    \begin{equation*}
    \begin{split}
        \alpha_{\lambda}^{(k+1)} = &\lambda d_x^*\|\bV^{*\prime}\bV^{(k+1)}\|_F^2  + (1-\lambda)d_y^*\|\bW^{*\prime}\bW^{(k+1)}\|_F^2,\\
        \boldsymbol{e}_{\lambda}^{(k+1)} = &\lambda[\cE_x;\bV^{(k+1)}] + (1-\lambda)[\cE_y;\bW^{(k+1)}]\\
        =&\lambda\sum_{i=1}^r\cE_x\times_1\bv_i^{(k+1)}\times_2\bv_i^{(k+1)}+(1-\lambda)\cE_y\times_1\bw_i^{(k+1)}\times_2\bw_i^{(k+1)}.
    \end{split}
    \end{equation*}
    Then we can also write the distance between $\bu^{(k+1)}$ and $\bu^*$ as follows:
    \begin{equation}\label{eq:u_update_bnd1}
        \begin{split}
            |\sin\theta(\bu^*,\bu^{(k+1)})| = &\sqrt{1 - \langle \bu^*,\bu^{(k+1)}\rangle^2}\\
            = &\sqrt{1 - \frac{\langle \bu^*,\alpha_{\lambda}^{(k+1)}\bu^* + \be_{\lambda}^{(k+1)}\rangle^2}{\|\alpha_{\lambda}^{(k+1)}\bu^* + \be_{\lambda}^{(k+1)}\|^2}}\\
            = &\sqrt{\frac{\|\be_{\lambda}^{(k+1)}\|_2^2 - \langle \be_{\lambda}^{(k+1)}, \bu^*\rangle^2}{\alpha_{\lambda}^{(k+1)2}+\|\be_{\lambda}^{(k+1)}\|_2^2 + 2\alpha_{\lambda}^{(k+1)}\langle \be_{\lambda}^{(k+1)}, \bu^*\rangle}}\\
            \leq &\frac{\|\be_{\lambda}^{(k+1)}\|_2}{\alpha_{\lambda}^{(k+1)} - \|\be_{\lambda}^{(k+1)}\|_2},
        \end{split}
    \end{equation}
where the third line is due to direct calculations, while the last line is due to Cauchy's inequality. 
    Recall that we have shown in \eqref{eq:V_err_updatebnd} and \eqref{eq:W_err_update} that $\|\sin\Theta(\bV^*,\bV^{(k+1)})\|_{\op}\leq \frac{\sqrt{2}}{2}$, $\|\sin\Theta(\bW^*,\bW^{(k+1)})\|_{\op}\leq \frac{\sqrt{2}}{2}$. Hence one can also show that
    \begin{equation*}
        \|\bV^{*\prime}\bV^{(k+1)}\|_F^2 = r_x - \|\sin\Theta(\bV^*, \bV^{(k+1)})\|_F^2 \geq r_x\left(1-\|\sin\Theta(\bV^*, \bV^{(k+1)})\|_{\op}^2\right)\geq \frac{r_x}{2}, 
    \end{equation*}
    suggesting that $\alpha_{\lambda}^{(k+1)}\geq \frac{1}{2}(\lambda r_x d_x^* + (1-\lambda)r_y d_y^*)$. On the other hand, the $\ell_2$ norm of the integrated noise term $\be_{\lambda}^{(k+1)}$ can be bounded as follows:
    \begin{equation*}
        \|\be_{\lambda}^{(k+1)}\|_2\leq \lambda r_x\|\cE_x\|_{\op} + (1-\lambda) r_y\|\cE_y\|_{\op}\leq \frac{1}{2}\alpha_{\lambda}^{(k+1)},
    \end{equation*}
    where we have applied the SNR condition that $d_x^*\geq 5\|\cE_x\|_{\op}$, $d_y^*\geq 5\|\cE_y\|_{\op}$(Assumption \ref{assump:SNR1}).
    Therefore, combining the results above with \eqref{eq:u_update_bnd1} leads us to 
    \begin{equation*}
            |\sin\theta(\bu^*,\bu^{(k+1)})|\leq \frac{2\|\be_{\lambda}^{(k+1)}\|_2}{\alpha_{\lambda}^{(k+1)}}\leq\frac{4\|\lambda \cE_x;(1-\lambda)\cE_y\|_{r_x,r_y,\op}}{\lambda r_xd_x^* + (1-\lambda)r_yd_y^*}.
    \end{equation*}
    Now we only need to show that \eqref{eq:u_err_induction1} holds for $l=k+1$. Since 
    \begin{align*}
        |\sin\theta(\bu^*,\bu^{(k+1)})|\leq &\frac{4\|\lambda \cE_x;(1-\lambda)\cE_y\|_{r_x,r_y,\op}}{\lambda r_xd_x^* + (1-\lambda)r_yd_y^*}\\
        \leq &\frac{4\lambda r_x\|\cE_x\|_{\op} + 4(1-\lambda) r_y\|\cE_y\|_{\op}}{\lambda r_xd_x^* + (1-\lambda)r_yd_y^*}\\
        \leq &\frac{4\|\cE_x\|_{\op}}{d_x^*}\vee\frac{4\|\cE_y\|_{\op}}{d_y^*},
    \end{align*}
   and $d_x^*\geq 5\|\cE_x\|_{\op}$, $d_y^*\geq \|\cE_y\|_{\op}$ by Assumption \ref{assump:SNR1}, we have $|\sin\theta(\bu^*,\bu^{(k+1)})|\leq\frac{4}{5}$. On the other hand, Assumption \ref{assump:SNR1} suggests that the R.H.S. of \eqref{eq:u_err_induction1} is lower bounded by $\frac{17}{25}<\frac{4}{5}$, and hence \eqref{eq:u_err_induction1} holds for $l=k+1$. The proof of Claim \ref{claim:induction1} is now complete. Furthermore, combining the fact that $|\sin\theta(\bu^*,\bu^{(k+1)})|\leq\frac{4}{5}$ and \eqref{eq:VW_err_induction}, we have also validated \eqref{eq:main_err2}. Our proof for Theorem \ref{thm:stat_converg_main} is now complete.
\end{proof}
\begin{proof}[Proof of Theorem \ref{thm:main_subgauss}]
We first note that $\|\cE_x\|_{\op}$ and $\|\cE_y\|_{\op}$ can both be bounded with high probability based on existing results for spectral norms of sub-Gaussian tensors. To deal with the dependency brought by the semi-symmetric constraint, we decompose them as upper and lower triangular components: $\cE_x = \cE_{x,1} + \cE_{x,2}$, $\cE_y = \cE_{y,1} + \cE_{y,2}$, where $\cE_{x,1},\,\cE_{x,2}\in \bbR^{p\times p \times N}$, $\cE_{y,1}$ satisfy $(\cE_{x,1})_{i,j,k} = \begin{cases}
    (\cE_x)_{i,j,k},&i<j,\\
    \frac{1}{2}(\cE_x)_{i,j,k},&i=j,\\
    0,&i>j,\\
\end{cases}$
$(\cE_{x,2})_{i,j,k} = \begin{cases}
    0,&i<j,\\
    \frac{1}{2}(\cE_x)_{i,j,k},&i=j,\\
    (\cE_x)_{i,j,k},&i>j,\\
\end{cases}$; $\cE_{y,1},\,\cE_{y,2}\in \bbR^{q\times q\times N}$ are defined similarly. $\cE_{x,1},\,\cE_{x,2},\,\cE_{y,1},\,\cE_{y,2}$ have independent, zero-mean, sub-Gaussian entries with sub-Gaussian parameter bounded by $\sigma$. Therefore, we can apply Theorem 1 and Lemma 1 in \cite{tomioka2014spectral} on them with $K=3$, $\delta = 2e^{-N}$. Then with probability at least $1-4e^{-N}$, we have
$\|\cE_x\|_{\op}\leq \|\cE_{x,1}\|_{\op} +\|\cE_{x,2}\|_{\op}\leq 16\sigma\sqrt{N+p}$, $\|\cE_y\|_{\op}\leq \|\cE_{y,1}\|_{\op} +\|\cE_{y,2}\|_{\op}\leq 16\sigma\sqrt{N+q}$. Therefore, Assumption \ref{assump:SNR2} implies Assumption \ref{assump:SNR1}. Combining Theorem \ref{thm:stat_converg_main} and Proposition \ref{thm:init}, and also noting that
\begin{equation*}
    \|\lambda\cE_x;(1-\lambda)\cE_y\|_{r_x,r_y,\op}\leq \lambda r_x\|\cE_x\|_{\op} + (1-\lambda) r_y\|\cE_y\|_{\op}\leq 16\sigma (\lambda r_x\sqrt{N+p} + (1-\lambda)r_y\sqrt{N+q}),
\end{equation*}
we have completed the proof of Theorem \ref{thm:main_subgauss}.
\end{proof}
\subsection{Proof of Warm Initialization (Corollary \ref{cor:warmInit}) and Spectral Initialization (Proposition \ref{thm:init})}
\begin{proof}[Proof of Corollary \ref{cor:warmInit}]
    Given Theorem \ref{thm:stat_converg_main}, it suffices to show the warm initialization $\bu^{(0)} = (\frac{1}{\sqrt{N}}, \dots,\frac{1}{\sqrt{N}})$ satisfies Assumption \ref{assump:init}. Now note that $\widehat{\mathrm{Var}}(\bu^*) = \frac{1}{N}\sum_i(\bu_i^* - \frac{1}{N}\sum_j\bu^*_j)^2 = \frac{1}{N}\sum_i(\bu_i^*)^2 - (\widehat{\bbE}(u^*))^2 = \frac{1}{N} - (\widehat{\bbE}(u^*))^2$, where we have utilized the fact that $\bu^*$ is a unit vector. Therefore, $\frac{\widehat{\mathrm{Var}}(u^*)}{(\widehat{\bbE}(u^*))^2}\leq \left(\frac{d_x^{*2}}{8\|\cE_x\|_{\op}^2}\wedge\frac{d_y^{*2}}{8\|\cE_y\|_{\op}^2}\right) - 1$ implies $(\widehat{\bbE}(u^*))^2\geq \frac{8}{N}\left(\frac{\|\cE_x\|_{\op}^2}{d_x^{*2}}\vee\frac{\|\cE_y\|_{\op}^2}{d_y^{*2}}\right)$, and hence
    \begin{equation*}
        |\sin\theta(\bu^*,\bu^{(0)})|^2 = 1-(\bu^{*'}\bu^{(0)})^2 \leq 1-N(\widehat{\bbE}(u^*))^2\leq 1-8\left(\frac{\|\cE_x\|_{\op}^2}{d_x^{*2}}\vee\frac{\|\cE_y\|_{\op}^2}{d_y^{*2}}\right),
    \end{equation*}
    which is Assumption \ref{assump:init}. The proof is now complete.
\end{proof}
\begin{proof}[Proof of Proposition \ref{thm:init}] Recall that $\mathcal{X} = d_{x}^*\cdot \boldsymbol{V}^*\boldsymbol{V}^{*\prime} \circ \boldsymbol{u}^* + \mathcal{E}_{x}$ and $\mathcal{Y} = d_{y}^* \cdot \boldsymbol{W}^*\boldsymbol{W}^{*\prime} \circ \boldsymbol{u}^* + \mathcal{E}_{y}$. By our construction, $\boldsymbol{u}^{(0)}$ is the leading left singular vector of $\left[\lambda\mathcal{M}_{3}(\mathcal{X}), (1-\lambda)\mathcal{M}_{3}(\mathcal{Y}) \right] \in \mathbb{R}^{N \times (p^{2} + q^{2})}$, which can be written as
    \begin{align*}
        \left[\lambda\mathcal{M}_{3}(\mathcal{X}), (1-\lambda)\mathcal{M}_{3}(\mathcal{Y}) \right] &= \boldsymbol{u}^*\left[\lambda d_{x}^* \cdot \mathrm{Vec}\left(\boldsymbol{V}^*\boldsymbol{V}^{*\prime}\right)^\prime, (1-\lambda)d_{y}^*\cdot \mathrm{Vec}\left(\boldsymbol{W}^*\boldsymbol{W}^{*\prime}\right)^\prime \right] + \left[\lambda\mathcal{M}_{3}(\mathcal{E}_{x}), \mathcal{M}_{3}(\mathcal{E}_{y}) \right]\\
        &= d_{\lambda}\cdot \boldsymbol{u}^*\boldsymbol{z}^{\prime} + \bE_{\lambda},
    \end{align*}
    by letting $d_{\lambda} = \sqrt{\lambda^2r_xd_x^{*2} + (1-\lambda)^2r_yd_y^{*2}}$, $\boldsymbol{z} = \mathrm{Norm}(\left[\lambda d_{x}^* \cdot \mathrm{Vec}\left(\boldsymbol{V}\boldsymbol{V}^{\prime}\right)^\prime, (1-\lambda)d_{y}^*\cdot \mathrm{Vec}\left(\boldsymbol{W}\boldsymbol{W}^{\prime}\right)^\prime \right])$, where $\mathrm{Norm}(\cdot)$ is a normalization function that outputs a unit vector, and $\bE_{\lambda} = \left[\lambda\mathcal{M}_{3}(\mathcal{E}_{x}), (1-\lambda)\mathcal{M}_{3}(\mathcal{E}_{y}) \right]$. We can also think of $\boldsymbol{u}^{(0)}$ as the leading left singular vector of $$(d_{\lambda}\cdot \boldsymbol{u}^*\boldsymbol{z}^{\prime} + \bE_{\lambda})(d_{\lambda}\cdot \boldsymbol{u}^*\boldsymbol{z}^{\prime} + \bE_{\lambda})' = d_{\lambda}^2\bu^*\bu^{*\prime} + d_{\lambda}\bu^*(\bE_{\lambda}\bz)' + d_{\lambda}\bE_{\lambda}\bz\bu^{*\prime}+ \bE_{\lambda}\bE_{\lambda}'.$$
    
Recall Assumption \ref{assump:subGaussNoise} on the distributional properties of noise tensors. 
Since the variances of $\cE_x$ and $\cE_y$ are homogeneous across the third mode, we know that $$\bbE\|(\bE_{\lambda})_{k,:}\|_2^2 = \lambda^2\sum_{i,j=1}^p\mathrm{Var}((\cE_x)_{i,j,k}) + (1-\lambda)^2 \sum_{i,j=1}^q\mathrm{Var}((\cE_y)_{i,j,k})$$ take the same value for all $1\leq k\leq N$. We also denote this variance term by $\sigma_{\lambda}^2$, and thus $\bbE \bE_{\lambda}\bE_{\lambda}' = \sigma_{\lambda}^2\boldsymbol{I}_{N\times N}$. Therefore, $\bu^{(0)}$ is also the leading left singular vector of $d_{\lambda}^2\bu^*\bu^{*\prime} + d_{\lambda}\bu^*(\bE_{\lambda}\bz)' + d_{\lambda}\bE_{\lambda}\bz\bu^{*\prime}+ \bE_{\lambda}\bE_{\lambda}' - \bbE(\bE_{\lambda}\bE_{\lambda}')$, with the first term being the rank-1 signal and the rest three terms being random mean-zero perturbations. We would like to invoke the Davis-Kahan's theorem to upper bound $\sin\theta(\bu^*,\bu^{(0)})$, but before that, we first show an upper bound of the spectral norm of the noise term $d_{\lambda}\bu^*(\bE_{\lambda}\bz)' + d_{\lambda}\bE_{\lambda}\bz\bu^{*\prime}+ \bE_{\lambda}\bE_{\lambda}' - \bbE(\bE_{\lambda}\bE_{\lambda}')$. 

To avoid the dependency issue brought by the semi-symmetric constraint, we define a reorganized noise matrix $\widetilde{\bE}_{\lambda}\in \bbR^{N\times [p(p-1) + q(q-1)]}$ that satisfies $(\widetilde{\bE}_{\lambda})_{i,:} = [\lambda \be_{x,i}', (1-\lambda)\be_{y,i}']$ where $\be_{x,i}\in \bbR^{p(p-1)}$ ($\be_{y,i}\in \bbR^{q(q-1)}$) consists of diagonal and upper triangular entries of $(\cE_x)_{:,:,i}$ ($(\cE_y)_{:,:,i}$):
    \begin{align*}
        \be_{x,i} = [\mathrm{diag}((\cE_x)_{j,j,i})_{1\leq j\leq p},\sqrt{2}((\cE_x)_{j,k,i})_{1\leq j<k\leq p}].
    \end{align*}

We also define $\widetilde{\bz}\in \bbR^{p(p-1)+q(q-1)}$ based on $\bz$ similarly:
\begin{equation*}
    \widetilde{\bz} = \mathrm{Norm}([\lambda d_x^*[\mathrm{diag}(\bV\bV'), \sqrt{2}((\bV\bV')_{j,k})_{1\leq j<k\leq p}],(1-\lambda) d_y^*[\mathrm{diag}(\bW\bW'), \sqrt{2}((\bW\bW')_{j,k})_{1\leq j<k\leq q}]).
\end{equation*}
Then we can write $\bE_{\lambda}\bz = \widetilde{\bE}_{\lambda}\widetilde{\bz}$, $\bE_{\lambda}\bE_{\lambda}' = \widetilde{\bE}_{\lambda}\widetilde{\bE}_{\lambda}'$, where $\widetilde{\bE}_{\lambda}$ has independent, mean-zero, sub-Gaussian-$\sqrt{2}\sigma$ entries. 

    Now we first bound $\|d_{\lambda}\widetilde{\bE}_{\lambda}\widetilde{\bz}\bu^{*\prime}\| = \|d_{\lambda}\bu^*(\widetilde{\bE}_{\lambda}\widetilde{\bz})'\| = d_{\lambda}\|\widetilde{\bE}_{\lambda}\widetilde{\bz}\|_2$. Since $\{(\widetilde{\bE}_{\lambda}\widetilde{\bz})_i\}_{i=1}^N$ are independent zero-mean sub-Gaussian random variables with sub-Gaussian parameter $\sqrt{\sum_{i}2\sigma^2\widetilde{\bz}_i^2}=\sqrt{2}\sigma$, and hence $(\widetilde{\bE}_{\lambda}\widetilde{\bz})_i^2-\bbE(\widetilde{\bE}_{\lambda}\widetilde{\bz})_i^2$ are independent sub-exponential-$4\sigma^2$ random variables. Therefore, we can apply the Bernstein-type inequality \citep[see e.g., Proposition 5.6 in][]{vershynin2010introduction} for sub-exponential random variables to obtain the following:
    $$
    \|\widetilde{\bE}_{\lambda}\widetilde{\bz}\|_2^2 \leq \bbE\|\widetilde{\bE}_{\lambda}\widetilde{\bz}\|_2^2+C\sigma^2\sqrt{N\log N}\leq C\sigma^2N
    $$
    with probability at least $1-N^{-c}$, which implies $$
    \|d_{\lambda}\bu^*(\bE_{\lambda}\bz)' + d_{\lambda}\bE_{\lambda}\bz\bu^{*\prime}\|\leq C\sigma \sqrt{N}d_{\lambda}.$$

    For the last noise term $\bE_{\lambda}\bE_{\lambda}' - \bbE(\bE_{\lambda}\bE_{\lambda}') = \widetilde{\bE}_{\lambda}\widetilde{\bE}_{\lambda}' - \bbE(\widetilde{\bE}_{\lambda}\widetilde{\bE}_{\lambda}')$, we would like to apply a technical lemma from \cite{zhou2023deflated}. In particular, since $\widetilde{\bE}_{\lambda}$ has independent zero-mean sub-Gaussian-$\sqrt{2}\sigma$, it satisfies Assumption 3 in \cite{zhou2023deflated} with $\omega_{\max} = \sqrt{2}\sigma$, $B = C\sigma\sqrt{\log N}$ and $\varepsilon = N^{-c}$. We apply Lemma 7 in \cite{zhou2023deflated} with $\bE = \widetilde{\bE}_{\lambda}$, which gives us 
    $$
    \|\widetilde{\bE}_{\lambda}\widetilde{\bE}_{\lambda}' - \mathrm{diag}(\widetilde{\bE}_{\lambda}\widetilde{\bE}_{\lambda}')\| \leq C\sigma^2 (N + \sqrt{N(p^2 + q^2)})\log N,
    $$
    with probability at least $1-N^{-c}$. Furthermore, since $\|\mathrm{diag}(\widetilde{\bE}_{\lambda}\widetilde{\bE}_{\lambda}') - \bbE(\widetilde{\bE}_{\lambda}\widetilde{\bE}_{\lambda}')\|=\max_i\|(\widetilde{\bE}_{\lambda})_{i,:}\|_2^2 - \sigma_{\lambda}^2$ and $\|(\widetilde{\bE}_{\lambda})_{i,:}\|_2^2 - \sigma_{\lambda}^2$ is the sum of $p(p-1) + q(q-1)$ independent zero-mean sub-exponential-$4\sigma^2$ random variables, we can again apply the Bernstein-type inequality \citep[see e.g., Proposition 5.6 in][]{vershynin2010introduction} to derive the following bounds:
    \begin{equation*}
        \begin{split}
            &\bbP(\|\mathrm{diag}(\widetilde{\bE}_{\lambda}\widetilde{\bE}_{\lambda}') - \bbE(\widetilde{\bE}_{\lambda}\widetilde{\bE}_{\lambda}')\|>C\sigma^2\sqrt{p^2+q^2}\sqrt{\log N})\\
            \leq &\sum_{i=1}^N\bbP(|\|(\widetilde{\bE}_{\lambda})_{i,:}\|_2^2 - \sigma_{\lambda}^2|>C\sigma^2\sqrt{p^2+q^2}\sqrt{\log N})\\
            \leq&2N\exp\{-c\log N\}\\
            \leq&N^{-c}.
        \end{split}
    \end{equation*}
    Hence, with probability at least $1-CN^{-c}$, 
    \begin{equation*}
        \begin{split}
            &\|d_{\lambda}\bu^*(\bE_{\lambda}\bz)' + d_{\lambda}\bE_{\lambda}\bz\bu^{*\prime}+ \bE_{\lambda}\bE_{\lambda}' - \bbE(\bE_{\lambda}\bE_{\lambda}')\|\\
            \leq &Cd_{\lambda}\sigma\sqrt{N} + C\sigma^2(N+\sqrt{N(p^2+q^2)})\log N\\
            \leq &\frac{1}{2}d_{\lambda}^2,
        \end{split}
    \end{equation*}
    where the last inequality is due to Assumption \ref{assump:SNR2}.
    
    Now we return to the original problem and invoke the Davis-Kahan's theorem \citep[see, e.g., Theorem 2.7 in][]{chen2021spectral}:
    \begin{equation*}
    \begin{split}
        \sin\theta(\bu^*,\bu^{(0)})\leq &\frac{2\|d_{\lambda}\bu^*(\bE_{\lambda}\bz)' + d_{\lambda}\bE_{\lambda}\bz\bu^{*\prime}+ \bE_{\lambda}\bE_{\lambda}' - \bbE(\bE_{\lambda}\bE_{\lambda}')\|}{d_{\lambda}^2}\\
        \leq &\frac{C\sigma\sqrt{N}}{d_{\lambda}} + \frac{C\sigma^2(N+\sqrt{N(p^2+q^2)})\log N}{d_{\lambda}^2}\\
        \leq&\frac{C\sigma\left(\sqrt{N} + (N(p^2+q^2))^{\frac{1}{4}}\right)\sqrt{\log N}}{d_{\lambda}}
    \end{split}
    \end{equation*}
\end{proof}

\subsection{Proof of Theoretical Guarantees for the Generalized JisstPCA: Theorem \ref{thm:converg_D}}
    In this section, we will first establish bounds for $\bu^{(k)}$, $\bV^{(k)}$, and $\bW^{(k)}$ under an initialization condition and some deterministic conditions for $\cE_x$, $\cE_y$; we then prove that under Assumptions \ref{assump:dJisst_SNR1} and \ref{assump:dJisst_SNR2}, all these conditions hold with high probability. 
    \subsubsection{Deterministic Bounds for the Joint Factor}
    In particular, we will first assume the following two conditions hold and will revisit and show them hold with desired probabilities at the end of the proof.
    \begin{cond}[Initialization condition for generalized JisstPCA]\label{cond:init}
        $$
        |\sin\theta(\bu^*,\bu^{(0)})|\leq \min\left\{\sqrt{1-\frac{32r_x\|\cE_x\|_{\op}^2}{\|\bD^{*}_x\|_F^2}},\sqrt{1-\frac{32r_y\|\cE_y\|_{\op}^2}{\|\bD^{*}_y\|_F^2}}\right\}.
        $$
    \end{cond}
    \begin{cond}[Deterministic SNR condition for generalized JisstPCA]\label{cond:SNR1}
        $$
        \|\bD^{*}_x\|_F\geq 12\sqrt{r_x}\|\cE_x\|_{\op},\quad \|\bD^{*}_x\|_F\geq 12\sqrt{r_y}\|\cE_y\|_{\op}.
        $$
    \end{cond}
    We first note that in the generalized JisstPCA algorithm, the update $\bu^{(1)}$ is computed as $\bu^{(1)} = \mathrm{Norm}(\lambda[\cX; \bV^{(1)}, \bD^{(1)}_x] + (1-\lambda)[\cY; \bW^{(1)}, \bD^{(1)}_y])$, where $\mathrm{Norm}(\cdot)$ is a normalization function outputting a unit vector, and $[\cX; \bV^{(1)}, \bD^{(1)}_x]$, $[\cY; \bW^{(1)}, \bD^{(1)}_y]\in \bbR^{N}$ are the trace products:
    \begin{equation*}
        [\cX; \bV^{(1)}, \bD^{(1)}_x]_i = \langle \cX_{:,:,i}, \bV^{(1)}\bD^{(1)}_x\bV^{(1)\prime}\rangle,\quad [\cY; \bW^{(1)}, \bD^{(1)}_y]_i = \langle \cY_{:,:,i}, \bW^{(1)}\bD^{(1)}_y\bW^{(1)\prime}\rangle.
    \end{equation*}
    Due to the generative model for $\cX$ and $\cY$ in \eqref{gen_model} with $K=1$, we have
    \begin{equation*}
    \begin{split}
        [\cX; \bV^{(1)}, \bD^{(1)}_x] = &\boldsymbol{V}^*\boldsymbol{D}^{x*} \boldsymbol{V}^{*\prime} \circ \boldsymbol{u}^*;  \bV^{(1)}, \bD^{(1)}_x] + [\mathcal{E}_{x}; \bV^{(1)}, \bD^{(1)}_x]\\
        = & \langle \boldsymbol{V}^*\boldsymbol{D}^{x*} \boldsymbol{V}^{*\prime}, \bV^{(1)}\bD^{(1)}_x\bV^{(1)\prime}\rangle \boldsymbol{u}^* + \sum_{i=1}^{r^x}(\bD^{(1)}_x)_{i,i}\mathcal{E}_{x}\times_1\bv^{(1)}_i\times_2\bv^{(1)}_i,
    \end{split}
    \end{equation*}
    and 
     \begin{equation*}
        [\cY; \bW^{(1)}, \bD^{(1)}_y] = \langle \boldsymbol{W}^*\boldsymbol{D}^{y*} \boldsymbol{W}^{*\prime}, \bW^{(1)}\bD^{(1)}_y\bW^{(1)\prime}\rangle \boldsymbol{u}^* + \sum_{i=1}^{r^y}(\bD^{(1)}_y)_{i,i}\mathcal{E}_{y}\times_1\bw^{(1)}_i\times_2\bw^{(1)}_i,
    \end{equation*}
    where $\bv^{(1)}_i$ and $\bw^{(1)}_i$ are the $i$th columns of $\bV^{(1)}\in \bbR^{p\times r^x}$ and $\bW^{(1)}\in \bbR^{p\times r^y}$.
    Let $$\alpha_{\lambda} = \lambda\langle \boldsymbol{V}^*\boldsymbol{D}^{x*} \boldsymbol{V}^{*\prime}, \bV^{(1)}\bD^{(1)}_x\bV^{(1)\prime}\rangle + (1-\lambda)\langle \boldsymbol{W}^*\boldsymbol{D}^{y*} \boldsymbol{W}^{*\prime}, \bW^{(1)}\bD^{(1)}_y\bW^{(1)\prime}\rangle$$ be the pooled signal, and let 
    $$
    \be_{\lambda} = \lambda\sum_{i=1}^{r_x}(\bD^{(1)}_x)_{i,i}\mathcal{E}_{x}\times_1\bv^{(1)}_i\times_2\bv^{(1)}_i + (1-\lambda)\sum_{i=1}^{r^y}(\bD^{(1)}_y)_{i,i}\mathcal{E}_{y}\times_1\bw^{(1)}_i\times_2\bw^{(1)}_i
    $$
    be the pooled noise in $\lambda[\cX; \bV^{(1)}, \bD^{(1)}_x] + (1-\lambda)[\cY; \bW^{(1)}, \bD^{(1)}_y]$. Then similar to the proof of Theorem \ref{thm:stat_converg_main}, we have
    \begin{equation}\label{eq:u_update_dist}
        |\sin\theta(\bu^*,\bu^{(1)})| \leq \frac{\|\be_{\lambda}\|_2}{\alpha_{\lambda} - \|\be_{\lambda}\|_2}.
    \end{equation}
    In the following, we provide a lower bound for $\alpha_{\lambda}$ and an upper bound for $\|\be_{\lambda}\|_2$. To lower bound $\alpha_{\lambda}$, we first notice that $\bV^{(1)}\bD^{(1)}_x\bV^{(1)\prime}$ is the top rank-$r^x$ SVD of $\cX\times \bu^{(0)}$, meaning that it is also its best rank-$r^x$ approximation in Frobenius norm error. In the meantime, we have the decomposition $\cX\times_3 \bu^{(0)} = \langle \bu^*, \bu^{(0)}\rangle \bV^*\bD^{*}_x\bV^{*\prime} + \cE_x\times_3\bu^{(0)}$, which is a rank-$r^x$ matrix plus a perturbation. The following lemma shows that given a perturbed low-rank matrix, one can upper bound the estimation error of the top SVD solution for the true low-rank matrix.
    \begin{lem}\label{lem:perturb_frobenius_err}
        Suppose that $\bX = \bX^* + \bE$ where $\bX^*$ is of rank $r$. Let $\bX_r = \hat{\bU}\hat{\boldsymbol{\Lambda}}\hat{\bV}^{\prime}$ be the top-$r$ SVD of $\bX$, then $\|\bX_r-\bX^*\|_F\leq 2\sqrt{2r}\|\bE\|$.
    \end{lem}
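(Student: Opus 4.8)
The plan is to reduce the Frobenius-norm bound to a spectral-norm estimate by exploiting the low rank of the error matrix, which is exactly the device that produces the factor $\sqrt{2r}$ rather than an undesirable $\|\bE\|_F$. Throughout, $\|\cdot\|$ denotes the spectral (operator) norm. First I would observe that $\bX_r$ has rank at most $r$ by construction and $\bX^*$ has rank exactly $r$ by hypothesis, so the difference $\bX_r-\bX^*$ has rank at most $2r$. For any matrix $\bM$ of rank at most $s$ one has $\|\bM\|_F\le\sqrt{s}\,\|\bM\|$, and hence $\|\bX_r-\bX^*\|_F\le\sqrt{2r}\,\|\bX_r-\bX^*\|$. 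This converts the target into a spectral-norm bound inflated only by $\sqrt{2r}$.

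Next I would control $\|\bX_r-\bX^*\|$ by inserting $\bX$ and applying the triangle inequality, so that $\|\bX_r-\bX^*\|\le\|\bX_r-\bX\|+\|\bX-\bX^*\|$. The second term equals $\|\bE\|$ directly from $\bX=\bX^*+\bE$. For the first term, since $\bX_r$ is the top-$r$ SVD of $\bX$, it is the best rank-$r$ approximation of $\bX$ in spectral norm as well, so by the Eckart--Young--Mirsky theorem $\|\bX_r-\bX\|=\sigma_{r+1}(\bX)$, the $(r+1)$-th largest singular value of $\bX$.

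Finally I would bound $\sigma_{r+1}(\bX)$ via Weyl's inequality for singular values, $\sigma_{r+1}(\bX)\le\sigma_{r+1}(\bX^*)+\|\bE\|$; because $\bX^*$ has rank $r$, its $(r+1)$-th singular value vanishes, giving $\sigma_{r+1}(\bX)\le\|\bE\|$. Combining the three steps yields $\|\bX_r-\bX^*\|\le 2\|\bE\|$ and therefore $\|\bX_r-\bX^*\|_F\le 2\sqrt{2r}\,\|\bE\|$, as claimed. There is no serious obstacle in this argument; the only point requiring care is to route the triangle inequality through the spectral norm (using Eckart--Young and Weyl) rather than through the Frobenius norm, since the naive split $\|\bX_r-\bX^*\|_F\le\|\bX_r-\bX\|_F+\|\bE\|_F$ would leave the uncontrolled quantity $\|\bE\|_F$ in the bound.
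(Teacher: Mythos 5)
Your proposal is correct and follows essentially the same route as the paper: both reduce to the spectral norm via the rank-$2r$ bound $\|\bX_r-\bX^*\|_F\le\sqrt{2r}\,\|\bX_r-\bX^*\|$, apply the triangle inequality through $\bX$, and use the optimality of $\bX_r$ among rank-$r$ matrices to get $\|\bX_r-\bX\|\le\|\bE\|$. The only cosmetic difference is that you route that last step through Eckart--Young--Mirsky and Weyl's inequality, whereas the paper simply compares $\bX_r$ against the rank-$r$ competitor $\bX^*$ directly; the two are equivalent.
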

    Applying Lemma \ref{lem:perturb_frobenius_err} with $\bX^* = \langle \bu^*, \bu^{(0)}\rangle \bV^*\bD^{*}_x\bV^{*\prime}$, $\bE = \cE_x\times_3\bu^{(0)}$, we then have
    \begin{equation}\label{eq:VDV_err}
        \|\bV^{(1)}\bD^{(1)}_x\bV^{(1)\prime} - \langle \bu^*, \bu^{(0)}\rangle \bV^*\bD^{*}_x\bV^{*\prime}\|_F\leq 2\sqrt{2r^x}\|\cE_x\times_3\bu^{(0)}\|\leq 2\sqrt{2r^x}\|\cE_x\|_{\op}.
    \end{equation}
    The Frobenious norm error bound \eqref{eq:VDV_err} also suggests
    \begin{equation*}
        \begin{split}
            \langle \bV^{(1)}\bD^{(1)}_x\bV^{(1)\prime}, \bV^*\bD^{*}_x\bV^{*\prime}\rangle \geq &\langle \bu^*, \bu^{(0)}\rangle\|\bV^*\bD^{*}_x\bV^{*\prime}\|_F^2 - 2\sqrt{2r^x}\|\cE_x\|_{\op}\|\bV^*\bD^{*}_x\bV^{*\prime}\|_F\\
            \geq &\frac{1}{2}\langle \bu^*, \bu^{(0)}\rangle\|\bD^{*}_x\|_F^2,
        \end{split}
    \end{equation*}
    where the last line is due to Condition \ref{cond:init}, which implies  \begin{equation}\label{eq:update_SNR_generalized}
        2\sqrt{2r^x}\|\cE_x\|_{\op}\leq \frac{1}{2}\sqrt{1-\sin^2\theta(\bu,\bu^{(0)})}\|\bD^{*}_x\|_F=\frac{1}{2}\langle\bu^*,\bu^{(0)}\rangle \|\bV^*\bD^{*}_x\bV^{*\prime}\|_F.
    \end{equation} Similarly, one can show that $\langle \bW^{(1)}\bD^{(1)}_y\bW^{(1)\prime}, \bW^*\bD^{*}_y\bW^{*\prime}\rangle \geq \frac{1}{2}\langle \bu^*, \bu^{(0)}\rangle\|\bD^{*}_y\|_F^2$, leading to a lower bound for $\alpha_{\lambda}$:
    $$
    \alpha_{\lambda}\geq \frac{1}{2}\langle \bu^*, \bu^{(0)}\rangle\big(\lambda\|\bD^{*}_x\|_F^2+(1-\lambda)\|\bD^{*}_y\|_F^2\big).
    $$
    On the other hand, to upper bound $\|\be_{\lambda}\|_2$, we note that
    \begin{equation*}
    \begin{split}
        \left\|\sum_{i=1}^{r_x}(\bD^{(1)}_x)_{i,i}\mathcal{E}_{x}\times_1\bv^{(1)}_i\times_2\bv^{(1)}_i\right\|_2\leq &\sum_{i=1}^{r_x}|(\bD^{(1)}_x)_{i,i}|\big\|\mathcal{E}_{x}\times_1\bv^{(1)}_i\times_2\bv^{(1)}_i\big\|_2\\
        \leq&\|\bD^{(1)}_x\|_F\sqrt{\sum_{i=1}^{r_x}\big\|\mathcal{E}_{x}\times_1\bv^{(1)}_i\times_2\bv^{(1)}_i\big\|_2^2}\\
        \leq&\sqrt{r_x}\|\bD^{(1)}_x\|_F\|\cE_x\|_{\op}\\
        \leq&\sqrt{r_x}(\langle \bu^*,\bu^{(0)}\rangle\|\bD^{*}_x\|_F + 2\sqrt{2r_x}\|\cE_x\|_{\op})\|\cE_x\|_{\op}\\
        \leq&\frac{3}{2}\sqrt{r_x}\langle \bu^*,\bu^{(0)}\rangle\|\bD^{*}_x\|_F\|\cE_x\|_{\op}.
    \end{split}
    \end{equation*}
    where the fourth line is due to \eqref{eq:VDV_err}, and the last line is due to \eqref{eq:update_SNR_generalized}. Recall Condition \ref{cond:SNR1}, we know that $\left\|\sum_{i=1}^{r_x}(\bD^{(1)}_x)_{i,i}\mathcal{E}_{x}\times_1\bv^{(1)}_i\times_2\bv^{(1)}_i\right\|_2\leq \frac{1}{8}\sqrt{r_x}\langle \bu^*,\bu^{(0)}\rangle\|\bD_x^*\|_F^2\leq \frac{1}{4}\alpha_{\lambda}$. Similarly, we can apply the same argument on $\left\|\sum_{i=1}^{r_y}(\bD^{(1)}_y)_{i,i}\mathcal{E}_{y}\times_1\bw^{(1)}_i\times_2\bw^{(1)}_i\right\|_2$ and obtain that $$\left\|\sum_{i=1}^{r_y}(\bD^{(1)}_y)_{i,i}\mathcal{E}_{y}\times_1\bw^{(1)}_i\times_2\bw^{(1)}_i\right\|_2\leq \frac{3}{2}\sqrt{r_y}\langle\bu^*,\bu^{(0)}\rangle \|\bD_y^*\|_F\leq \frac{1}{4}\alpha_{\lambda}.$$ Therefore, plugging in these bounds into \eqref{eq:u_update_dist} gives us a deterministic upper bound for $\big|\sin\theta(\bu^*,\bu^{(1)})\big|$ under Conditions \ref{cond:init} and \ref{cond:SNR1}:
    \begin{equation}\label{eq:generalized_u_bnd_deterministc}
    \begin{split}
         \big|\sin\theta(\bu^*,\bu^{(1)})\big| \leq &\frac{4\|\be_{\lambda}\|_2}{3\alpha_{\lambda}}\\
         \leq&\frac{8\|\be_{\lambda}\|_2}{3\langle\bu^*,\bu^{(0)}\rangle(\lambda\|\bD_x^*\|_F^2 + (1-\lambda)\|\bD_y^*\|_F^2)}\\
         \leq&\frac{4\lambda\sqrt{r_x}\|\bD_x^*\|_F\|\cE_x\|_{\op} + 4(1-\lambda)\sqrt{r_y}\|\bD_y^*\|_F\|\cE_y\|_{\op}}{\lambda\|\bD_x^*\|_F^2 + (1-\lambda)\|\bD_y^*\|_F^2}.
    \end{split}
    \end{equation}
    Furthermore, \eqref{eq:generalized_u_bnd_deterministc} and Condition \ref{cond:SNR1} together imply
    \begin{equation*}
    \begin{split}
        \big|\sin\theta(\bu^*,\bu^{(1)})\big| \leq &\frac{4\sqrt{r_x}\|\cE_x\|_{\op}}{\|\bD_x^*\|_F} \vee \frac{4\sqrt{r_y}\|\cE_y\|_{\op}}{\|\bD_y^*\|_F}\\
        \leq &\frac{1}{3}\\
        \leq&\sqrt{1-\frac{32r_x\|\cE_x\|_{\op}^2}{\|\bD_y^*\|_F^2}}\vee\sqrt{1-\frac{32r_y\|\cE_y\|_{\op}^2}{\|\bD_y^*\|_F^2}},
    \end{split}
    \end{equation*} 
    and thus the initialization condition (Condition \ref{cond:init}) is also satisfied by $\bu^{(1)}$. Therefore, we can apply the same arguments with initialization $\bu^{(k)}$ for $k\geq 0$ to show that \eqref{eq:generalized_u_bnd_deterministc} holds not only for $\bu^{(1)}$, but for $\bu^{(k+1)}$ with any $k\geq 0$.
    \subsubsection{Deterministic Bounds for Network Factors}
    Now we turn to the estimation error of $\bV^{(k+1)}$ and $\bW^{(k+1)}$ for $k\geq 1$. To achieve this, we require the following SNR condition.
    \begin{cond}[Deterministic SNR condition for network factors in generalized Jisst PCA]\label{cond:SNR2}
        $$
        \sigma_{r_x}(\bD_x^*)\geq \frac{3\sqrt{2}}{2}\|\cE_x\|_{\op},\quad \sigma_{r_y}(\bD_y^*)\geq \frac{3\sqrt{2}}{2}\|\cE_y\|_{\op}.
        $$
    \end{cond}
    Recall our updating rules for $\bV^{(k+1)}$ and $\bW^{(k+1)}$ in Algorithm \ref{single_diag}: $\bV^{(k+1)}$ and $\bW^{(k+1)}$ are the leading $r_x$ and $r_y$ singular vectors of $\cX\times_3\bu^{(k)}$ and $\cY\times_3\bu^{(k)}$, respectively. We can also write
    \begin{equation*}
    \begin{split}
        \cX\times_3\bu^{(k)} = &\langle \bu^*, \bu^{(k)}\rangle \bV^*\bD_x^*\bV^{*\prime} + \cE_x\times_3\bu^{(k)},\\
        \cY\times_3\bu^{(k)} = &\langle \bu^*, \bu^{(k)}\rangle \bW^*\bD_y^*\bW^{*\prime} + \cE_y\times_3\bu^{(k)},
    \end{split}
    \end{equation*}
    with signal matrices $\langle \bu^*, \bu^{(k)}\rangle \bV^*\bD_x^*\bV^{*\prime}$ and $\langle \bu^*, \bu^{(k)}\rangle \bW^*\bD_y^*\bW^{*\prime}$ and noise matrices $\cE_x\times_3\bu^{(k)}$ and $\cE_y\times_3\bu^{(k)}$. As we have shown earlier, the joint factor $\bu^{(k)}$ satisfies $|\sin\theta(\bu^*,\bu^{(k)})|\leq \frac{1}{3}$ for $k\geq 1$, which implies $\langle \bu^*,\bu^{(k)}\rangle\geq \frac{2\sqrt{2}}{3}$. Furthermore, similar to the proof of Theorem \ref{thm:stat_converg_main}, it is not hard to show that $\|\cE_x\times_3\bu^{(k)}\|\leq \|\cE_x\|_{\op}$ and $\|\cE_y\times_3\bu^{(k)}\|\leq \|\cE_y\|_{\op}$. Combining these two results with Condition \ref{cond:SNR2}, we have 
    $\|\cE_x\times_3\bu^{(k)}\|\leq \frac{1}{2}\sigma_{r_x}(\langle \bu^*, \bu^{(k)}\rangle \bV^*\bD_x^*\bV^{*\prime})$, $\|\cE_y\times_3\bu^{(k)}\|\leq \frac{1}{2}\sigma_{r_y}(\langle \bu^*, \bu^{(k)}\rangle \bW^*\bD_y^*\bW^{*\prime})$. Now we can invoke the Davis-Kahan's theorem \citep[see, e.g., Theorem 2.7 in][]{chen2021spectral} to obtain the following: 
    \begin{equation}\label{eq:generalized_VW_bnd_deterministic}
        \begin{split}
            \|\sin\Theta(\bV^*,\bV^{(k+1)})\|\leq \frac{3\sqrt{2}\|\cE_x\|_{\op}}{2\sigma_{r_x}(\bD_x^*)\rangle},\quad \|\sin\Theta(\bW^*,\bW^{(k+1)})\|\leq \frac{3\sqrt{2}\|\cE_y\|_{\op}}{2\sigma_{r_y}(\bD_y^*)\rangle}
        \end{split}
    \end{equation}
    hold for any $k\geq 2$ as long as Conditions \ref{cond:init}-\ref{cond:SNR2} are satisfied.
    \subsubsection{Probabilistic Bounds with Sub-Gaussian Noise}
    Now we would like to show that under Assumption \ref{assump:dJisst_SNR1}, Conditions \ref{cond:init} and \ref{cond:SNR1} are satisfied with high probability; If Assumption \ref{assump:dJisst_SNR2} also holds, then Conditions \ref{cond:init}-\ref{cond:SNR2} are all satisfied. As has been shown in the proof of Theorem \ref{thm:main_subgauss}, with probability at least $1-4\exp\{-N\}$, $\|\cE_x\|_{\op}\leq C\sigma(\sqrt{N+p})$, $\|\cE_y\|_{\op}\leq C\sigma(\sqrt{N+q})$. As long as the constant $C>0$ in Assumption \ref{assump:dJisst_SNR1} is chosen sufficiently large, Condition \ref{cond:SNR1} holds. In addition, Condition \ref{cond:SNR1} also implies $\sqrt{1-\frac{32r_x\|\cE_x\|_{\op}^2}{\|\bD^*_x\|_F^2}}, \sqrt{1-\frac{32r_y\|\cE_y\|_{\op}^2}{\|\bD^*_y\|_F^2}}\geq \sqrt{1-\frac{2}{9}} = \frac{\sqrt{7}}{3}$. To show that Condition \ref{cond:init} holds, we can apply the same arguments as in the proof of Proposition \ref{thm:init}. Note that similar to the vanilla setting with scalar $d_x^*$ and $d_y^*$, we can also write $\bu^{(0)}$ as the top left singular vector of $d_{\lambda}\bu^*\bz' + \bE_{\lambda}$, where $d_{\lambda} = \sqrt{\lambda^2\|\bD^*_x\|_F^2 + (1-\lambda)^2\|\bD_y^*\|_F^2}$, $\bz = \mathrm{Norm}(\left[\lambda  \mathrm{Vec}\left(\boldsymbol{V}\bD_x^*\boldsymbol{V}^{\prime}\right)^\prime, (1-\lambda) \mathrm{Vec}\left(\boldsymbol{W}\bD_y^*\boldsymbol{W}^{\prime}\right)^\prime \right]$, and $\bE_{\lambda} = [\lambda\mathcal{M}_3(\cE_x), (1-\lambda)\mathcal{M}_3(\cE_y)]$. By Assumption \ref{assump:dJisst_SNR1} which is analogous to Assumption \ref{assump:SNR1}, we have 
    $$\|d_{\lambda}\bu^*(\bE_{\lambda}\bz)' + d_{\lambda}\bE_{\lambda}\bz\bu^{*\prime}+ \bE_{\lambda}\bE_{\lambda}' - \bbE(\bE_{\lambda}\bE_{\lambda}')\|\leq\frac{1}{2}d_{\lambda}^2,
     $$ 
    and hence applying the Davis-Kahan's theorem gives us the following bound with probability at least $1-CN^{-c}$:
    \begin{equation*}
    \begin{split}
        \sin\theta(\bu^*,\bu^{(0)})\leq &\frac{C\sigma\left(\sqrt{N} + (N(p^2+q^2))^{\frac{1}{4}}\right)\sqrt{\log N}}{d_{\lambda}}\\
        \leq &\frac{\sqrt{7}}{3}\\
        \leq &\min\left\{\sqrt{1-\frac{32r_x\|\cE_x\|_{\op}^2}{\|\bD_x^*\|_F^2}}, \sqrt{1-\frac{32r_y\|\cE_y\|_{\op}^2}{\|\bD_y^*\|_F^2}}\right\},
    \end{split}
    \end{equation*}
    as long as the constant $C>0$ in Assumption \ref{assump:dJisst_SNR1} is sufficiently large. Therefore, under Assumption \ref{assump:dJisst_SNR1}, with probability at least $1-CN^{-c}$,  the following holds for $k\geq 1$: 
    \begin{equation*}
        \begin{split}
            |\sin\theta(\bu^*,\bu^{(k)})|\leq &\frac{C\sigma\left(\lambda\sqrt{r_x(p+N)}\|\bD_x^*\|_F + (1-\lambda)\sqrt{r_y(q+N)}\|\bD_y^*\|_F\right)}{\lambda\|\bD_x^*\|_F^2 + (1-\lambda)\|\bD_y^*\|_F^2}\\
            \leq &\frac{C\sigma\sqrt{r_x(p+N)}}{\|\bD_x^*\|_F}\vee\frac{C\sigma\sqrt{r_y(q+N)}}{\|\bD_y^*\|_F}.
        \end{split}
    \end{equation*}

    Finally, when Assumption \ref{assump:dJisst_SNR2} also holds, we can apply the spectral norm bounds of $\cE_x$, $\cE_y$ again to show that Condition \ref{cond:SNR2} holds with probability at least $1-4\exp\{-N\}$. Then with probability at least $1-CN^{-c}$, the following holds for $k\geq 1$: $$\|\sin\Theta(\bV^*,\bV^{(k+1)})\|\leq \frac{C\sigma\sqrt{p+N}}{\sigma_{r_x}(\bD_x^*)},\quad \|\sin\Theta(\bW^*,\bW^{(k+1)})\|\leq \frac{C\sigma\sqrt{q+N}}{\sigma_{r_y}(\bD_y^*)}.$$
    
    The proof of Theorem \ref{thm:converg_D} is now complete. In the end of this section, we present the proof of the technical lemma used earlier.
    \begin{proof}[Proof of Lemma \ref{lem:perturb_frobenius_err}]
        Since $\bX_r-\bX^*$ is of rank at most $2r$, we have
        \begin{equation*}
            \|\bX_r-\bX^*\|_F\leq \sqrt{2r}\|\bX_r-\bX^*\|\leq \sqrt{2r}(\|\bX_r-\bX\| + \|\bX^*-\bX\|).
        \end{equation*}
        Meanwhile, since $\bX_r$ is also the best rank-$r$ approximation of $\bX$ in terms of spectral norm error, we can write
        \begin{equation*}
            \|\bX_r-\bX^*\|_F\leq \sqrt{2r}(\|\bX_r-\bX\| + \|\bX^*-\bX\|)\leq 2\sqrt{2r}\|\bX^*-\bX\|=2\sqrt{2r}\|\bE\|.
        \end{equation*}
    \end{proof}

\section{Additional Results for Real Data Analysis}
\label{supp:realdata}

Figure \ref{fig:mean_sc_fc} shows the mean SC and FC from the 1058 subjects. 

\begin{figure}
    \centering
    \includegraphics[width=0.7\textwidth]{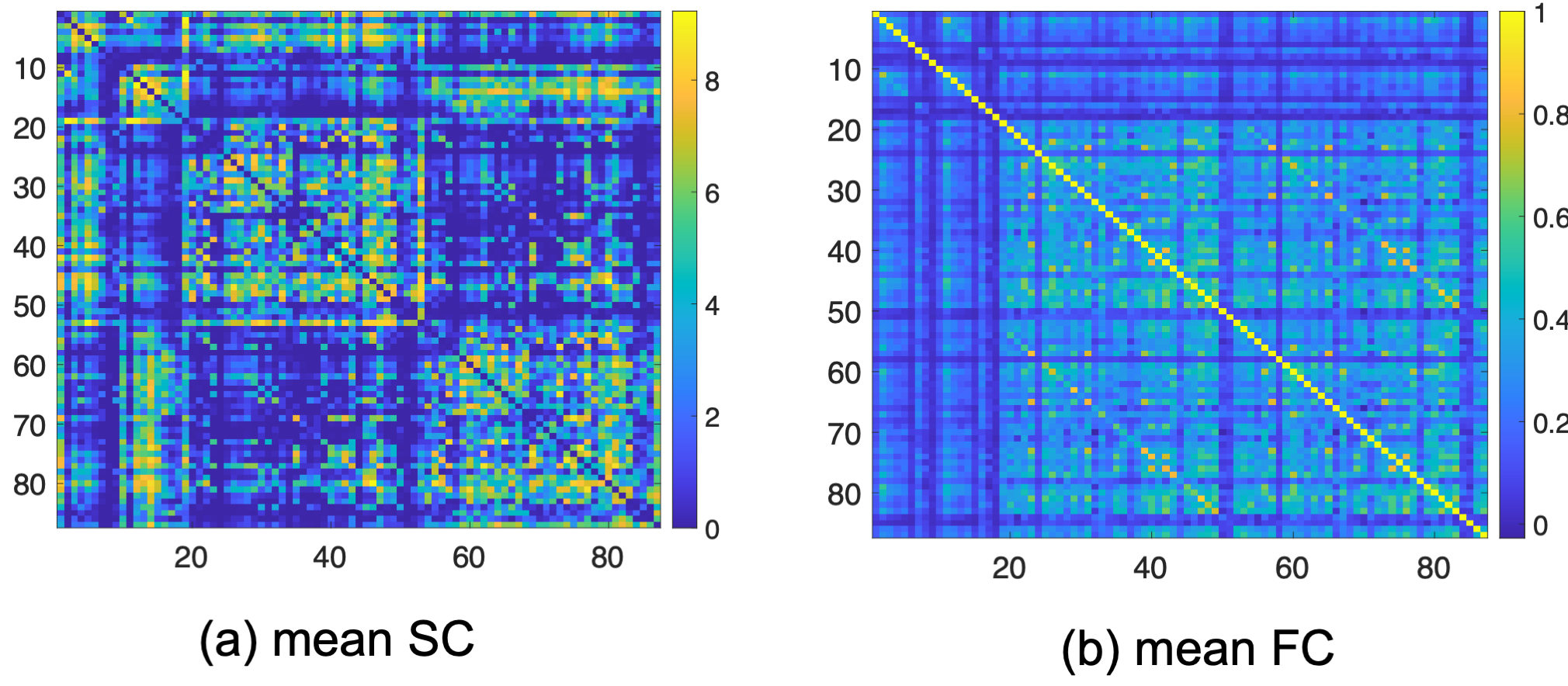}
    \caption{Mean SC (left) and FC (right) from the 1058 HCP subjects.}
    \label{fig:mean_sc_fc}
\end{figure}

In Figure~\ref{fig:gjisstpca_correlation} panels (a) and (b), we correlate \( \boldsymbol{u}^{*}_1 \) and \( \boldsymbol{u}^{*}_2 \) with the 45 cognitive traits. In these plots, traits encoded in the opposite direction (e.g., higher values indicate worse cognitive ability) are colored in pink. We observe that 1) \( \boldsymbol{u}^{*}_1 \) correlates better with behavioral traits than does \( \boldsymbol{u}^{*}_2 \), and 2) most behavioral traits show a decent amount of correlation with the joint factors. Finally, we examine how well we can predict these behavioral traits using both \( \boldsymbol{u}^{*}_1 \) and \( \boldsymbol{u}^{*}_2 \). Panel (c) displays the correlations between the predicted and measured behavioral traits in test datasets, based on a simple linear regression model. These results are computed as the average over 50 runs of 80-20 random splitting of training and testing datasets.


\begin{figure}[!t]
    \centering
    \includegraphics[width = \textwidth]{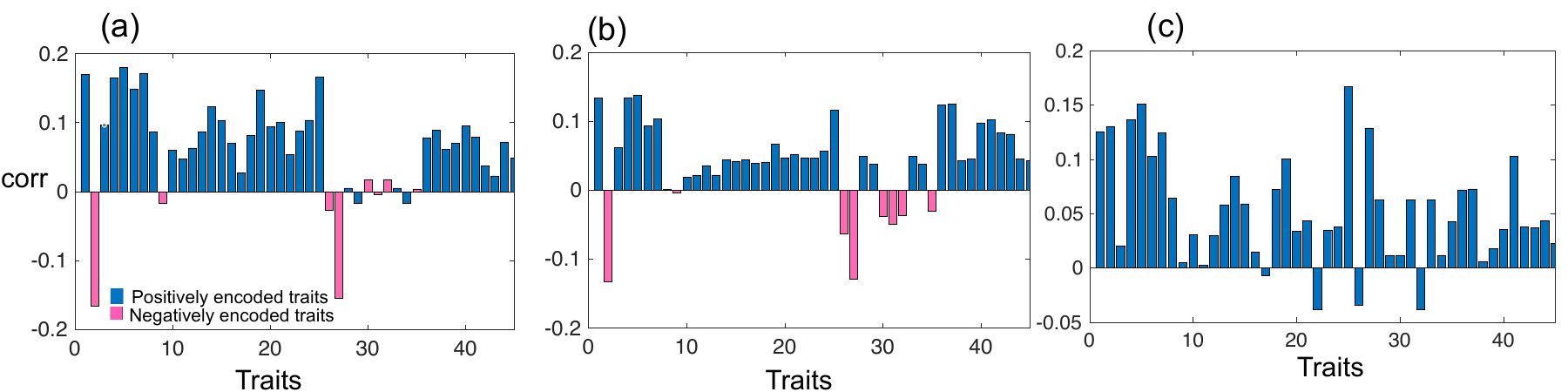}
    \caption{We correlate \( \boldsymbol{u}^{*}_1 \) (in panel a) and \( \boldsymbol{u}^{*}_2 \) (in panel b) with the 45 cognitive traits. Traits encoded in the opposite direction (e.g., higher values indicate worse cognitive ability) are colored in pink.  Panel (c) displays the correlations between the predicted and measured behavioral traits in test datasets, based on a simple linear regression model. These results are computed as the average over 50 runs of 80-20 random splitting of training and testing datasets.}
    \label{fig:gjisstpca_correlation}
\end{figure}

Figure \ref{fig:sc_fc_pred} compares the prediction power of PCA scores with G-JisstPCA scores. We first apply the principal
components analysis (PCA) to vectorized SC and FC matrices separately to extract the first two PC scores. We then treat the two PC scores as predictors and use simple linear
regression to predict the 45 behavior traits. Panel (a) shows results based on the PCA with the FC data, panel (b) shows results based on the PCA with the SC data, and panel (c) shows results based on JisstPCA with both the SC and FC data.

\begin{figure}[!t]
    \centering
    \includegraphics[width = \textwidth]{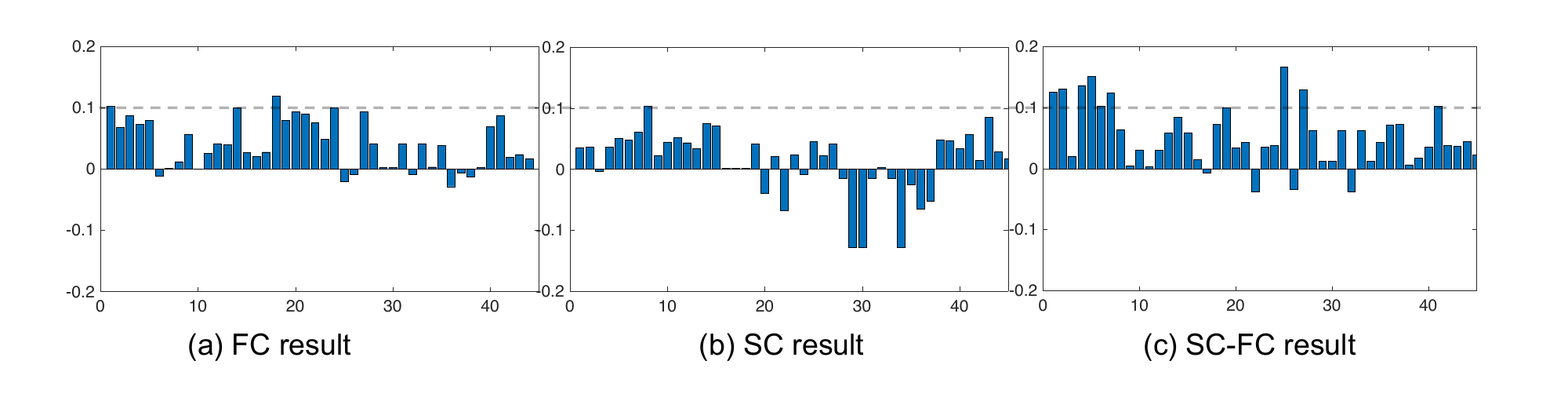}
    \caption{Prediction results using the first two PC scores extracted with different methods. Panel (a) shows results based on the PCA with the FC data, panel (b) shows results based on the PCA with the SC data, and panel (c) shows results based on JisstPCA with both the SC and FC data.}
    \label{fig:sc_fc_pred}
\end{figure}

Figure \ref{fig:hcp_result_set2} shows the adjacency matrices for the network loadings for SC (\{$\boldsymbol{V}^*_1 \boldsymbol{D}_{SC,1}^*\boldsymbol{V}_1^{*\prime}$,$\boldsymbol{V}^*_2 \boldsymbol{D}_{SC,2}^*\boldsymbol{V}_2^{*\prime}$ \}) and FC ($\{\boldsymbol{W}^*_1 \boldsymbol{D}_{FC,1}^*\boldsymbol{W}_1^{*\prime}$, $\boldsymbol{W}^*_2 \boldsymbol{D}_{FC,2}^*\boldsymbol{W}_2^{*\prime}\}$). The first 19 ROIs (rows of the adjacency matrix) are subcortical regions and the next 68 ROIs are cortical regions from the Desikan-Killiany atlas. Figure 5 in the main paper shows circular plots with top 200 connections. 

 From the correlation plots in panel (a) of Figure \ref{fig:gjisstpca_correlation}, it is evident that traits of fluid intelligence assessment, English reading and vocabulary comprehension, and line orientation test have high ($r>0.15$) and positive correlations with $\boldsymbol{u}_1^*$. The corresponding loadings in the circular plots highlight major subcortical to cortical SC pathways, such as those from the putamen to the frontal lobe and insula, as well as from the thalamus to the parietal lobe. Additionally, several significant cross-hemisphere FC pathways are noted, including those from the left parietal lobe to the right parietal lobe. The positive nature of these pathways in the loadings indicates that higher SC and FC connections correlate with enhanced cognitive abilities. Similarly, the second JisstPCA score shows high and positive correlations with cognitive traits. Although some negative connections are observed in the loadings, positive connections are predominant.

\begin{figure}
    \centering
    \includegraphics[width=\textwidth]{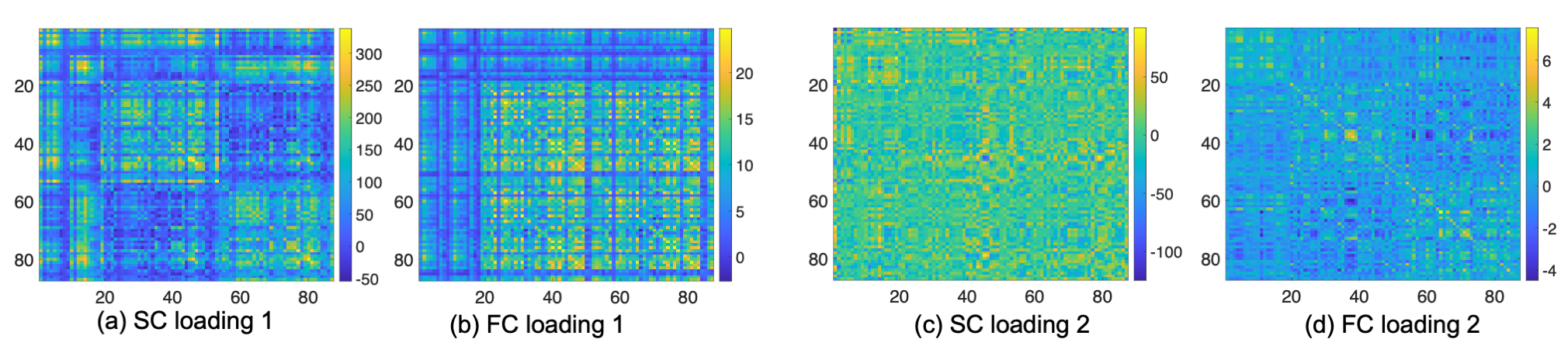}
    \caption{The first two JisstPCA loadings. Panel (a) shows $\boldsymbol{V}_1\boldsymbol{D}_{SC,1}\boldsymbol{V}_1^\prime$, panel (b) shows $\boldsymbol{W}_1\boldsymbol{D}_{FC,1}\boldsymbol{W}_1^\prime$, panel (c) shows $\boldsymbol{V}_2\boldsymbol{D}_{SC,2}\boldsymbol{V}_2^\prime$, panel (b) shows $\boldsymbol{W}_2\boldsymbol{D}_{FC,2}\boldsymbol{W}_2^\prime$  }
    \label{fig:hcp_result_set2}
\end{figure}
\bibliographystyle{apalike}
\bibliography{reference.bib}
\end{document}